\documentclass[11pt]{amsart}
\usepackage{amsfonts,amssymb,amsmath}
\usepackage{epsfig}
%Submitted version
\hoffset -2 cm \voffset -.8 cm
\textwidth 16 cm \textheight 22 cm

\newcommand{\RR}{\mathbb R}

\newcommand\mycom[2]{\genfrac{}{}{0pt}{}{#1}{#2}}
\newtheorem{theorem}{Theorem}

\newtheorem{lemma}{Lemma}
\newtheorem{prop}{Proposition}
\newtheorem{corollary}{Corollary}
\newtheorem{remark}{Remark}
\newtheorem{definition}{Definition}
\newcounter{two}
\setcounter{two}{2}

\usepackage[usenames,dvipsnames]{color}

\begin{document}
\title [$\mathtt M$-curves and totally positive Grassmannians]{Rational degenerations of $\mathtt M$-curves, totally positive Grassmannians {and KP2--solitons}.}
%\subjclass{AMS Subject Classification {37K40, 37K20, 14H50, 14H70}}
\keywords{Total positivity; Grassmannians; KP finite-gap theory; real solitons; M-curves}

\author{Simonetta Abenda}
\address{Dipartimento di Matematica,
Universit\`a degli Studi di Bologna, Italy,
simonetta.abenda@unibo.it}
\author{Petr G. Grinevich}
\address{L.D.Landau Institute for Theoretical Physics,
pr. Ak Semenova 1a, Chernogolovka, 142432, Russia,
{\footnotesize pgg@landau.ac.ru}\\
Lomonosov Moscow State University,
 Faculty of Mechanics and Mathematics, 
Russia, 119991, Moscow, GSP-1, 1 Leninskiye Gory, Main Building,\\
Moscow Institute of Physics and Technology, 
9 Institutskiy per., Dolgoprudny,
Moscow Region, 141700, Russia.}
\thanks{
This work has been partially supported by PRIN ``Teorie geometriche e analitiche dei sistemi Hamiltoniani in dimensioni finite e infinite'', by the project ``EQUATIONS'', by the Russian Foundation for Basic Research, grant 17-01-00366, by the program ``Fundamental problems of nonlinear dynamics'', Presidium of RAS }

\begin{abstract}
We establish a new connection between the theory of totally positive Grassmannians and the 
theory of $\mathtt M$-curves using the finite--gap theory for solitons of the KP equation.
Here and in the following KP equation denotes the Kadomtsev-Petviashvili 2 equation (see (\ref{eq:KP})), which 
is the first flow from the KP hierarchy. We also assume that all KP times are \textbf{real}.
We associate to any point of the real totally positive Grassmannian $Gr^{\mbox{\tiny TP}} (N,M)$ a reducible curve which is
a rational degeneration of an $\mathtt M$--curve of minimal genus $g=N(M-N)$, and we reconstruct the real 
algebraic-geometric data \'a la Krichever for the underlying real bounded multiline KP soliton solutions. 
From this construction it follows that these multiline solitons can be explicitly obtained by degenerating 
regular real finite-gap solutions corresponding to smooth $\mathtt M$-curves. In our approach 
we rule the addition of each new rational component to the spectral curve via an elementary Darboux transformation which corresponds to a section of a specific projection $Gr^{\mbox{\tiny TP}} (r+1,M-N+r+1)\mapsto Gr^{\mbox{\tiny TP}} (r,M-N+r)$.

\end{abstract}

\maketitle

\tableofcontents
\section{Introduction}

Since the seminal works of Schoenberg, Gantmacher and Krein \cite{Sch, GK}, totally positive matrices have appeared in connection with problems from different areas of pure and applied mathematics, including small vibrations of mechanical systems, statistical mechanics, approximation theory, combinatorics, graph theory (for more details see \cite{Lu,Pinkus}). A generalization of total positivity for generic reductive groups and their flag varieties was introduced by Lusztig in \cite{Lu1}. In \cite{Lu2} this construction was extended to the partial flag manifolds, including Grassmannians associated with arbitrary reductive groups. A parametrization of the totally non-negative part of a flag variety was obtained by Marsh and Rietsch in \cite{MR}. The development of criteria to test total positivity is also deeply related to the foundation of the cluster algebra theory of Fomin and Zelevinsky \cite{FZ1,FZ2}. Relevant recent applications include the development of Poisson geometry in cluster algebras\cite{GSV1}, and the computation of scattering amplitudes for on--shell diagrams for the planar limit of $\mathcal N=4$ super Yang Mills theory\cite{AGP1,AGP2}. In recent years, total positivity has proven to be an efficient tool to investigate the asymptotic properties of multi--line solitons for the KP equation \cite{BK,BPPP,BPP,  CK1, CK2,  DMH, KW1, KW2, KW3}
which we study in this paper in the finite--gap setting. 

The topological classification of real algebraic curves is also deeply connected with the theory of integrable systems and statistical models. In particular $\mathtt M$-curves \cite{Har}, {\sl i.e.} real algebraic curves with maximal number of 
components, naturally arise in the theory of integrable systems, such as the real finite-gap theory of the KP equation \cite{DN} the theory of finite-gap at one energy two-dimensional 
Schr\"odinger operators at the energies below the ground state \cite{VN1, VN2}, quantum cluster systems \cite {KG} and in statistical models such as dimer models \cite{KSO}.  Let us point out that in these papers the $\mathtt M$-curves (Harnack's curves) arise as the spectral curves for two-dimensional models with double-periodic boundary 
conditions or their quasiperiodic generalizations, and, generically, they are smooth.
An investigation of the relative positions of the branches of real algebraic curves 
of given degree (and similarly for algebraic surfaces) is the first part of the Hilbert's 16th problem. The term $\mathtt M$-curve was first 
introduced by Petrovsky \cite{Petr} (``$\mathtt M$'' means ``maximal''). Additional information about this topic can be found in the review papers \cite{Nat, Vi}.

In this paper we establish a connection of different nature between the theory of totally positive 
Grassmannians, and rational degenerations of $\mathtt M$ curves,
in the framework of the solitonic limit of real finite gap solutions for the Kadomtsev-Petviashvili 2 (KP) equation
\begin{equation}\label{eq:KP}
\partial_x \left (-4 \partial_t u + 6 u \partial_x u +\partial_{x}^3 u \right) + 3 \partial_y^2 u=0,
\end{equation}
where $\partial_z$ denotes the usual partial derivative with respect to the variable $z$.

We recall that \textbf{a relevant question is:} can all \textbf{real regular} KP multisoliton solutions be obtained by degenerating \textbf{real regular} finite-gap KP solutions. The importance of this problem was pointed out by S.P. Novikov. In this paper we provide a positive answer to the above question for real bounded regular multiline KP solitons, associated to the principal cell $Gr^{\mbox{\tiny TP}} (N,M)$.  

We remark that, even if the same soliton solution can be obtained from different degenerate algebraic--geometric data, Baker--Akhiezer functions are still an important tool of integration in such limiting case \cite{DKN}. For instance, in the case of the open Toda lattice, such approach has been used by Krichever and Vaninsky \cite{KV} to construct a Baker--Akhiezer function on a reducible singular curve which is the limit of the hyperelliptic spectral curve associated to the periodic Toda lattice. 

\smallskip

Multiline KP solitons can be obtained from the Wronskian method \cite{ZS,Mat1,Mat,H} starting from $N$ independent solutions of the heat hierarchy depending on $M$ phases and are naturally identified with a certain finite--dimensional reduction of the Sato Grassmannian \cite{S}. The KP multiline soliton solutions are real bounded and regular for all $(x,y,t)$ if and only if they are parametrized by points in the totally nonnegative part of the Grassmannian, $Gr^{\mbox{\tiny TNN}} (N,M)$ \cite{KW2}. The asymptotic behavior in space--time and the tropicalization for this class of solutions has been thoroughly investigated in \cite{CK1, CK2, KW1,KW2,KW3} using the combinatorial classification of Postnikov \cite{Pos} of $Gr^{\mbox{\tiny TNN}} (N,M)$. According to Krichever scheme of KP finite gap theory \cite{Kr1,Kr2}, this family of solutions also originates from regular quasi--periodic solutions in the solitonic limit.
 
In this paper we show that real bounded regular multiline KP soliton solutions originate from 
\textbf{regular real quasi--periodic finite-gap} solutions in the solitonic limit.

Regular real quasi--periodic solutions are parametrized by degree $g$ non--special divisors on genus $g$ Riemann surfaces which possess an antiholomorphic involution of decomposing type which fixes the maximum number $g+1$ of ovals \cite{DN}\footnote{In the following, with a slight abuse of notation, we call regular $\mathtt M$--curves the Riemann surfaces satisfying Dubrovin and Natanzon's hypotheses.}. In such a case, there exists a homological basis of cycles such that the $\alpha$ cycles correspond to $g$ (finite) ovals and the essential singularity of the wavefunction is in the remaining (infinite) oval. Finally, for a fixed spectral curve, such solutions are parametrized by degree $g$ non--special divisors of the Baker--Akhiezer function  with exactly one pole in each finite oval. 
In the solitonic limit a certain number of cycles shrinks and the non--singular curve degenerates to a reducible curve of rational type.

Let us fix the $M$ phases, $\kappa_1< \cdots <\kappa_M$. Then generic regular bounded multi--line KP solitons are parametrized by points in $Gr^{\mbox{\tiny TP}} (N,M)\subset Gr^{\mbox{\tiny TNN}} (N,M)$, {\sl i.e.} points in the Grassmannian with all Pl\"ucker coordinates strictly positive \cite{Pos}. These  solutions then depend on $N(M-N)$ parameters (the dimension of the Grassmannian).
We show that for any compact subset in $Gr^{\mbox{\tiny TP}} (N,M)$ we may fix a spectral curve, which is a rational degeneration  
of regular $\mathtt M$--curves of genus $N(M-N)$, and the solutions are parametrized by $N(M-N)$ point divisors on it.

The starting point of the construction is the following observation: for any soliton data in $Gr^{\mbox{\tiny TP}} (N,M)$, the Sato dressed wave function is defined on $\mathbb{CP}^1$, which we denote $\Gamma_0$, with $M+1$ marked points (the phases $\kappa_1,\dots,\kappa_M$ and the essential singularity $P_0$).  On $\Gamma_0$, the normalized Sato dressed wave function is a Baker--Akhiezer function for the given soliton data with a real  $N$--point divisor. To obtain a degree $N(M-N)$ divisor, in our construction we attach additional components to $\Gamma_0$ in such a way that the resulting reducible curve possesses the $N(M-N)+1$ real components (ovals), and we extend the Baker--Akhiezer function to these new components so that the degree of the divisor matches the number of finite ovals and the reality constraints of Dubrovin--Natanzon theorem. To make effective the relation between Sato's and finite--gap approaches for this family of KP soliton solutions in $Gr^{\mbox{\tiny TP}} (N,M)$, we use total positivity in classical sense. 
Our construction gives a new relation between the theory of integrable systems, the theory of algebraic curves and total positivity.
In \cite{AG}, we modify our approach and extend the present construction to all positroid cells in the stratification of $Gr^{\mbox{\tiny TNN}}(N,M)$ characterized in \cite{Pos}.
  
A relevant open question is the characterization of the regular $\mathtt M$--curves whose rational limit has been constructed here and in \cite{AG}.
Another important open question is also the thorough investigation of the following problem: start from a given reducible curve and a divisor compatible with the reality and regularity conditions in \cite{DN} and identify the soliton data in $Gr^{\mbox{\tiny TNN}} (N,M)$. In \cite{A}, one of us has identified the $(M-1)$--dimensional variety of soliton data in $Gr^{\mbox{\tiny TP}} (N,M)$ associated to a specific rational degeneration of a hyperelliptic curve and proved that the vacuum KP divisor for such soliton data coincides with the Toda divisor found in \cite{KV}. 

In a forthcoming paper we also plan to investigate explicitly the relations between our construction and the characterization of the asymptotic behaviour of the multi--line solitons in \cite{CK2,KW3}, {\sl i.e.} to characterize the asymptotics of the zero divisor $\mathcal{D} (x,y,t,\vec 0)$  on $\Gamma(\xi)$.
Finally it would be relevant to investigate possible connections between KP soliton theory and statistical models such as dimers
in the disk \cite{Lam} or with field theoretical models in the framework of the planar limit of ${\mathcal N}=4$--SYM \cite{AGP1,AGP2} through their common combinatorial characterization in $Gr^{\mbox{\tiny TNN}} (N,M)$.  

\subsection{Main results and plan of the paper}
The main result of this paper is that, for soliton solutions associated with the principal cell  $Gr^{\mbox{\tiny TP}} (N,M)$, we can always fix algebraic--geometric data on {\em reducible} curves, generating these solutions, with additional requirements that these curves are rational degenerations of some family of regular $\mathtt M$--curves and that the divisor points satisfy the reality and regularity conditions of \cite{DN}.

The first nontrivial case we study is $Gr^{\mbox{\tiny TP}} (1,M)$ (see also \cite{A}). In this case the curve $\Gamma$ is obtained by attaching a second copy of $\mathbb{CP}^1$ to $\Gamma_0$ at the phases $\kappa_1,\dots,\kappa_M$, and it is the rational degeneration of a hyperelliptic $\mathtt M$-curve of genus $M-1$. The divisor has degree $M-1$ and exactly one point in each real finite oval. We do the construction of the KP wavefunction on $\Gamma$ in two steps. We first extend the Sato vacuum wavefunction to a degree $M-1$ meromorphic function, $\Psi(P,\vec t)$\footnote{Here and in the following, unless differently specified, $\vec t$ means the whole sequence of times $\vec t= (x,y,t,t_4, t_5, \dots)$ associated to the KP hierarchy.}, on $\Gamma\backslash \{ P_0\}$ with exactly one simple pole in each finite oval (see (\ref{eq:vacHyp})) and such that at the Darboux marked point $Q_1$ belonging to the infinite oval,  $\Psi(Q_1, \vec t)$ coincides with the generator of the dressing transformation for the KP solution (see (\ref{eq:DarHyp})). Then, after normalization, the dressing of $\Psi$, ${\tilde \Psi} (P,\vec t)$  has a degree $M-1$ pole divisor with exactly one simple pole in each finite oval and its restriction to $\Gamma_0$ is the normalized Sato dressed wavefunction (see Lemma \ref{lemma:hyper}).

For soliton data in $Gr^{\mbox{\tiny TP}} (N,M)$, we use a similar scheme: we first 
glue $(N+1)$ copies of $\mathbb{CP}^1$, $\Gamma= \Gamma_0 \sqcup \Gamma_1 \sqcup \cdots \sqcup \Gamma_N$ in pairs at some real ordered points creating $N(M-N)+1$ ovals. We then extend Sato vacuum (zero potential) wavefunction on $\Gamma_0 \backslash \{ P_0\}$ to a meromorphic vacuum wavefunction on $\Gamma\backslash \{ P_0 \}$ with a degree $N(M-N)$ divisor on $\Gamma\backslash \Gamma_0$ such that exactly one divisor point belongs to each finite oval and, at each marked Darboux point $Q_r$, $r\in [N]$, we require that $\Psi$ coincides for all time with one of the $N$ generators of the dressing transformation associated to the given soliton data. Then, after normalization, the dressing of $\Psi$, ${\tilde \Psi} (P,\vec t)$  has a degree $N(M-N)$ pole divisor with exactly one simple pole in each finite oval and its restriction to $\Gamma_0$ is the normalized Sato dressed wavefunction (see Theorem \ref{theo:divisor}).
The advantage of this multistep procedure is that it is easier to control indirectly the position of the effective KP divisor imposing conditions on the vacuum wavefunction than to directly extend the Baker Akhiezer function to $\Gamma\backslash \{ P_0\}$.

In the general case $Gr^{\mbox{\tiny TP}} (N,M)$, the main technical problem is to coherently glue more copies of $\mathbb{CP}^1$ in pairs creating the necessary number of ovals, in such a way to control the position of the effective divisor. 

To solve this problem we use the following two-step  construction:
\begin{enumerate}
\item We provide an algebraic model of the vacuum wave function using total positivity to control the values of the vacuum 
wavefunction at all marked points. 
\item We introduce a family of curves depending on a parameter $\xi\gg1$, and we show that, for any point in 
$Gr^{\mbox{\tiny TP}}(N,M)$ there exists $\xi_0$ such that for any $\xi>\xi_0$ the analytic extension of the Sato vacuum wave function 
to the curve $\Gamma(\xi)$ coincides with the algebraic model.
\end{enumerate}

The topological type of the resulting reducible rational curve $\Gamma=\Gamma(\xi)$ is independent of $\xi\gg1$ and it is the same for all points in $Gr^{\mbox{\tiny TP}}(N,M)$. In particular the real part of $\Gamma$, $\Gamma_{\mathbb R}$ has $N(M-N)+1$ real ovals, and  $\Gamma$ is the rational degeneration of a regular $\mathtt M$-curve of minimal genus $g=N(M-N)$. In section \ref{sec:example} we construct explicitly the reducible nodal curve associated to generic points in $Gr^{\mbox{\tiny TP}} (2,4)$ and the underlying regular $\mathtt M$--curve of genus 4, which is a 3-sheeted covering of the Riemann sphere.

More precisely, the leading order behavior of the \textbf{vacuum} wavefunction $\Psi(P,\vec t)$ at all marked points of $\Gamma(\xi)$, for $\xi\gg1$, is ruled explicitly by an algebraic recursion (see Theorems \ref{lemma:vectors} and \ref{theo:main0}) associated to a conveniently normalized totally positive band matrix $\hat A$ representing the soliton datum in $Gr^{\mbox{\tiny TP}}(N,M)$. The exact behavior of $\Psi(P,\vec t)$ at the marked points on $\Gamma (\xi)$ is ruled via an upper triangular matrix $\hat A(\xi)$ explicitly computed in the proof of Theorem \ref{theo:main0}.  

Thanks to the recursive relations in Theorem \ref{lemma:vectors}, we control the sign of $\Psi$ at the double points and, therefore, the position of the vacuum divisor in the ovals of $\Gamma(\xi)$. In section \ref{sec:example} we explicitly compute such vacuum divisor for the algebraic curve associated to generic soliton data in $Gr^{\mbox{\tiny TP}} (2,4)$.
The characterization of the vacuum wave--function is contained in Theorem \ref{theo:main} where we prove that $\Psi(P, \vec t)$
\begin{enumerate}
\item coincides with Sato's vacuum wavefunction on $\Gamma_0$ ;
\item is meromorphic of degree $N(M-N)$ on $\Gamma\backslash \{P_0\}$;
\item possesses $M-N$ simple poles on each $\Gamma_r$, $r\in [N]$;
\item\label{it:div} possesses exactly one simple pole in each finite oval of $\Gamma$;
\item\label{it:dar} coincides with an explicit basis of heat hierarchy solution $f^{(r)}_{\xi} (	\vec t)$, which generate the Darboux transformation, at the marked Darboux point $Q_r\in \Gamma_r$, for any $r\in [N]$ and for all $\vec t$.
\end{enumerate}
Properties  (\ref{it:div}) and (\ref{it:dar}) are sufficient to guarantee the complete control of the position of the effective divisor of the normalized KP wavefunction $\tilde \Psi (P, \vec t)= \tilde \Psi_{\xi} (P, \vec t)$, which is obtained applying the dressing (Darboux) transformation to the vacuum wavefunction:
the effective divisor $\mathcal{D}=\mathcal{D} (\xi)$ satisfies both the conditions imposed by Dubrovin--Natanzon theorem and the constraints imposed by Sato dressing (see Theorem \ref{theo:divisor}). More precisely, we prove that $\tilde \Psi (P,\vec t)$ has the following properties:
\begin{enumerate}
\item its restriction to $\Gamma_0$ coincides with Sato's KP dressed wavefunction;
\item it is meromorphic of degree $N(M-N)$ on $\Gamma\backslash \{P_0\}$;
\item it possesses $N$ poles on $\Gamma_0$;
\item it possesses $M-N-1$ simple poles on each $\Gamma_r$, $r\in [N]$;
\item it possesses exactly one simple pole in each finite oval of $\Gamma$.
\end{enumerate}
We also provide explicit estimates for the position of the divisor $\mathcal D$ in Theorem \ref{theo:t0} in the given local coordinates. We stress that such local coordinates are associated to a totally positive basis in Fomin--Zelevinsky sense \cite{FZ1} and, in this sense, we establish a natural correspondence between points in $Gr^{\mbox{\tiny TP}} (N,M)$ and the algebraic--geometric data associated to the corresponding soliton solutions. 

%In \cite{AG}, we pay a price to extend the above construction to any given soliton datum in $Gr^{\mbox{\tiny TNN}} (N,M)$, since we are able to control the sign of the vacuum wavefunction only in a neighborhood of the initial condition $\vec t_0$ where only a finite number of times changes of value.

\smallskip

{\bf Plan of the paper:}
\begin{itemize}
\item In section \ref{sec:solitons}, we recall some known facts about regular finite gap and multi--soliton solutions of the KP equation.
\item In section \ref{sec:3}, we first associate the rational degeneration of a genus $(M-1)$ hyperelliptic $\mathtt M$-- curve and construct the effective divisor for soliton data in $Gr^{\mbox{\tiny TP}} (1,M)$. Then we present the main ideas of the algebraic--geometric construction for soliton data in $Gr^{\mbox{\tiny TP}} (N,M)$.
\item Section \ref{sec:mainres} contains the principal statements. In Theorem \ref{theo:divisor}, we extend the Baker--Akhiezer function for generic soliton data in $Gr^{\mbox{\tiny TP}} (N,M)$ to the curve $\Gamma(\xi)$. The proof follows immediately from Theorem \ref{theo:main}, in which we extend Sato's vacuum wavefunction to $\Gamma(\xi)$ so that it takes the desired values at all marked points (double points and Darboux points).  
\item The proof of Theorem \ref{theo:main} is carried out in detail in Section \ref{sec:proof}. The first part of the proof is algebraic and fixes the leading order behavior of the vacuum wavefunction at the double points and at the Darboux points (Lemma \ref{lemma:PAL} and Theorem \ref{lemma:vectors}). The matrix associated to such leading order behavior is in band form and we characterize its properties in Appendix~\ref{sec:totpos}. The second part of the proof (Theorem \ref{theo:main0}) is analytic and contains the explicit construction of the vacuum wavefunction. Some Lemmas necessary to its proof are in Appendix \ref{sec:lemmas}.
\item In section \ref{sec:divest} we characterize the pole divisor $\mathcal{D}$ and the zero divisor $\mathcal{D} (\vec t)$ of the KP--eigenfunction $\tilde \Psi (P, \vec t)$.
\item In section \ref{sec:example} we apply our construction to soliton data in $Gr^{\mbox{\tiny TP}} (2,4)$. In particular we explicitly  construct the curve $\Gamma(\xi)$ and show that it is the rational degeneration of an $\mathtt M$--curve of genus 4. We also construct the vacuum divisor and check the algebraic identities for the vacuum wavefunction in such case.
\end{itemize} 

\smallskip

{\bf Notation:} We use the following notations throughout the paper:
\begin{enumerate}
\item $N$ and $M$ are positive integers such that $N<M$;
\item  For $s\in {\mathbb N}$ let $[s] =\{ 1,2,\dots, s\}$; if $s,j \in {\mathbb N}$, $s<j$, then
$[s,j] =\{ s, s+1, s+2,\dots, j-1,j\}$;
\item For a given matrix $A$ we denote by  $A^{[i_1,\ldots,i_p]}_{[j_1,\ldots,j_q]}$ the 
$p\times q$ submatrix of $A$ formed by the elements $A^{i_m}_{j_l}$, $m\in [p]$,  
$l\in[q]$;
\item  If $p=q$, $\Delta^{[i_1,\ldots,i_p]}_{[j_1,\ldots,j_p]}(A)$ denotes the determinant of 
the submatrix  $A^{[i_1,\ldots,i_p]}_{[j_1,\ldots,j_p]}$.
\item For a given matrix $A$, $\Delta_{[j_1,\ldots,j_n]}(A)$ denotes the determinant of the $n\times n$ matrix, combined from the last $n$ rows of the columns $j_1$,\ldots, $j_n$;
\item  ${\vec t} = (t_1,t_2,t_3,\dots)$, where
$t_1=x$, $t_2=y$, $t_3=t$;
\item $\theta(\zeta,\vec t)= \sum\limits_{n=1}^{\infty} \zeta^n t_n,$
\item We denote the real phases 
$\kappa_1< \kappa_2 < \cdots < \kappa_M$ and
$\theta_j \equiv \theta (\kappa_j, \vec t)$.
\end{enumerate}

\medskip

\section{Multi--soliton KP solutions }\label{sec:solitons}

The KP equation (\ref{eq:KP}) was originally introduced by Kadomtsev and Petviashvili \cite{KP} to study the stability of the Korteweg de Vries equation under weak transverse perturbation in the $y$ direction. It has remarkable properties coming from the fact that it is the first non trivial flow \cite{ZS} of the so--called KP Hierarchy (see the monographs
\cite{D,DKN,H,MJD,NMPZ}). The family of regular real bounded KP multi--line soliton solutions may be characterized with different approaches: the Wronskian method, a special reduction of the Sato Grassmannian and in the solitonic limit of real finite--gap theory.
Below we briefly recall how these solutions may be obtained via these different approaches.

\subsection{The heat hierarchy and the dressing transformation}

Let $A =( A^i_j )$, be an $N\times M$ real matrix and fix $\kappa_1<\cdots<\kappa_M$.
{In the following $\vec t =(x,y,t,t_4,t_5,\dots)$ indicates an infinite number of real times unless specified differently.}
Following \cite{Mat1}, let  us consider $N$ linear independent solutions
\begin{equation}\label{eq:heatsol}
f^{(i)}(\vec t) = \sum_{j=1}^M A^i_j e^{\theta_j}, \quad i\in [N],\quad\quad
\end{equation}
to the heat hierarchy\footnote{We remark that the class of solutions to (\ref{eq:heat}) that we consider in this paper is not the general one.}
\begin{equation}\label{eq:heat}
\partial_y f =\partial_x^2 f, \quad\quad
\partial_{t_l} f = \partial_x^l f,\quad l=2,3,\dots,
\end{equation}
and define their Wronskian 
\begin{equation}
\tau (\vec t) = Wr (f^{(1)},\dots, f^{(N)}) \equiv \sum\limits_{I} \Delta_I (A)\prod_{\mycom{i_1<i_2}{ i_1,i_2 \in I}} (\kappa_{i_2}-\kappa_{i_1} ) e^{ \sum\limits_{i\in I} \theta_i }
\end{equation}
where the sum is other all $N$--element ordered subsets $I$ in $[M]$, {\it i.e.} $I=\{ 1\le i_1<i_2 < \cdots < i_N < M\}$ and $\Delta_I (A)$ are the maximal minors of the matrix $A$, {\it i.e.} the Pl\"ucker coordinates for the corresponding point in the finite dimensional Grassmannian $Gr (N,M)$.

The multi--line soliton solution to the KP equation (\ref{eq:KP}) is defined by the following formula:
\begin{equation}\label{eq:KPsol}
u( {\vec t} ) = 2\partial_{x}^2 \log(\tau ( {\vec t})).
\end{equation}
For the full KP hierarchy it is rather easy to check that the condition $\Delta_I (A) \ge 0$ for all $I$ is necessary and
sufficient to have (\ref{eq:KPsol}) regular and real bounded for all times. A very nontrivial result of \cite{KW2} states
that this condition is necessary and sufficient also in the case of the first flow from the KP hierarchy.
In such case, the equivalence class of $A$ , $[A]$ is a point in the totally non--negative Grassmannian \cite{Pos}
\[
Gr^{\mbox{\tiny TNN}} (N,M) = GL_N^+ \backslash Mat^{\mbox{\tiny TNN}}_{N,M}, 
\]
where $Mat^{\mbox{\tiny TNN}}_{N,M}$ is the set of real $N\times M$ matrices of maximal rank $N$ with nonnegative maximal minors $\Delta_I (A)$ and $GL_N^+$ is the group of $N\times N$ matrices with positive determinants.

Since left multiplication by $N\times N$ matrices with positive determinants
preserves the KP multisoliton solution $u({\vec t})$ in (\ref{eq:KPsol}),
there is a natural bijection between KP {regular real bounded} multi--line solitons (\ref{eq:KPsol})  and points in $Gr^{\mbox{\tiny TNN}}(N,M)$.

\medskip

Let us recall the construction of the wave function for the  multi--line soliton solutions.  According to Sato theory \cite{S} all KP soliton solutions may be obtained from the dressing (inverse gauge) transformation of the vacuum eigenfunction $\displaystyle \Psi^{(0)} (\zeta, \vec t) =\exp ( \theta(\zeta, {\vec t}))$, which solves
\[
\partial_x \Psi^{(0)} (\zeta, \vec t)=\zeta \Psi^{(0)} (\zeta, \vec t), \quad\quad
\partial_{t_l}\Psi^{(0)} (\zeta, \vec t) = \zeta^l \Psi^{(0)} (\zeta, \vec t),\quad l\ge 2,
\]
via the dressing ({\it i.e. } gauge) operator
\[
W ( {\vec t})= 1-\sum_{j=1}^{\infty} \chi_j({\vec t})\partial_x^{-N} ,
\]
under the condition that $W$ satisfies Sato equations
\[
\partial_{t_n} W = B_n W - W \partial_x^n, \quad\quad n\ge 1,
\]
where $B_n = (W \partial_x^n W^{-1} )_+$ is the differential part of the operator $W \partial_x^n W^{-1}$. Then
\[
L= W \partial_x W^{-1} = \partial_x + \frac{u(\vec t)}{2}\partial_x^{-1} +\cdots,  \quad\quad
u(\vec t) = 2\partial_x \chi_1 (\vec t),\]
and
\[
\hat\Psi^{(0)} (\zeta; \vec t)= W\Psi^{(0)} (\zeta; \vec t)
\]
are respectively the KP-Lax operator, the KP--potential (KP solution) and the KP-eigenfunction, {\sl i.e.}
\[
L \hat\Psi^{(0)} (\zeta; \vec t) =\zeta \hat\Psi^{(0)} (\zeta; \vec t), \quad\quad
\partial_{t_l} \hat\Psi^{(0)} (\zeta; \vec t)= B_l \hat\Psi^{(0)} (\zeta; \vec t),\quad l\ge 2,
\]
where $B_l = (W \partial_x^l W^{-1} )_+ =(L^l)_+ $.

\medskip

The dressing transformation associated to the line solitons 
(\ref{eq:KPsol}) corresponds to the following choice of the dressing operator
\[
W = 1 -w_1({\vec t})\partial_x^{-1} -\cdots - w_N({\vec t})\partial_x^{-N},
\]
where $w_1({\vec t}),\dots,w_N({\vec t})$ are uniquely defined as
solutions to the following linear system of
equations
\begin{equation}\label{eq:SL}
\partial_x^N f^{(i)} = w_1 \partial_x^{N-1} f^{(i)}+\cdots + w_N f^{(i)}, \quad i\in [N],
\end{equation}
and, in such case, $w_1 ({\vec t})= \partial_x \tau/\tau$ and $u({\vec t})=2\partial_x w_1({\vec t})=2\partial_x^2
\log(\tau)$.
Moreover
\begin{equation}\label{eq:D}
D^{(N)}\equiv W \partial_x^N =\partial_x^N -\partial_x^{N-1} w_1 (\vec t)-\cdots - w_N(\vec t).
\end{equation}

We observe that $w_1,\dots,w_N$ is the solution to the linear system (\ref{eq:SL}) if and only if
\begin{equation}\label{SLagain}
D^{(N)} f^{(i)}\equiv W \partial_x^N f^{(i)} =0, \quad \quad i\in [N].
\end{equation}
Moreover, if the above identity holds, then
\[
\partial_{t_l} (D^{(N)} f^{(i)}) =0, \quad \forall l\in \mathbb N,
\]
that is, by construction, the $N$-th order Darboux  transformation is associated with the $N$  eigenfunctions $f^{(1)}(\vec t),\dots,f^{(N)}(\vec t)$, of the KP Lax Pair with zero potential for the infinite multiplicity eigenvalue.

The KP-eigenfunction associated to this class of solutions is
\[
\hat\Psi^{(0)} (\zeta; \vec t) = W \Psi^{(0)} (\zeta; \vec t) = \left( 1 - w_1({\vec t})\zeta^{-1} - \cdots
- w_N({\vec t})\zeta^{-N} \right) e^{\theta (\zeta, {\vec t})},
\]
or, equivalently,
\begin{equation}\label{eq:Satowf}
D^{(N)}\Psi^{(0)} (\zeta; \vec t) \equiv  W\partial_x^N \Psi^{(0)} (\zeta; \vec t)
 = \left(\zeta^N -\zeta^{N-1} w_1 (\vec t)-\cdots - w_N(\vec t)\right)
\Psi^{(0)} (\zeta; \vec t) = \zeta^N \hat\Psi^{(0)} (\zeta; \vec t).
\end{equation}

\medskip

\subsection{Real finite--gap KP solutions and $\mathtt M$--curves}

The general method to construct periodic and quasi--periodic solutions to the KP equation is due to Krichever \cite{Kr1,Kr2}: let $\Gamma$ be a
smooth algebraic curve of genus $g$ with a marked point $P_0$ and let $\zeta^{-1}$ be a local parameter in $\Gamma$ in a neighborhood of $P_0$ such that $\zeta^{-1} (P_0)=0$. The triple $(\Gamma, P_0,\zeta^{-1})$ defines a family of exact solutions to (\ref{eq:KP}) parametrized by degree $g$ non-special divisors $\mathcal D$ defined on $\Gamma \backslash \{ P_0 \}$. 

The finite gap solutions of (\ref{eq:KP}) are constructed starting from the commutation representation\footnote{The representation of KP as commutation of operators is also known in literature as Zakharov--Shabat equation or zero--curvature condition.} \cite{ZS}
\begin{equation}\label{eq:ZS}
[ -\partial_y + B_2, -\partial_t +B_3] =0,
\end{equation}
where 
\[
B_2 \equiv (L^2)_+ = \partial_x^2 + u,\quad\quad B_3 = (L^3)_+ = \partial_x^3 +\frac{3}{4} (u\partial_x +\partial_x u) + \tilde u,
\]
and $\partial_x\tilde u =\frac{3}{4} \partial_y u$.
Then, the Baker-Akhiezer function $\tilde \Psi (P, \vec t)$ meromorphic on $\Gamma\backslash \{ P_0\}$, with poles at the points of the divisor $\mathcal D$ and essential singularity at $P_0$ of the form
\[
{\tilde \Psi} (\zeta, \vec t) = e^{ \zeta x +\zeta^2 y +\zeta^3 t +\cdots} \left( 1 - \chi_1({\vec t})\zeta^{-1} - \cdots
-\chi_N({\vec t})\zeta^{-N}  - \cdots\right) 
\]
is an eigenfunction of the following linear differential operators
\[
\partial_y {\tilde \Psi}  = B_2 {\tilde \Psi}, \quad\quad \partial_t {\tilde \Psi}  = B_3 {\tilde \Psi},
\]
and in such case, imposing compatibility condition (\ref{eq:ZS}), $u(\vec t) = 2 \partial_x \chi_1 (\vec t)$ satisfies the KP equation.

The divisor of poles $\mathcal D$ does not depend on the times $\vec t$. In contrast to it, the divisor of zeroes 
$\mathcal D(\vec t)$ depends on all times. The Abel transform of $\mathcal D(\vec t)$ is a linear function of times 
$\vec t$, therefore such transform linearizes the KP hierarchy. $\mathcal D$ is an effective divisor, therefore 
$\Psi(\zeta,\vec 0)\equiv 1$, and at the point $\vec t =\vec 0$ the divisor of zeroes coincides with the divisor of poles, 
$\mathcal D(\vec 0) = \mathcal D$. The last identity justifies the use of the same letter for both divisors.

After fixing a canonical basis of cycles $a_1,\dots,a_g,b_1,\dots,b_g$ and a basis of normalized holomorphic differentials $\omega_1,\dots,\omega_g$ on $\Gamma$, that is
$\oint_{a_j} \omega_k =2\pi i \delta_{jk}$, $\oint_{b_j} \omega_k = B_{kj}$, $j,k \in [g]$,
the KP solution takes the form
\begin{equation}\label{eq:fingap}
u(x,y,t) = 2\partial_x^2 \log \theta (xU^{(1)}+yU^{(2)}+t U^{(3)} + z_0)+c_1,\
\end{equation}
where $\theta$ is the Riemann theta function and $U^{(k)}$, $k\in [3]$ are vectors of the $b$--periods of the following normalized meromorphic differentials, holomorphic on $\Gamma\backslash \{ P_0\}$ and with principal parts $\hat \omega^{(k)} = d (\zeta^k ) +\dots$, $k\in [3]$, at $P_0$ (see \cite{Kr1,DN}).

The necessary and sufficient conditions for the smoothness and realness of the solution (\ref{eq:fingap}) associated with smooth curve $\Gamma$ of genus 
$g$ have been proven by Dubrovin and Natanzon (see \cite{DN} and references therein): $\Gamma$ must be an $\mathtt M$--curve, that is it possesses an 
antiholomorphic involution\footnote{Here $\bar{\cdot}$ denotes complex conjugation.} 
\[
\sigma : \Gamma \to \Gamma, \quad \sigma^2 =1 , \quad \sigma(P_0)=P_0, \quad \sigma^* (\zeta) = \bar \zeta,
\]
such that the set of fixed points of ${\sigma}$ consists of $g+1$ ovals (the maximum number of ovals \cite{GH}), $\Omega_0,\Omega_1,\dots,\Omega_g$. 
These ovals are called ``fixed'' or ``real''. The set of real ovals divides  $\Gamma$ into two connected components. 
Each of these components is homeomorphic to a sphere with $g+1$ holes. In Figure~\ref{fig:regM_g2}[left]  we show an example.

\begin{figure}[!tbp]
  \centering
  {\includegraphics[scale=0.15,angle=0,width=0.44\textwidth]{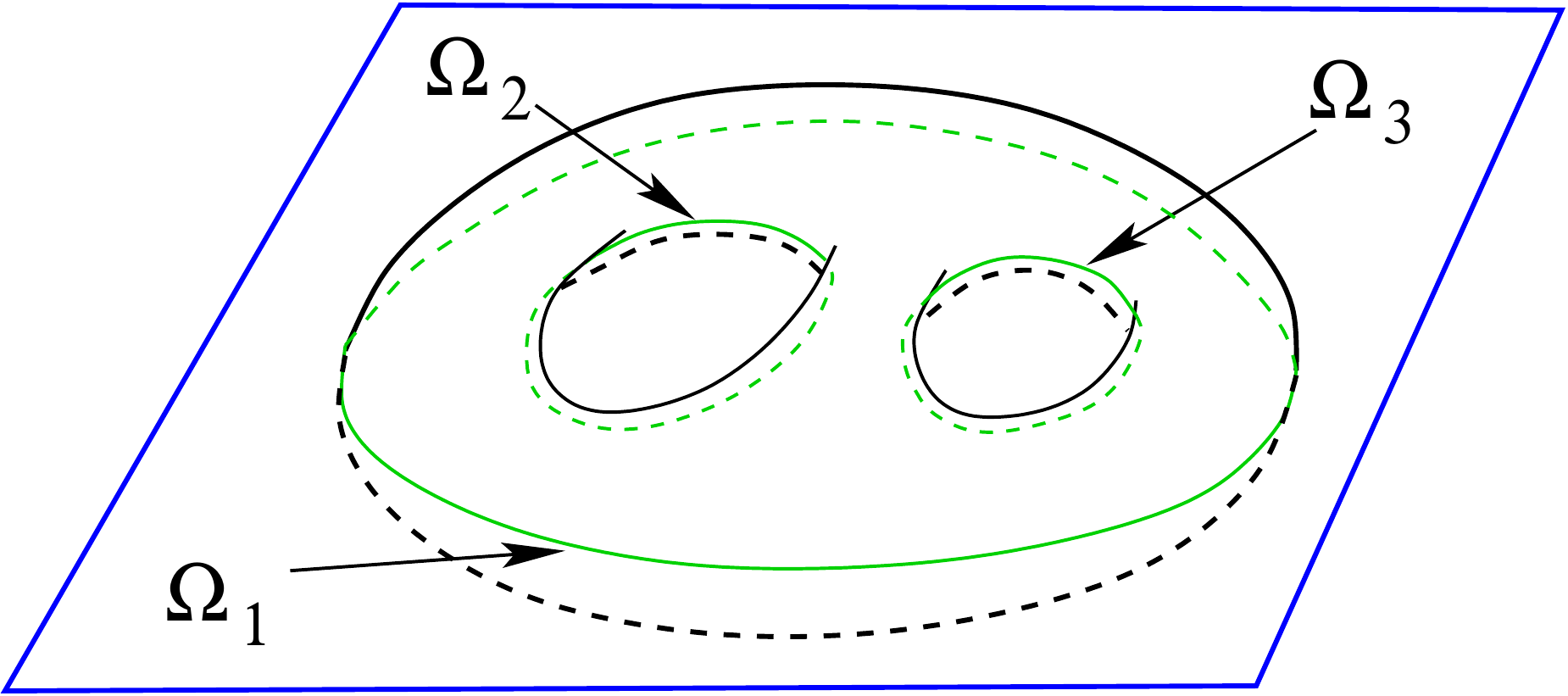}}
  \hfill
  {\includegraphics[scale=0.15,angle=0,width=0.44\textwidth]{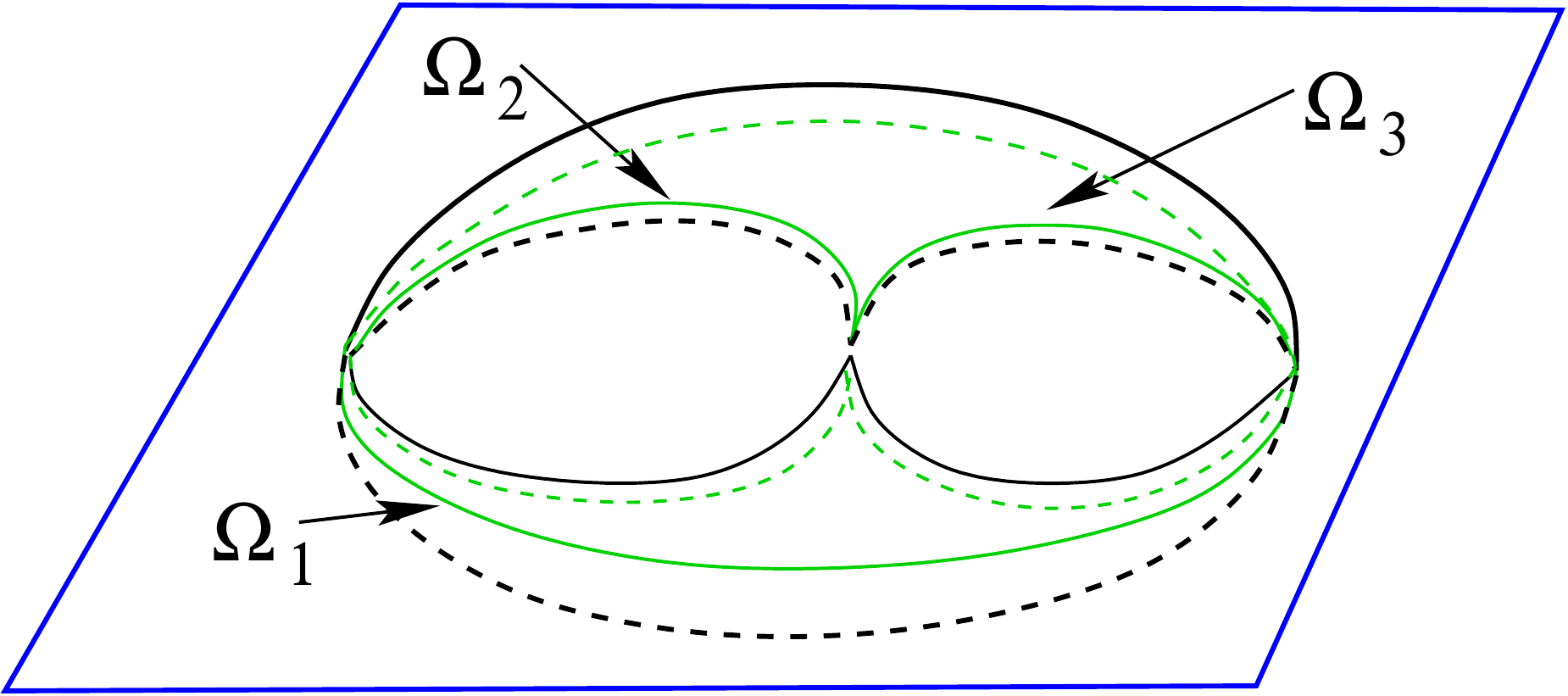}}
\caption{\small{\sl Left: a regular $\mathtt M$-curve,  $g=2$, 3 real ovals (painted
grey), 
involution $\sigma$ is reflection, orthogonal to the $\Pi$ plane. Right:
degeneration of a genus 2 $\mathtt M$-curve, 3 real ovals (painted
grey).}}\label{fig:regM_g2}        
\end{figure}

On such smooth $\mathtt M$--curve of  genus $g$ it is possible to fix a basis of cycles such that the essential singularity $P_0$ belongs to one oval $\Omega_0$
(which is called ``infinite'' oval), and the remaining $g$ fixed ovals $\Omega_j$, $j\in [g]$ coincide with the $a_j$-cycles of this basis:
\[
\sigma(a_j)=a_j, \quad \sigma(b_j)= -b_j,\quad\quad j\in [g].
\]
We call the ovals $\Omega_j$, $j\in [g]$  ``finite''. Finally, \textbf{in order to have regular real quasi--periodic solutions it 
is necessary and sufficient that each finite oval contains exactly one pole divisor point}\cite{DN}.

The proof that this condition is sufficient is rather simple. At $\vec t=\vec 0$ each finite oval contains 
exactly one zero divisor point. During real times evolution, each point of the zero divisor moves in the corresponding 
oval, and no point of the zero divisor can touch the oval $\Omega_0$. Taking into account that the singularities of the 
solution arise when a divisor point touches the point $P_0$, we see that for real times solutions are regular. 
It is easy to check that this argument is valid for degenerate $\mathtt{M}$-curves if the distance of the pole divisor from the intersection of the finite ovals and the infinite ovals is positively bounded from below.

\section{Algebraic-geometric approach to real bounded regular multiline KP solitons}\label{sec:3}

Soliton solutions of KP correspond to algebraic-geometric data associated to rational curves obtained by shrinking some cycles to double points (\cite{Kr3}, see also the book \cite{DKN} and references therein). 
After applying the Darboux transformation (\ref{eq:D}) we obtain the wave function (\ref{eq:Satowf}) defined on a Riemann sphere denoted by $\Gamma_0$ 
throughout the paper, whose effective divisor ${\mathcal D}^{(0)} = \{ \gamma^{(0)}_l \, ; l\in [M-1] \}\subset \Gamma_0 \backslash \{ P_0 \}$, consists 
of the $N$ real simple poles of the dressed wavefunction (\ref{eq:Satowf}) at time $\vec t= \vec 0 = (0,0,\dots)$,
\begin{equation}
\label{eq:Satodiv}
(\gamma^{(0)}_l)^N - w_1 (\vec 0) (\gamma^{(0)}_l)^{N-1} - \cdots - w_{N-1} (\vec 0) \gamma^{(0)}_l-w_N (\vec 0) = 0 ,\quad\quad l\in [N],
\end{equation}
which satisfy $\kappa_1 \le \gamma^{(0)}_1 < \gamma^{(0)}_2 < \cdots \gamma^{(0)}_N \le \kappa_M$ \cite{Mal}, after convenient labeling.

The sufficient part of Dubrovin and Natanzon's proof holds also when the algebraic  $\mathtt M$-curve is singular. Since 
the multi-soliton solutions associated to points in $Gr^{\mbox{\tiny TNN}} (N,M)$ are real bounded  regular for all $\vec t$, it is natural to expect that they may be associated 
to algebraic-geometric data on {\bf reducible} curves which are rational degenerations of regular $\mathtt M$--curves. In the following we provide such construction. For example, in 
 Figure~\ref{fig:regM_g2}[right] we show the rational degeneration of the smooth $\mathtt M$--curve of 
Figure~\ref{fig:regM_g2}[left]. For what concerns the degenerate $\mathtt M$-curve, we describe them in terms of their real parts represented 
as a collection of circles in the plane (with non-intersecting interiors) with marked points, where each marked point at one circle is connected to the corresponding marked point at another circle. For an example, see 
Figure~\ref{fig:fig2}[left]. Let us point out that this diagram is a pure topological representation, it does not respect the complex structure. 
To obtain a degenerate $\mathtt M$-curve it is 
necessary and sufficient that this diagram can be drawn in the plane without intersection (see Figure~\ref{fig:fig2}).

\smallskip

In our text we present a solution of the following 

\textbf{Problem.} \textbf{Associate a rational degeneration of a smooth $\mathtt M$--curve and a divisor satisfying the reality and regularity conditions}, to a regular real bounded multiline  soliton solution of KP represented by the following data:
\begin{enumerate}
\item The fixed numbers  $M$, $N$, $M>N$.
\item A set of $M$ real ordered phases $\kappa_1<\kappa_2<\dots<\kappa_M$.
\item A point $[A]$ in the totally positive Grassmannian $Gr^{\mbox{\tiny TP}} (N,M)$. 
\end{enumerate}

\smallskip

\begin{remark}
Of course, the solution to this problem is not unique, and our construction provides an infinite number of curves depending on a parameter. 
\end{remark}

To be more precise, in our text we associate the following algebraic-geometrical objects to a given soliton datum:
\begin{enumerate}
\item A reducible curve $\Gamma$ which is the rational degeneration of a smooth $\mathtt M$--curve of genus $N(M-N)$ equal to the dimension of 
the totally positive Grassmannian. $\Gamma$ has exactly $N(M-N)+1$ ovals. The curve $\Gamma_0$ in our approach is one of the irreducible components 
of $\Gamma$.
\item On $\Gamma$  we construct a unique wave--function $\tilde \Psi$ such that its restriction on  $\Gamma_0$ is the normalized Sato wave function.
\item On $\Gamma\backslash \{P_0\}$ the wave function $\tilde \Psi$ has effective divisor of degree $N(M-N)$  which coincides with (\ref{eq:Satodiv}) 
on the restriction to $\Gamma_0$. The essential singularity $P_0\in\Gamma_0$ lies in the infinite oval of $\Gamma$. Each finite oval of $\Gamma$ contains 
exactly one divisor point. 
\end{enumerate}

In our paper we construct the wave function  $\tilde \Psi$ in \textbf{three steps:}
\begin{enumerate}
\item We first extend the vacuum wave function $\Psi^{(0)}=e^{\theta (\zeta;\vec t)}$ from the curve $\Gamma_0$ to the 
curve $\Gamma$ as a meromorphic function $\Psi$ in $\Gamma\backslash \{P_0\}$. The poles of this function are called the 
\textbf{vacuum divisor}. The function $\Psi$ satisfies the heat hierarchy for all values of spectral parameter 
$\zeta\in\Gamma$.
\item We then apply the Darboux transformation (dressing) $D^{(N)}$ to $\Psi$. In the algebraic--geometric setting, the dressing corresponds to the following shift 
of the divisor: we add a $N$-th order pole to the point $P_0$ and $N$ simple zeroes to some points $Q_1$, \ldots, $Q_N$. 
The divisor of $ D^{(N)} \Psi$ is non-effective. 
\item We renormalize the dressed wave function:
$$
\tilde\Psi(\zeta,\vec t)= \frac{D^{(N)} \Psi(\zeta,\vec t)}{D^{(N)} \Psi(\zeta,\vec 0)}.
$$
The divisor of $\tilde\Psi$ is the effective KP divisor.
\end{enumerate}
\begin{remark}
\label{rem:rem2}
If we work with regular spectral curves, the shift of the divisor corresponds to a phase shift in the KP solution $u(\vec t)$. In the case of degenerate curves, the Jacobian consists of several components, and some of them correspond to trivial solutions of KP, and
other ones correspond to nontrivial solutions. By applying the Darboux transformation we jump from one component to 
another, therefore we generate nontrivial solutions from the trivial ones.
\end{remark}
\begin{remark}
The advantage of our multi-step procedure is that it is easier to control indirectly the position of the dressed KP divisor via the position of vacuum divisor than directly extend the Sato divisor to $\Gamma$. Indeed, we 
are able to control the sign of the vacuum wave function at all double points and its value at the Darboux points through an algebraic lemma.
\end{remark}

\begin{figure}[!tbp]
  \centering
  {\includegraphics[scale=0.15,angle=0,width=0.44\textwidth]{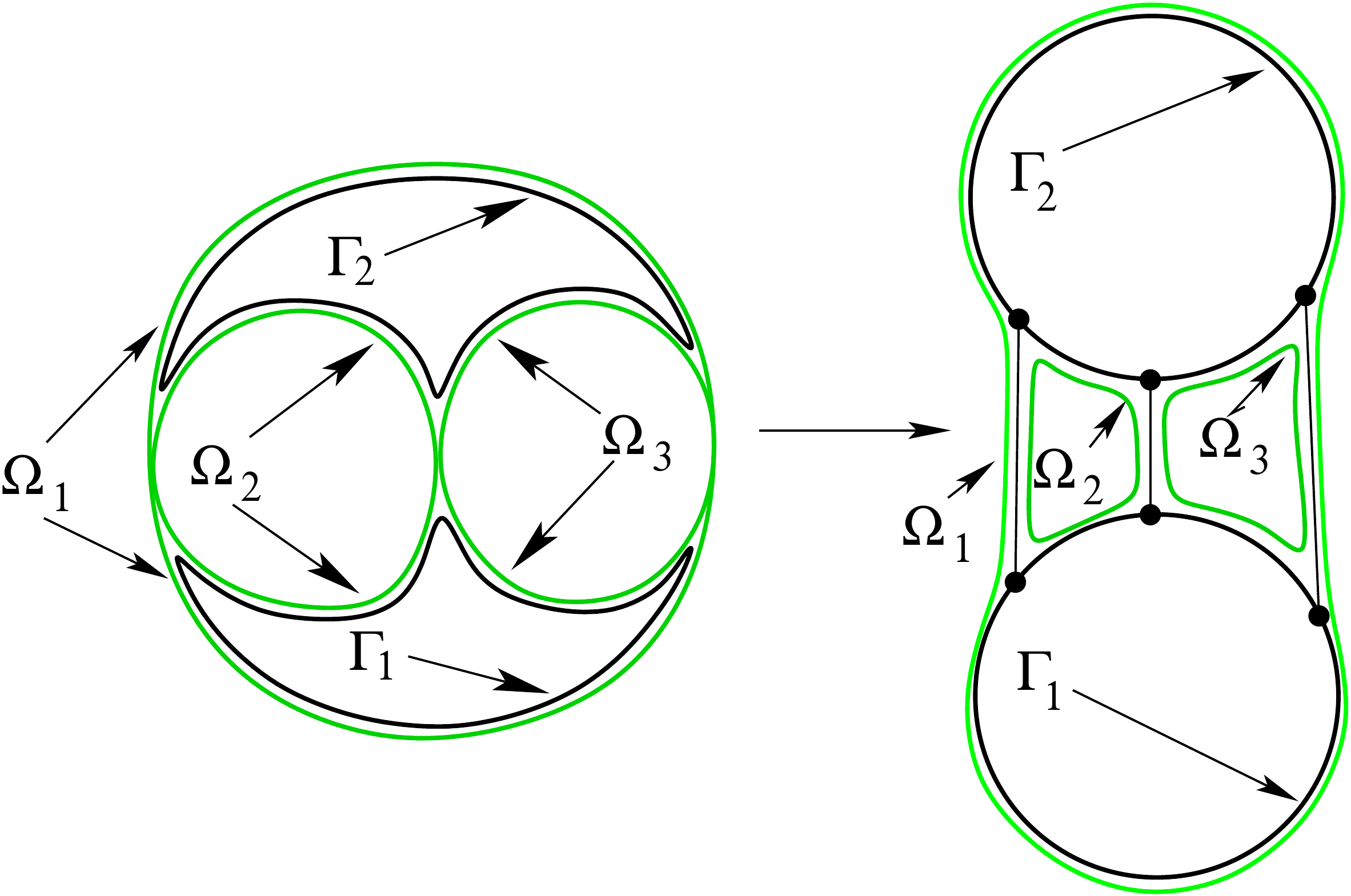}}
  \hspace{2cm}
  {\includegraphics[scale=0.15,angle=0,width=0.18\textwidth]{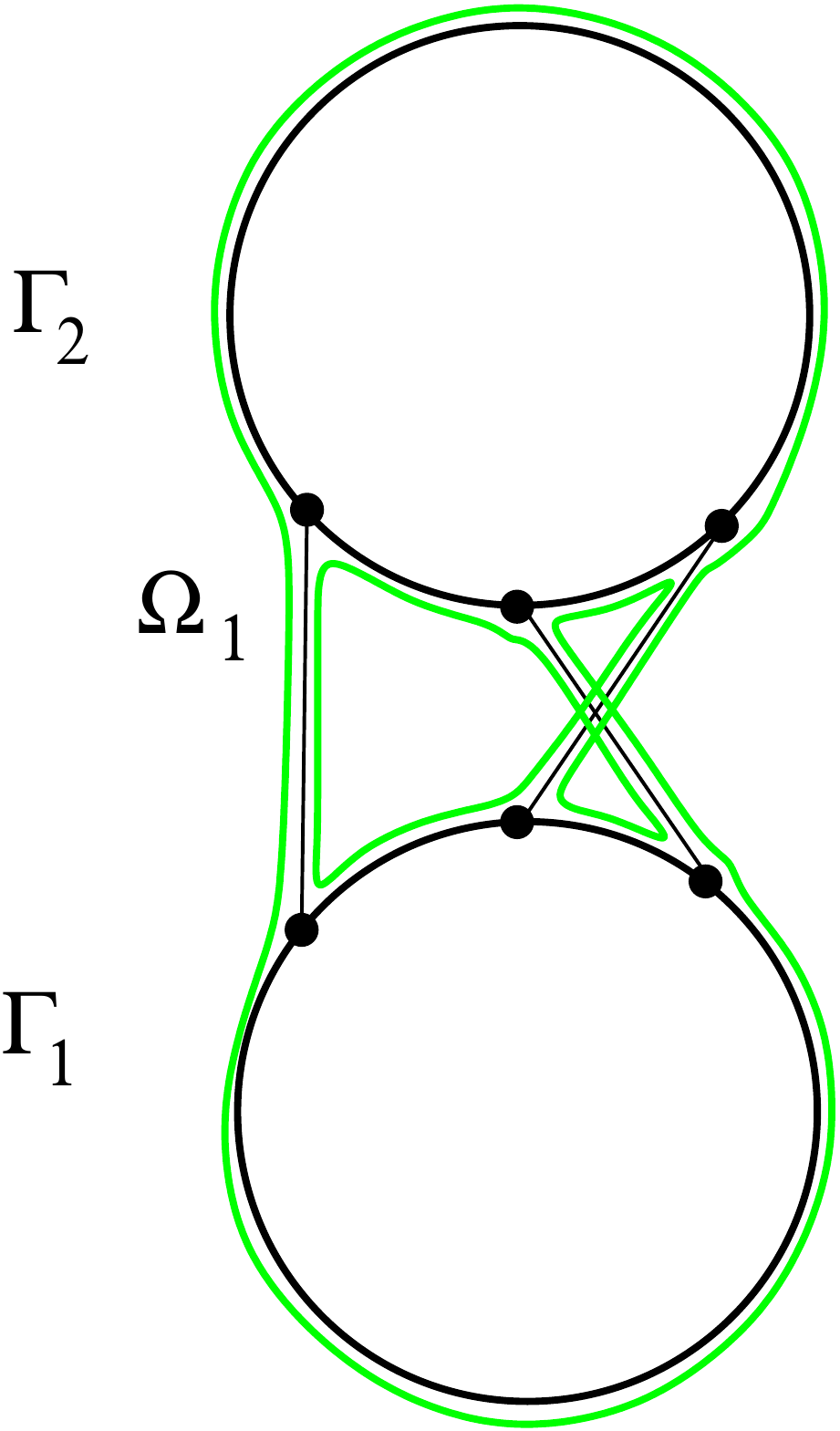}}
\caption{\small{\sl Left: The real part of degenerated $\mathtt M$- curve from the previous example is represented as a 
pair of circles with 3 connecting lines. 
 Right: Not a $\mathtt M$-curve, genus =2, the diagram is non-planar, 1 real oval}}
\label{fig:fig2}        
\end{figure}

\subsection{Algebraic-geometric construction for soliton data in $Gr^{\mbox{\tiny TP}} (1,M)$}\label{subseec:hyper} 

In the case of soliton data $[A]= [A_1,\dots,A_M] \in Gr^{\mbox{\tiny TP}} (1,M)$, the expected minimal degree of the divisor is $M-1 = \mbox{ dim } \left( Gr^{\mbox{\tiny TP}} (1,M) \right)$. Since Sato constraints produce a one point divisor on $\Gamma_0$, we need to glue at least another copy of $\mathbb{CP}^1$ to $\Gamma_0$ creating double points and $M$ ovals. Then, on such reducible curve, we must extend the wavefunction to a Baker--Akhiezer function with $M-1$ poles with the following constraints: the poles are simple, the essential singularity $P_0\in \Gamma_0$ belongs to one such oval and each other oval contains exactly one such pole.  

To achieve such a goal, we make the ansatz that $\Gamma$ is the rational degeneration of a hyperelliptic curve of genus $M-1$ 
such that all branch points are real. Let $\upsilon$ be the hyperelliptic involution and $\Gamma_1 = \upsilon (\Gamma_0)$. 
Here $\sigma$ is the standard complex conjugation of the Riemann spheres $\Gamma_0$, $\Gamma_1$, $\sigma\Gamma_0=\Gamma_0$, 
$\sigma\Gamma_1=\Gamma_1$. The double points $\kappa_j$, $j\in[M]$, are fixed points of both involutions $\sigma$, $\upsilon$.
Finally, let $Q_1 =\upsilon (P_0)$. By definition, $\Gamma=\Gamma_0 \sqcup\Gamma_1$ possesses $M$ ovals, $\Omega_0,\Omega_{1,1},\dots,\Omega_{1,M-1}$ such that $P_0, Q_1\in \Omega_0$ and, for each $l\in [M-1]$, $\Omega_{1,l}$ is the topological circle formed by the connected union of the intervals $[\kappa_l, \kappa_{l+1}]\subset \Gamma_0$ and $[\upsilon(\kappa_l), \upsilon(\kappa_{l+1})]\subset \Gamma_1$ (see Figure~\ref{fig:fig3}).

We then extend the vacuum wavefunction $\Psi^{(0)} (P, \vec t)$ to a meromorphic function $\Psi (P, \vec t)$ on $P\in\Gamma\backslash \{P_0\}$ with an $(M-1)$--point divisor in the intersection of $\Gamma_1$ with the finite ovals $\Omega_{1,l}$, $l\in [M-1]$. We also set the value of $\Psi (Q_1, \vec t)$ to the heat hierarchy solution $f^{(1)} (\vec t)$ so that the Darboux transformation
\[
D^{(1)} = \partial_x - \frac{\partial_x f^{(1)} (\vec t)}{f^{(1)} (\vec t)}
\]
creates a non--effective divisor with one zero at $Q_1$ for all $\vec t$. Then, after normalization, the effective KP divisor satisfies both the reality and regularity conditions in \cite{DN} and Sato's constraints.

Without loss of generality, let us choose the representative matrix $[A]\in Gr^{\mbox{\tiny TP}}(1,M)$ which satisfies $\sum_{l=1}^M A_l =1$. Then, the vacuum wavefunction on $\Gamma= \Gamma_0 \sqcup \Gamma_1$ is defined as follows
\begin{equation}\label{eq:vacHyp}
\Psi(\zeta,\vec t) = \left\{ \begin{array}{ll}
\Psi^{(0)}(\zeta;\vec t)=e^{\theta (\zeta;\vec t)}, &\quad \zeta \in \Gamma_0,\\
\displaystyle \Psi^{(1)} (\zeta; \vec t) = \frac{ \sum_{l=1}^{M} A_l E_l (\vec t) \prod_{j\not = l}^M
(\zeta -\kappa_j) }{\prod_{r=1}^{M-1} (\zeta-b_r)}, &\quad \zeta \in \Gamma_1
\end{array}
\right.
\end{equation}
with 
\[
\theta \equiv \theta (\zeta; \vec t) = \sum\limits_{i\ge 1} \zeta^i t_i,
\quad\quad\quad\quad
E_l (\vec t)= e^{\theta (\kappa_l;\vec t)}, \quad\quad l\in [M],
\]
where the vacuum divisor $b_j$  consists of $M-1$ real simple poles on $\Gamma_1$, uniquely defined by the gluing conditions:
\[
\Psi^{(0)}(\kappa_l ;\vec t)=\Psi^{(1)}(\kappa_l ;\vec t), \quad\quad l\in [M].
\]
It is easy to check that
\[
\kappa_1 < b_1 < \kappa_2 < b_2 <\kappa_3 < \cdots < \kappa_{M-1} < b_{M-1} < \kappa_M.
\]
Moreover, by definition,
\begin{equation}\label{eq:DarHyp}
\Psi^{(1)}(Q_1 ;\vec t) = \lim\limits_{\zeta\to \infty} \Psi^{(1)}(\zeta ;\vec t) = \sum\limits_{l=1}^{M}a_l E_l (\vec t) = f^{(1)} (\vec t), \quad\quad \forall \vec t,
\end{equation}
is the heat hierarchy solution generating the Darboux transformation $D^{(1)}$
for the given soliton data. Then, it is immediate to verify that the KP wavefunction 
\begin{equation}\label{eq:KPHyp}
{\tilde \Psi} (\zeta,\vec t) = \frac{D^{(1)} \Psi (\zeta, \vec t)}{D^{(1)} \Psi (\zeta, \vec 0)}=\left\{ \begin{array}{ll}
{\tilde \Psi}^{(0)}(\zeta;\vec t)=\frac{\zeta -\gamma^{(0)}_1(\vec t)}{\zeta - \gamma^{(0)}_1}e^{\theta (\zeta;\vec t)}, &\zeta \in \Gamma_0,\\
 {\tilde \Psi}^{(1)} (\zeta; \vec t) = \frac{ \sum\limits_{1\le j<l\le M}A_l A_j (\kappa_l-\kappa_j)^2 E_l (\vec t) E_j (\vec t)\prod\limits_{s\not = j,l}^M
(\zeta -\kappa_s) }{f^{(1)}(\vec t)\sum\limits_{1\le j<l\le M}A_l A_j (\kappa_l-\kappa_j)^2
\prod\limits_{s\not = j,l}^M(\zeta -\kappa_s)}, &\zeta \in \Gamma_1,
\end{array}
\right.
\end{equation}
has effective divisor ${\mathcal D}^{(1)} = \{ \gamma^{(0)}_1 ;  \gamma^{(1)}_1, \cdots, \gamma^{(1)}_{M-2}\}$, which satisfies the required conditions. Indeed the following Lemma holds.

\begin{figure}[!tbp]
  \centering
  {\includegraphics[scale=0.15,angle=0,width=0.36\textwidth]{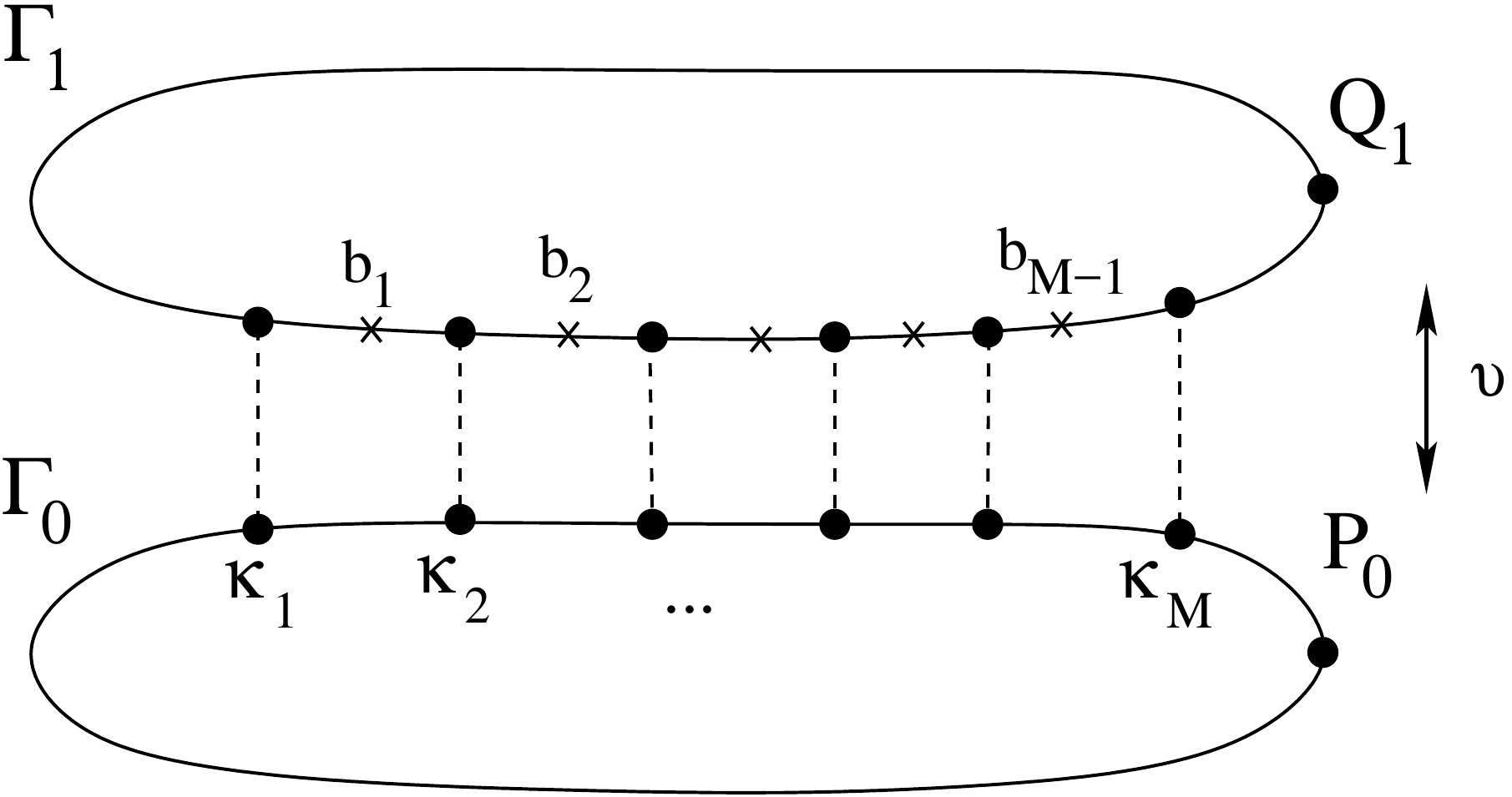}
     \hspace{1cm}
     \includegraphics[scale=0.15,angle=0,width=0.36\textwidth]{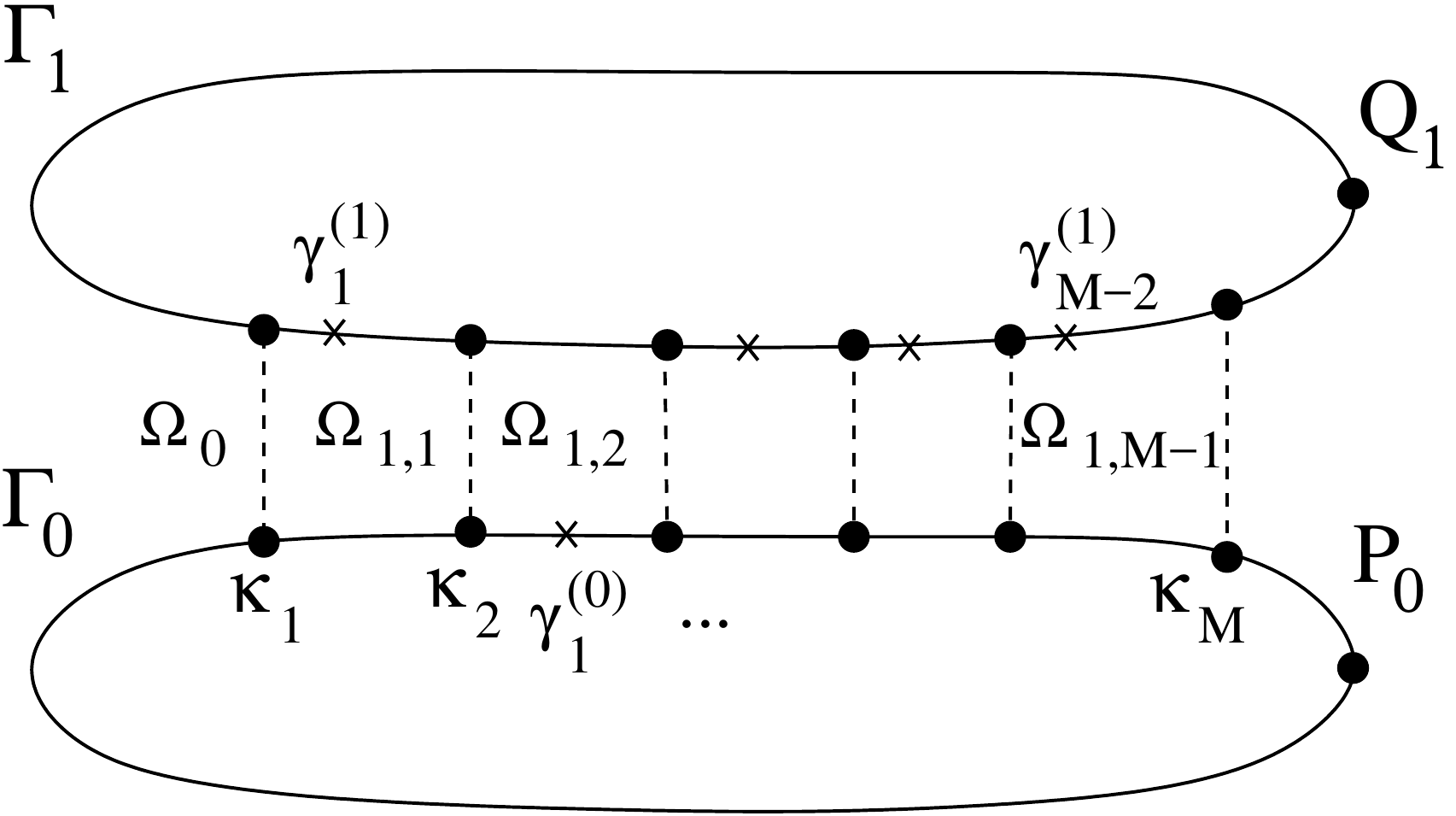}
  }
\caption{\small{\sl The rational degeneration of an hyperelliptic curve is associated to soliton data in  
$Gr^{\mbox{\tiny TP}}(1,M)$. Left: the vacuum divisor. Right: a possible effective divisor after the Darboux transformation. }}\label{fig:fig3}        
\end{figure}

\begin{lemma}\cite{A}\label{lemma:hyper}
Let $\kappa_1 < \cdots \kappa_M$, $[A] \in Gr^{\mbox{\tiny TP}} (1,M)$ as above, and let ${\tilde \Psi} (\zeta,\vec t)$ be as in (\ref{eq:KPHyp}). Then:
\begin{enumerate}
\item the restriction of the divisor ${\mathcal D}^{(1)}$ on $\Gamma_0$ consists of one pole: ${\mathcal D}^{(1)}\cap \Gamma_0 = \{ \gamma^{(0)}_1\}$, where $\gamma^{(0)}_1 = \frac{\sum_{j=1}^M A_j \kappa_j}{\sum_{j=1}^M A_j}$;
\item ${\mathcal D}^{(1)}\cap \Gamma_1 = \{ \gamma^{(1)}_1, \dots \gamma^{(1)}_{M-2}\}$, where $\gamma^{(1)}_r$, $r\in [M-2]$, are the real simple roots of $\sum_{1\le j<l\le M}A_l A_j (\kappa_l-\kappa_j)^2
\prod_{s\not = j,l}^M(\zeta -\kappa_s)=0$;
\item each finite oval contains exactly one pole and no pole is in the infinite oval, $\# \left(  {\mathcal D}^{(1)} \cap \Omega_{1,l}  \right) = 1$, for any $l\in [M-1]$, and
${\mathcal D}^{(1)} \cap \Omega_{0} = \emptyset$. 
\end{enumerate}
Moreover, if $\gamma_0$ coincides with a double point, that is for some ${\bar l} \in [2,M-1]$, $\gamma^{(0)}_1 = \kappa_{{\bar l}}$, then also $\gamma^{(1)}_{{\bar l}-1} =\kappa_{{\bar l}}$ and ${\mathcal D}^{(1)} \cap\left([\kappa_{{\bar l}-1}, \kappa_{{\bar l}+1}]\backslash \{ \kappa_{{\bar l}} \}\right) = \emptyset$.
\end{lemma}

We remark that the case in which $\gamma^{(1)}_0$ coincides with a double point is not generic and, for any $\epsilon >0$ there exists $\vec t_0 =(x_0,y_0,t_0,0,\dots)$ with $||\vec t_0||<\epsilon$ such that $\gamma^{(0)}_1 (\vec t_0)\not = \kappa_{{\bar l}} $. Finally, ${\mathcal D}^{(1)} \cap \Omega_0=\emptyset$ implies that,  under our hypotheses, the pole divisor satisfies a stronger conditions than in \cite{Mal} , that is
\[
\gamma^{(0)}_1 ,\gamma^{(1)}_l \in ]\kappa_1, \kappa_M[, \quad\quad \forall l\in [M-2].
\]

\smallskip

If we start from the vacuum wavefunction
as in (\ref{eq:vacHyp}) and apply the Darboux transformation $D^{(N)}$ associated to a generic point in $Gr^{\mbox{\tiny TP}} (N,M)$, we obtain an effective divisor which coincides with Sato's on $\Gamma_0$ and consists of further $M-1$ distinct poles on $\Gamma_1$.
In \cite{A} one of us (S.A.) has characterized the $(N-1)$--dimensional variety of soliton data 
in $Gr^{\mbox{\tiny TP}} (N,M)$ such that the Darboux transformation $D^{(N)}$  generates a non--effective divisor with a zero 
of order $N$ in $Q_1$, so that the effective divisor has $M-N-1$ real simple poles on $\Gamma_1$ and is compatible with 
Dubrovin and Natanzon conditions (see Remark~\ref{rem:rem2}). Such multi--line solitons are naturally related to the finite Toda lattice (see \cite{A} and references therein) and correspond to a well--defined immersion of $Gr^{\mbox{\tiny TP}} (1,M)\hookrightarrow Gr^{\mbox{\tiny TP}}(N,M)$, for any fixed $N\in [M-1]$. In particular, in \cite{A} it is remarked that the KP vacuum divisor $b_1,\dots, b_{M-1}$ of the KP vacuum wavefunction (\ref{eq:vacHyp}) on $\Gamma_1$ coincides with the Toda divisor in \cite{KV} upon identification of the KP phases with the Toda spectrum and of the soliton datum $[A]\in Gr^{\mbox{\tiny TP}} (1,M)$ in the Toda IVP.

\subsection{The algebraic-geometrical setting for soliton data in $Gr^{\mbox{\tiny TP}} (N,M)$} 

Now we present the main geometric construction of our paper.

\begin{definition}
\label{def:gamma}
\textbf{The rational spectral curve.}
Assume that we have the following data
\begin{enumerate}
\item The fixed numbers  $M$, $N$, $M>N$.
\item A set of $M$ real ordered phases $\kappa_1<\kappa_2<\dots<\kappa_M$.
\end{enumerate}
To such data we associate the curve $\Gamma$, obtained by gluing $N+1$ copies of $\mathbf{CP}^1$, $\Gamma_0$, 
$\Gamma_1, \cdots ,\Gamma_{N}$, in the following way:

Denote by $\zeta$ the local parameter on each copy of $\mathbb{CP}^1$.
We have the following marked points:
\begin{enumerate}
\item On $\Gamma_0$ we have $M$ real ordered marked points $\kappa_1<\kappa_2<\ldots<\kappa_M$ and the infinite point $P_0$, 
$\zeta(P_0) =\infty$.
\item On each $\Gamma_r$, $r\in[N]$ we have $M-N+1$ real negative ordered marked points in the local coordinate $\zeta$
\begin{equation}\label{eq:lambdas0}
\lambda_1^{(r)} =0> \lambda_2^{(r)}> \cdots \lambda_{M-N+1}^{(r)},
\end{equation}
$M-N$ real positive ordered marked points in the local coordinate $\zeta$
\begin{equation}\label{eq:alphas0}
0<\alpha_2^{(r)} <\alpha_3^{(r)}< \cdots <\alpha_{M-N+1}^{(r)},
\end{equation}
and the infinite point $Q_r$, $\zeta^{-1} (Q_r)=0$.
\end{enumerate}
To such data we associate the following set of gluing rules:
\begin{enumerate}
\item We glue $\lambda^{(1)}_l\in\Gamma_1$ to $\kappa_{N+l-1}\in \Gamma_0$, for $l\in [M-N+1]$;
\item For any $r\in [2, N]$, we glue $\lambda^{(r)}_1 \in \Gamma_r$ to $\kappa_{N-r+1}\in \Gamma_0$;
\item For any $r\in [2, N]$, and for all $l	\in [2,M-N+1]$, we glue $\lambda^{(r)}_l \in \Gamma_r$ to $\alpha^{(r-1)}_l \in \Gamma_{r-1}$.
\end{enumerate}
\end{definition}

\begin{figure}[!tbp]
  \centering
  {\includegraphics[scale=0.15,angle=0,width=0.38\textwidth]{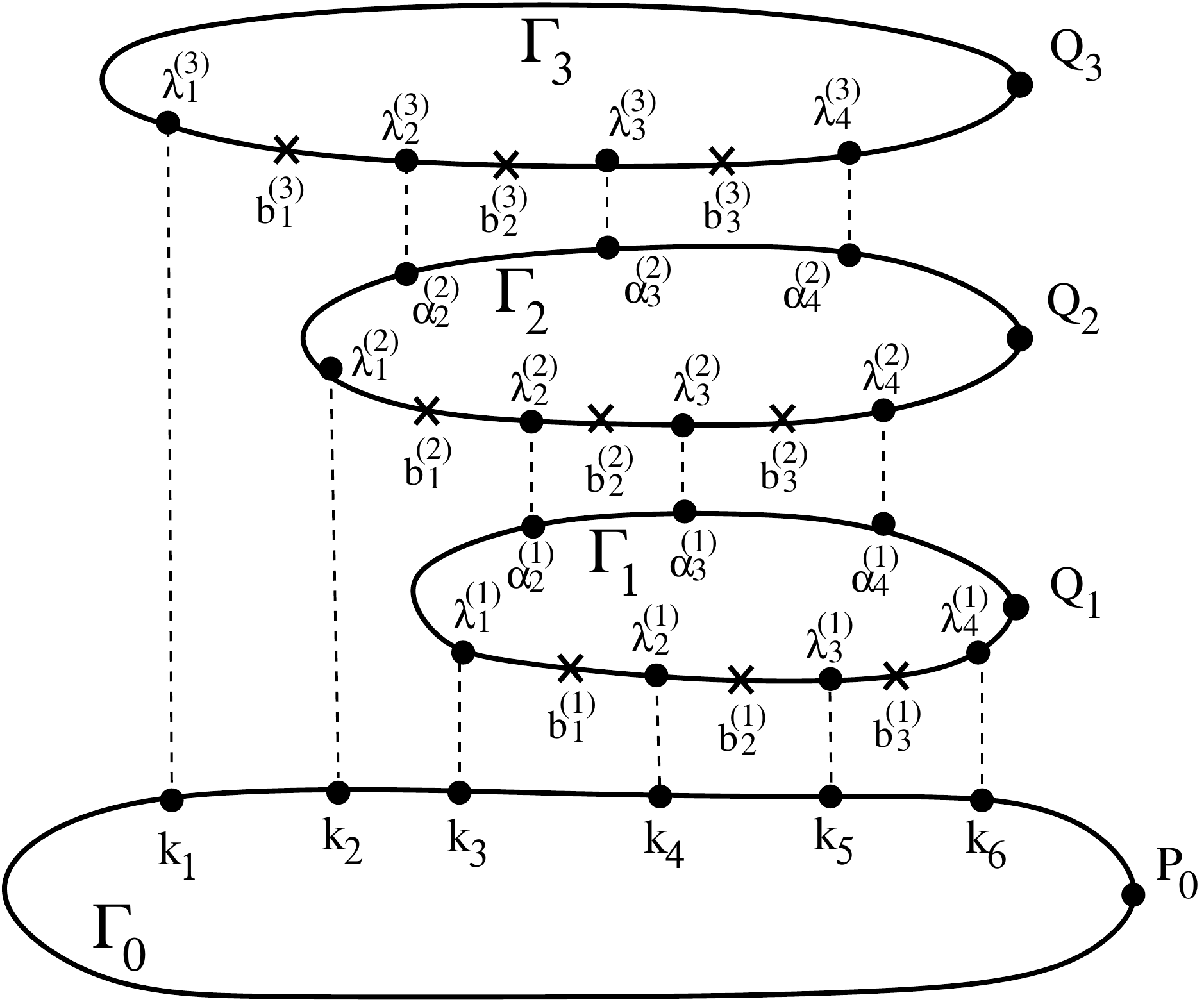}
     \hspace{1cm}
     \includegraphics[scale=0.15,angle=0,width=0.38\textwidth]{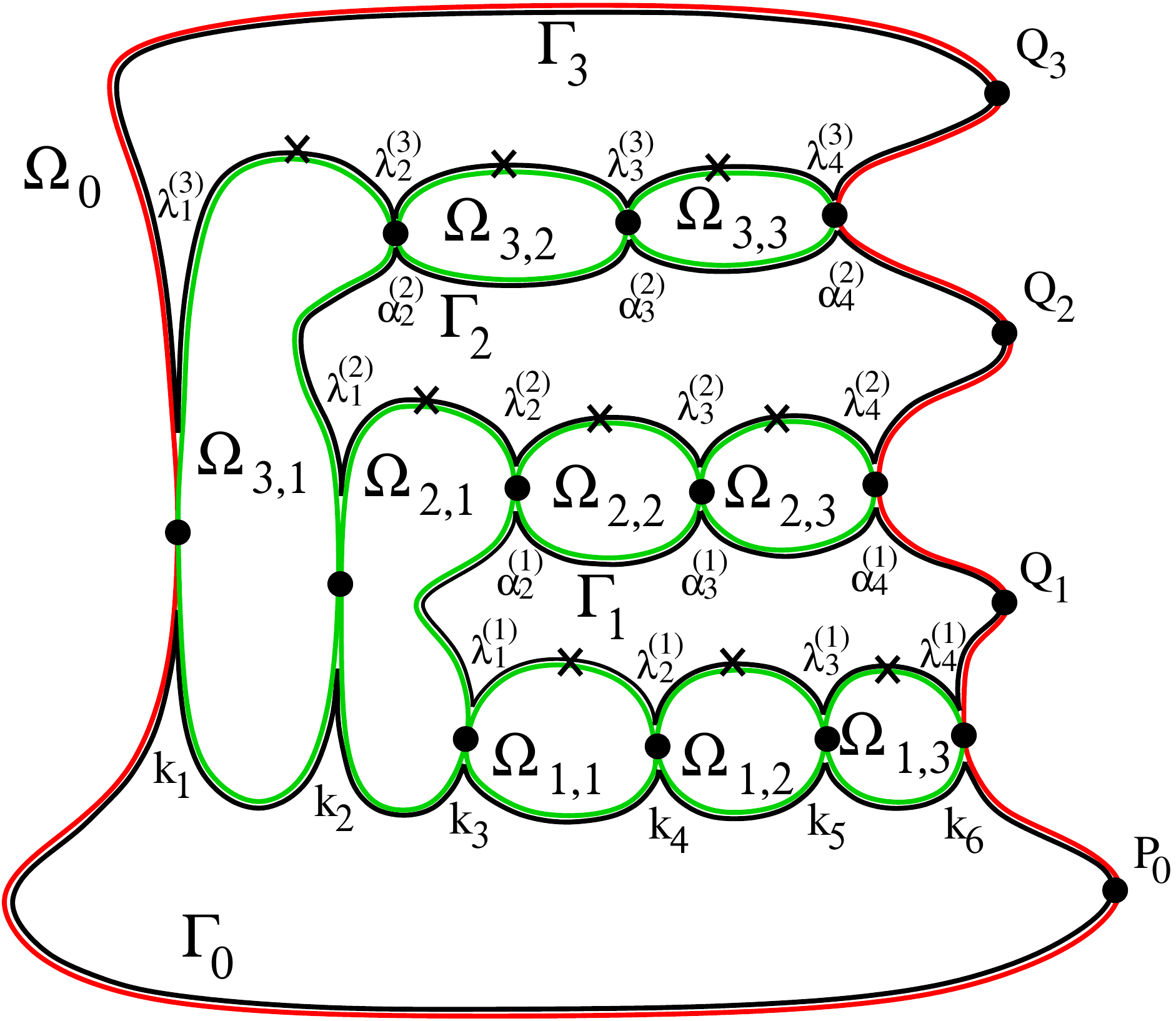}
  }
\caption{\small{\sl Left: The real part of $\Gamma$ for $M=6$, $N=3$. Right: The 10 real ovals of $\Gamma$ for $M=6$, $N=3$.}}
\label{fig:fig4}        
\end{figure}

Let us recall that we have the standard complex conjugation on each copy of $\mathbf{CP}^1$, and, taking into 
account that we glue only real points, we obtain a complex conjugation on $\Gamma$.
It is easy to check that we construct a connected reducible curve whose real part possesses the needed number of ovals. In Proposition \ref{prop:Mcurve} we characterize the topological properties of $\Gamma$ and we show the gluing rules and the ovals structure in Figure~\ref{fig:fig4} for the case $N=3$, $M=6$.

\begin{figure}[!tbp]
  \centering
  {\includegraphics[scale=0.15,angle=0,width=0.38\textwidth]{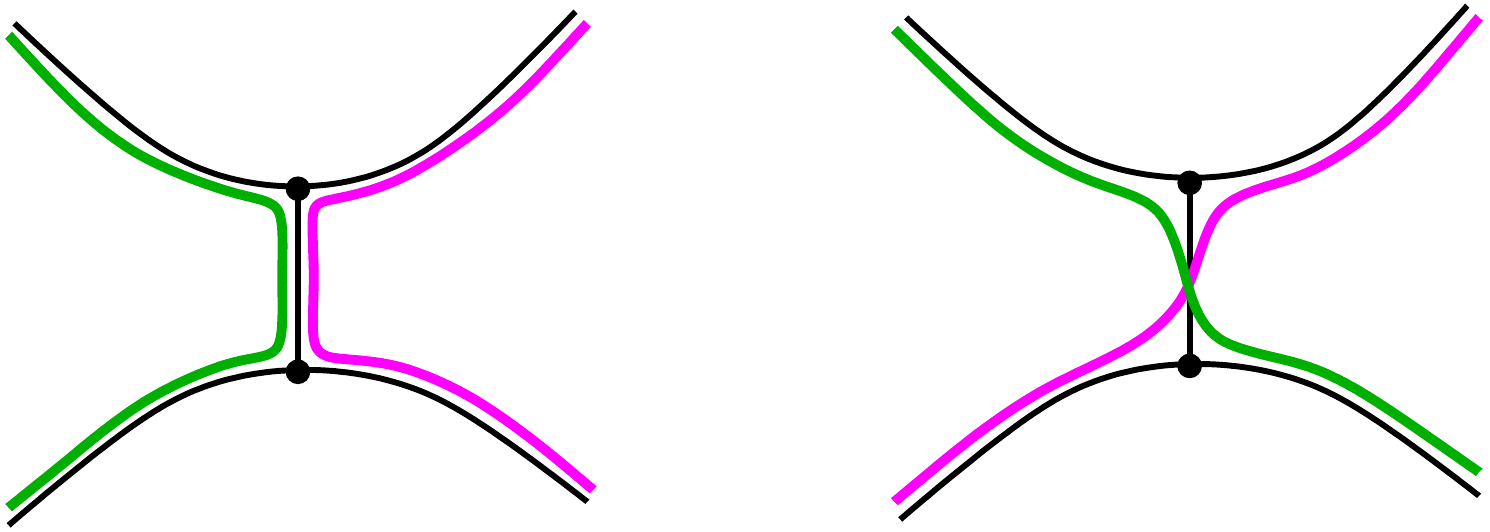}
  }
\caption{\small{\sl Left: Ovals passing through double points in agreement with the planar diagram. 
Right: Oval passing the double points not in agreement with planar diagram.}}
\label{fig:fig4.1}        
\end{figure}

\begin{prop}\label{prop:Mcurve}
Let  $\Gamma=\Gamma_0\sqcup\Gamma_1\sqcup \cdots \sqcup\Gamma_{N}$ be the connected reducible curve of Definition \ref{def:gamma}.
Then the real part of $\Gamma$ which we denote $\Gamma_{\mathbb R}$ possesses $1+(M-N)N$ ovals and each oval is topologically equivalent to a circle. Each double point of $\Gamma$ is a common point to exactly a pair of ovals. Let us denote $\Omega_0$ the oval containing the infinity point $P_0\in \Gamma_0$ (we call this oval infinite), and $\Omega_{r,n}$, $r\in [N]$, $n\in [M-N]$  be the remaining $ (M-N)\times N$ (finite) ovals. 

Then $Q_r \in \Omega_0$, $r\in [N]$,
$\Omega_{r,k}$ are defined by the following properties:
\begin{enumerate}
\item For $r=1$, $n\in [M-N]$,
\[
\begin{array}{ll}
\Omega_{1,n} \cap \Gamma_0 = [\kappa_{N+n-1}, \kappa_{N+n}], &
\Omega_{1,n}\cap \Gamma_1 = [\lambda^{(1)}_{n+1}, \lambda^{(1)}_{n}];\\
\Omega_{1,n}\cap\Gamma_r = \emptyset , \quad r\in [2,N];&
\end{array}
\]
\item For  $r\in [2,N]$
\[
\begin{array}{ll}
\Omega_{r,1} \cap \Gamma_0 = [\kappa_{N-r+1}, \kappa_{N-r+2}], 
&
\Omega_{r,1} \cap \Gamma_{r-1} = [\lambda^{(r-1)}_{1}, \alpha^{(r-1)}_{2}], 
\\
\Omega_{r,1} \cap \Gamma_{r} = [\lambda^{(r)}_{2}, \lambda^{(r)}_{1}],
&
\Omega_{r,1}\cap \Gamma_j =\emptyset, \quad \forall j\in [N]\backslash \{0,r-1,r\};
\end{array}
\]
\item For  $r\in [2,N]$ and $n\in [2,M-N]$,
\[
\begin{array}{ll}
\Omega_{r,n} \cap \Gamma_{r-1} = [\alpha_{n}^{(r-1)}, \alpha_{n+1}^{(r-1)}], &
\Omega_{r,n} \cap \Gamma_r = [\lambda_{n+1}^{(r)}, \lambda_{n}^{(r)}],\\
\Omega_{r,n}\cap \Gamma_j =\emptyset, \quad \forall j\in [N]\backslash \{r-1,r\}; &
\end{array}
\]
\end{enumerate}
\end{prop}

\begin{proof}
The real ovals of $\Gamma$ are defined as the union of the corresponding intervals. The structure of the ovals can be easily determined from the gluing law settled in Definition \ref{def:gamma}.
\end{proof}

\begin{remark}
Our curve can be obtained as a degeneration of a smooth compact Riemann surface of genus $N\times(M-N)$ with an antiholomorphic involution $\sigma$ having  $N\times(M-N)+1$ fixed ovals. Therefore this smooth Riemann surface is an  $\mathtt M$--curve (see \cite{Nat}), and it is natural to treat $\Gamma$ as as reducible rational  $\mathtt M$--curve.
\end{remark}

\begin{remark}
In our representation of the real part of our curve as a planar diagram, we use the following rule: the ovals pass the 
double points in such a way that they do not intersect each other on the diagram (see Fig.~\ref{fig:fig4.1}). Moreover,a perturbation is admissible if respects the antiholomorphic involution  $\sigma$, and if it preserves the number of ovals (see Fig.~\ref{fig:fig4.2}). For example, in the $Gr^{\mbox{\tiny TP}} (1,M)$ case, we assume, that after perturbation the double points split into pairs of \textbf{real} points. The perturbations splitting double points into pairs of complex conjugate point also respect the antiholomorphic involution, but they reduce the number of ovals and are forbidden.
\end{remark}

In Section~\ref{sec:example} we represent rational degenerations of smooth $\mathtt M$--curves corresponding to the soliton data in $Gr^{\mbox{\tiny TP}} (2,4)$ as partial normalizations of reducible algebraic plane nodal curve whose irreducible components are rational (for necessary definitions see \cite{ACG}). We also show that such reducibl plane curv is the rational degeneration of a smooth $\mathtt M$--curve of genus 4. 

\medskip

Let us now describe the analytic properties of the vacuum wave function. Let us introduce the following notation: the restriction of $\Psi$ to the component $\Gamma_r$ is denoted by  $\Psi^{(r)}$; if
$P\in \Gamma_r$ and $\zeta=\zeta(P)$, then
\begin{equation}
\label{eq:notat}
\Psi(\zeta,\vec t)\left.\vphantom{\sum}\right|_{\Gamma_r} = \Psi^{(r)}(\zeta,\vec t).
\end{equation}

\begin{figure}[!tbp]
  \centering
  {\includegraphics[scale=0.5,angle=0,width=0.5\textwidth]{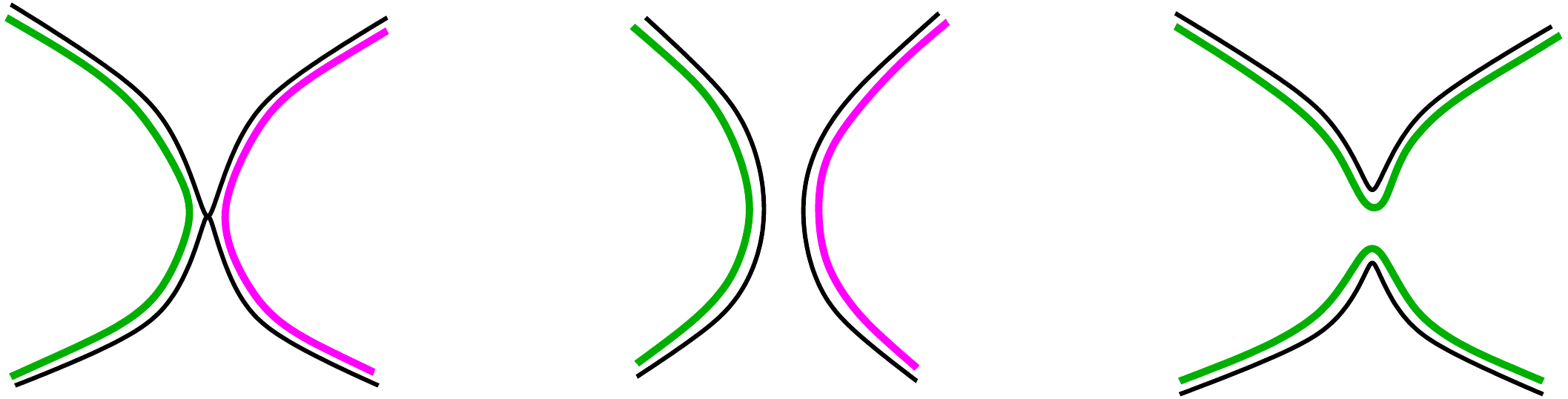}
  }
\caption{\small{\sl Left: A curve with double points and a pair of ovals passing through the double point.  
Middle: An admissible perturbation of the curve: the number of ovals is preserved. Right: This perturbation is not admissible since two ovals merge.}}
\label{fig:fig4.2}        
\end{figure}

\begin{definition}\label{def:psi1}
\textbf{The vacuum wave function.}
Assume that we have the following data
\begin{enumerate}
\item The reducible $M$--curve $\Gamma$ as in Definition \ref{def:gamma}.
\item A collection of $N\times(M-N)$ real divisor points $b^{(r)}_k\in\Gamma_r$, $r\in[N]$, $k\in[M-N]$ such that 
$\lambda_{k+1}^{(r)}<b^{(r)}_k < \lambda_k^{(r)}$.
\end{enumerate}
To such data we associate the vacuum wave function $\Psi(P,\vec t)$, where $P$ is a point of the curve $\Gamma$, and $\vec t$ is the
vector of KP times (we may use the standard assumption that only a finite number of $t_j\ne0$)
with the following analytic properties:
\begin{enumerate}
\item \label{def:psi1:prop1} $\Psi^{(0)}(\zeta,\vec t)$ is the Sato vacuum KP wave function:
\begin{equation}
\Psi^{(0)}(\zeta,\vec t)=e^{\theta}, 
\end{equation}
where
\begin{equation}\label{eq:theta}
\theta \equiv \theta (\zeta, \vec t) = \sum\limits_{i\in[\infty]} \zeta^i t_i, \ \ t_1=x, \ t_2=y, \ t_3=t, \ldots 
\end{equation}
\item\label{def:psi1:prop2} For any fixed collection of KP times $\vec t$ the function $\Psi^{(r)} (\zeta, \vec t)$, $r>0$ is meromorphic 
in $\zeta$ on $\Gamma_r$, and its divisor is exactly $b^{(r)}_1$, \ldots,  $b^{(r)}_{M-N}$.
\item\label{def:psi1:prop3} \textbf{Matching rules at the double points:} at all double points of $\Gamma$ obtained by gluing a pair
of points $P_1\in\Gamma_i$ and $P_2\in\Gamma_j$ the wave function has the same value for all $\vec t$: 
$\Psi(P_1,\vec t)\equiv \Psi(P_2,\vec t)$. More precisely,
\begin{enumerate}
\item For $r=1$ and $l\in [M-N+1]$ the values of $\Psi^{(1)}$ at the point $\lambda_l^{(1)}$ is equal to the value of $\Psi^{(0)}$ at the point $\kappa_{N+l-1}$:
\begin{equation}\label{eq:PP1}
\Psi^{(1)} (\lambda_l^{(1)} , \vec t) = \Psi^{(0)} (\kappa_{N+l-1}, \vec t), {\quad \forall \vec t};
\end{equation}
\item For $r\in[2,N]$ the value of $\Psi^{(r)}$ at the point $\lambda_1^{(r)}$ is equal to the value of $\Psi^{(0)}$ at the point 
$\kappa_{N-r+1}$:
\begin{equation}\label{eq:PPr1}
\Psi^{(r)} (\lambda_1^{(r)} , \vec t) =\Psi^{(0)} (\kappa_{N-r+1}, \vec t), {\quad \forall \vec t};
\end{equation}
\item For $r\in[2,N]$ and $l\in[2,M-N+1]$ the  value of $\Psi^{(r)}$ at the point $\lambda_l^{(r)}$, is equal to the value of $\Psi^{(r-1)}$ at the point $\alpha_l^{(r-1)}$:
\begin{equation}\label{eq:PPrn}
\Psi^{(r)} (\lambda_l^{(r)} , \vec t) = \Psi^{(r-1)} (\alpha_l^{(r-1)}, \vec t), {\quad \forall \vec t}.
\end{equation}
\end{enumerate}
\end{enumerate}
\end{definition}

\begin{prop}
The properties (\ref{def:psi1:prop1})-(\ref{def:psi1:prop3}) uniquely define the function $\Psi(P,\vec t)$. Moreover, for all 
$P\in\Gamma\backslash\{P_0,b^{(r)}_j,r\in[N],j\in{M-n}\}$ the function $\Psi(P,\vec t)$ is a smooth function in the
variables $\vec t$. If, in addition, $P$ is a real point, i.e. $\sigma P=P$, then  $\Psi(P,\vec t)$ is real for real $\vec t$.   
\end{prop}
The proof of this statement is standard and we omit it.

\begin{remark}\label{rem:Psi}
Properties (\ref{def:psi1:prop1}),  (\ref{def:psi1:prop2}) and (\ref{def:psi1:prop3}) in Definition~\ref{def:psi1} 
immediately impose that the vacuum wave function takes the following form:
\begin{equation}\label{eq:psi0}
\Psi^{(r)} (\zeta, \vec t) = \sum_{l=1}^{M-N+1} \mathring B^{(r)}_l \frac{\prod_{j\not = l} (\zeta -\lambda_j^{(r)}) }{ \prod_{k=1}^{M-N} (\zeta-b_k^{(r)})} V^{(r)}_l (\vec t), \ \ r\in[1,N],
\end{equation}
for some real  coefficients $\mathring B^{(r)}_l$, $l\in[M-N+1]$, and $b_k^{(r)}$, $k\in[M-N]$ depending only on $\xi$, where
\begin{equation}\label{eq:V0}
V^{(r)}_l (\vec t) =\left\{ \begin{array}{lll} 
\displaystyle e^{\theta_{N+l-1}(\vec t)},& \quad l\in [M-N+1],& \quad r=1\\[0.5ex]
\displaystyle e^{\theta_{N-r+1}(\vec t)} & \quad l=1,& \quad r\in [2,N]\\[0.5ex]
\Psi^{(r-1)} (\alpha_l^{(r-1)}, \vec t) &\quad l\in [2,M-N+1],&\quad r\in [2,N].
\end{array}
\right. 
\end{equation}
\end{remark}

Let us define $f_r(\vec t)=\Psi(Q_r,\vec t)$, $r\in[N]$.
Then by applying the Darboux transformation we obtain the dressed (unnormalized) KP wave function, associated to the set
$f_r$ of the heat hierarchy solutions. Then its normalization 
\begin{equation}
\tilde\Psi(P,\vec t) =\frac{D^{(N)} \Psi(P,\vec t) }{D^{(N)} \Psi(P,\vec 0)},
\end{equation}
is the Baker-Akhiezer function for the KP multisoliton solution used in the Krichever construction in the 
case of reducible spectral curve. It also satisfy the Dubrovin-Natanzon reality and regularity constraints. 

Our goal is to solve the inverse problem: \textbf{construct the spectral data from a given soliton datum, consisting 
of $M$ ordered phases and a point $[A]$ of the totally positive Grassmannian  $Gr^{\mbox{\tiny TP}} (N,M)$.} Such inverse problem is equivalent to 
construct a curve and a vacuum divisor such that the functions $\Psi(Q_r,\vec t)$ form a basis of heat hierarchy 
solutions generating the Darboux transformation associated with the prescribed point $[A]$.

A solution of this problem is given in the next Section. Assume that we have succeeded in constructing such a curve and 
a vacuum divisor on it. Let us describe the action of the Darboux transformation in the algebraic-geometrical terms.

\begin{theorem}\label{theo:noneff_div}
Assume that we fix the soliton data: a set of $M$ real ordered phases ${\mathcal K} = \{ \kappa_1<\kappa_2<\dots<\kappa_M\} $ and a point of the totally positive Grassmannian $[\hat A]\in Gr^{\mbox{\tiny TP}} (N,M)$.
If, to such soliton data, we may associate a curve $\Gamma$ as in Definition \ref{def:gamma}, a $
N(M-N)$ vacuum divisor $b^{(r)}_j$ as in Definition \ref{def:psi1} and a representative matrix $\hat A$, such that 
\begin{equation}\label{eq:f}
\Psi(Q_r,\vec t)=\sum\limits_{j\in[M]} {\hat A}^{N-r+1}_{j} e^{\theta_j(\vec t)}, \ \ r\in[N],
\end{equation}
then the Darboux transformation $D^{(N)}$ associated to the soliton data $({\mathcal K}, [\hat A])$
generates the following \textbf{shift of divisor}: we add $N$ simple zeroes at the points $Q_r$ and $N$-th order pole at the point
$P_0$. 
\end{theorem}
\begin{proof}
By construction, the Darboux transformed wave function is obtained from the vacuum wave function by applying the $N$-th order 
ordinary differential operator $D^{(N)}$ such that $\Psi(Q_r,\vec t)$, $r\in [N]$, generate its kernel. Therefore
\begin{equation}
\label{eq:div_zeroes}
D^{(N)}\Psi(Q_r,\vec t)\equiv0 \ \ \mbox{for all} \ \ \vec t.
\end{equation}
Property (\ref{eq:div_zeroes}) exactly means that we add a simple zero at each $Q_r$ to the divisor. 
Near the point $P_0$ we have
\begin{equation}
\label{eq:div_pole}
D^{(N)}\Psi_0(\zeta,\vec t)=(\zeta^N+O(\zeta^{N-1}))e^{\theta(\zeta,\vec t)},
\end{equation}
therefore an $N$-order pole is added at $P_0$. Since the operator $D^{(N)}$ does not affect the poles $b^{(r)}_j$, $r\in [N]$, $j\in [M-N]$, of $\Psi$,
the proof is complete.
\end{proof}

\begin{remark}
We remark that the gluing conditions at all double points are preserved by the 
Darboux transformation for all $\vec t$.
\end{remark}

\begin{figure}
\centering
{\includegraphics[scale=0.15,angle=0,width=0.56\textwidth]{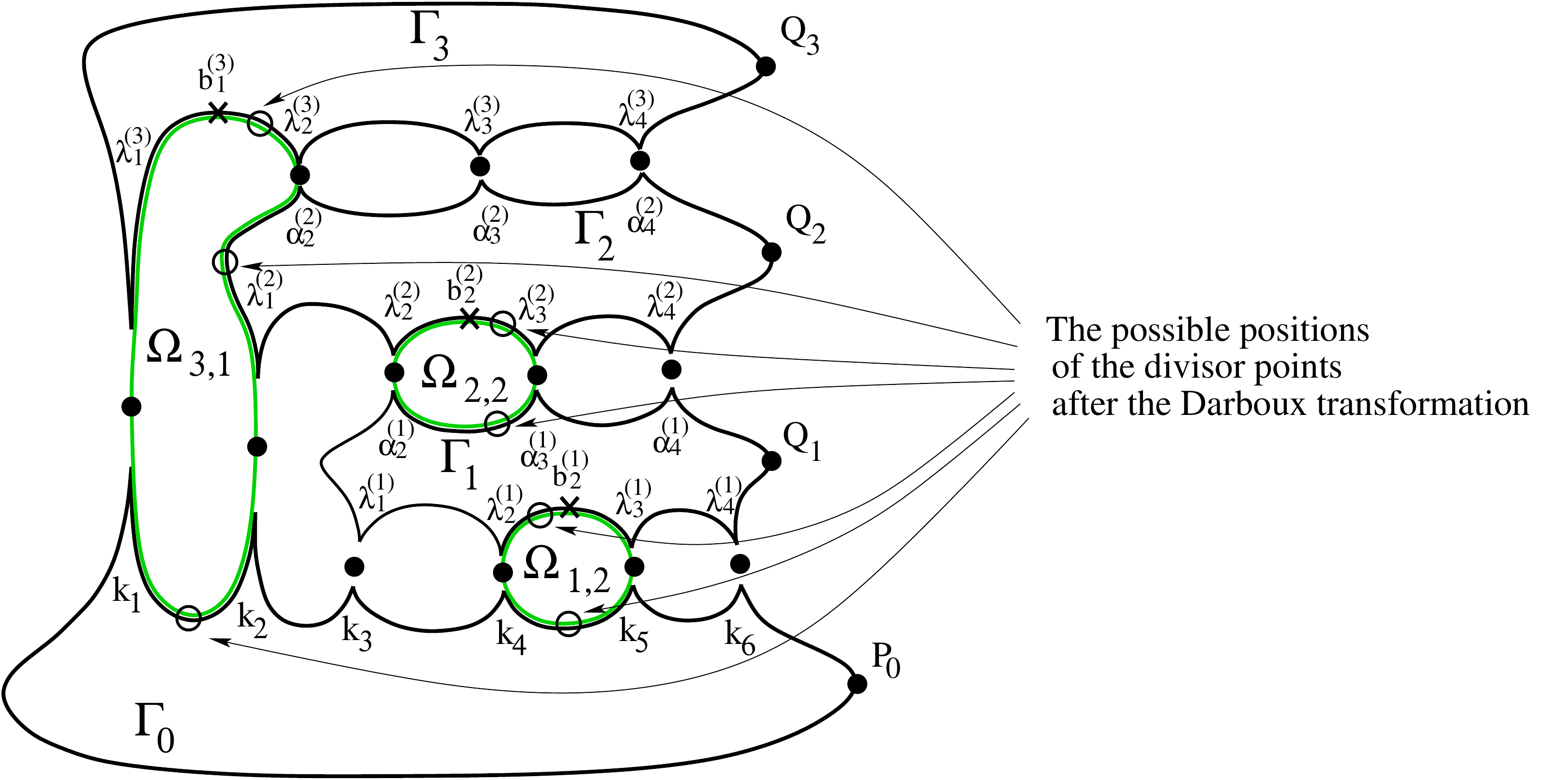}}
\caption{\sl The KP effective divisor may be described as the following action of the Darboux (dressing) $D^{(N)}$ transformation on the effective vacuum divisor: poles are allowed to move inside the finite ovals in such a way that $N$ poles lie in $\Gamma_0$ and $M-N-1$ poles lie in each copy $\Gamma_r$, $r\in [N]$.} 
\label{fig:effdiv}
\end{figure}

Let us now characterize the effective divisor (see also Figure~\ref{fig:effdiv}).

\begin{theorem}\label{theo:eff_div}
Under the hypotheses of Theorem \ref{theo:noneff_div}, the normalized wavefunction
\begin{equation}
\tilde \Psi (P, \vec t) = \frac{D^{(N)} \Psi(P, \vec t)}{D^{(N)} \Psi(P, \vec 0)}, \quad P\in \Gamma \backslash \{ P_0 \}
\end{equation}
has the following analytic properties:
\begin{enumerate}
\item\label{def:psin:divprop} \textbf{Characterization of the effective divisor:} ${\tilde \Psi}(\zeta, \vec t)$ is meromorphic for $P\in \Gamma\backslash \{ P_0\}$, $\zeta =\zeta(P)$, and it possesses an $N(M-N)$ divisor of real simple poles ${\mathcal D} = \{ \gamma^{(0)}_s, \gamma^{(r)}_l, \, s \in [N], r\in [N], l \in [M-N-1] \}$ independent on $\vec t$ and with the following properties:
\begin{enumerate}
\item \label{def:psin:prop1} The restriction of $\tilde\Psi(\zeta,\vec t)$ to $\Gamma_0$ (we use again notations (\ref{eq:notat}) and denote it ${\tilde \Psi}^{(0)}(\zeta,\vec t)$) is the normalized Sato KP wave function:
\begin{equation}
{\tilde \Psi}^{(0)}(\zeta,\vec t)= \frac{D^{(N)} e^{\theta (\zeta, \vec t)}}{D^{(N)} e^{\theta (\zeta, \vec 0)}}, 
\end{equation}
where  $\theta (\zeta, \vec t)$ is as in (\ref{eq:theta}) and $D^{(N)}=\partial_x^N -w_1 (\vec t) \partial_x^{N-1}-\cdots w_N(\vec t) $ is the Darboux transformation associated to the soliton data $({\mathcal K}, [A])$, with ${\mathcal K} = \{\kappa_1< \cdots <\kappa_M\}$. In particular the divisor of poles restricted to $\Gamma_0$ is $\{ \gamma^{(0)}_s,  s \in [N]\}$ and is the zero set of the characteristic polynomial of $D^{(N)}$ at time $\vec t=\vec 0$, {\sl i.e.} 
\[
\zeta^N -w_1 (\vec 0) \zeta^{N-1}-\cdots w_N(\vec 0) =0
\]
\item\label{def:psin:prop2} For any fixed collection of KP times $\vec t$ and $r\in[N]$ the restriction of $\tilde\Psi(\zeta,\vec t)$ to $\Gamma_r$ (we use again notations (\ref{eq:notat}) and denote it ${\tilde \Psi}^{(r)} (\zeta, \vec t)$) is meromorphic in $\zeta$ on $\Gamma_r$, and its divisor consists of $M-N-1$ poles $\gamma^{(r)}_1$, \ldots,  $\gamma^{(r)}_{M-N-1}$;
\item\label{def:psin:finov} There is exactly one divisor pole in each finite oval: $\# \left( {\mathcal D} \cap \Omega_{r,l} \right) = 1$, for all $r\in [N]$, $r\in [M-N]$. Here we use the counting rule, see Definition~\ref{def:counting}; 
\item\label{def:psin:infov} There is no pole in the infinite oval: ${\mathcal D} \cap \Omega_{r,l} = \emptyset$.
\end{enumerate}

\item\label{def:psin:prop3} \textbf{Matching rules at the double points:} at all double points of $\Gamma$ obtained by gluing a pair
of points $P_1\in\Gamma_{r_i}$ and $P_2\in\Gamma_{r_j}$ the wave function has the same value for all $\vec t$: 
${\tilde \Psi}(P_1,\vec t)\equiv {\tilde \Psi}(P_2,\vec t)$. More precisely,
\begin{enumerate}
\item For $r=1$ and $l\in [M-N+1]$ the value of ${\tilde \Psi}^{(1)}$ at the point $\lambda_l^{(1)}$ is equal to the value of ${\tilde \Psi}^{(0)}$ at the point $\kappa_{N+l-1}$:
\begin{equation}\label{eq:PPn1}
{\tilde \Psi}^{(1)} (\lambda_l^{(1)} , \vec t) = {\tilde \Psi}^{(0)} (\kappa_{N+l-1}, \vec t), {\quad \forall \vec t};
\end{equation}
\item For $r\in[2,N]$ the value of ${\tilde \Psi}^{(r)}$ at the point $\lambda_1^{(r)}$ is equal to the value of ${\tilde \Psi}^{(0)}$ at the point 
$\kappa_{N-r+1}$:
\begin{equation}\label{eq:PPrn1}
{\tilde \Psi}^{(r)} (\lambda_1^{(r)} , \vec t) ={\tilde \Psi}^{(0)} (\kappa_{N-r+1}, \vec t), {\quad \forall \vec t};
\end{equation}
\item For $r\in[2,N]$ and $l\in[2,M-N+1]$ the  value of ${\tilde \Psi}^{(r)}$ at the point $\lambda_l^{(r)}$ is equal to the value of ${\tilde \Psi}^{(r-1)}$ at the point $\alpha_l^{(r-1)}$:
\begin{equation}\label{eq:PPrn2}
{\tilde \Psi}^{(r)} (\lambda_l^{(r)} , \vec t) = {\tilde \Psi}^{(r-1)} (\alpha_l^{(r-1)}, \vec t), {\quad \forall \vec t}.
\end{equation}
\end{enumerate}
\end{enumerate}
\end{theorem}

{\begin{definition}\label{def:counting} \textbf{(The counting rule)} For $\vec t$ fixed, we call the divisor $\mathcal{ D} (\vec t)$ of ${\tilde \Psi}(\zeta; \vec t)$ generic, if no points of $\mathcal{ D} (\vec t)$ lie at the double points of $\Gamma$, otherwise we call it non generic.
In the non generic case, we have at least a zero (resp. a pole) of $\tilde \Psi(P,\vec t)$ at a double point $P=X$ belonging to a pair of finite ovals, that is $X\in \Gamma_{r_1} \cap \Gamma_{r_2}$, ($r_1 \not = r_2$). In such case, the function $\tilde \Psi(P,\vec t)$ has simple zeroes (resp. simple poles) at $X$ at both the components $\Gamma_{r_1}$ and $\Gamma_{r_2}$, i.e. we have a collision of 2 divisor points 
$\gamma^{(r_1)}_{k_1}\in\Gamma_{r_1}$ and $\gamma^{(r_2)}_{k_2}\in\Gamma_{r_2}$. 
Then we use the following counting rule: if we have a pair of divisor points at a double point, then one of them is assigned to the first oval and the other is assigned 
to the second oval.

The counting rule has the following interpretation. If we have a divisor point at a double point, we may apply a generic small shift 
of $\vec t$, and we obtain a generic divisor with the property formulated in Theorem~2 (see Figure \ref{fig:countrule2}).
\end{definition}}

\begin{figure}
\centering
{\includegraphics[scale=0.15,angle=0,width=0.44\textwidth]{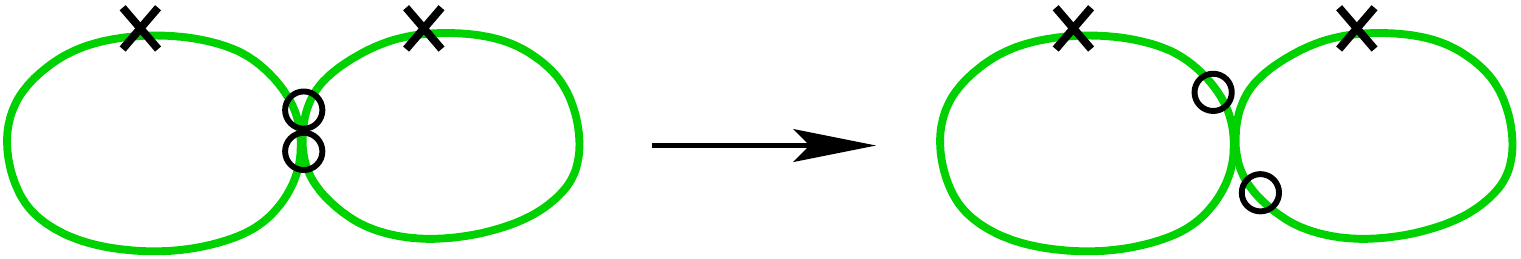}}
\caption{Fig. 8: A small perturbation of a pair of zeroes at a double point.}
\label{fig:countrule2}
\end{figure}

\section{The reducible $\mathtt M$--curve and the effective divisor for soliton data in $Gr^{\mbox{\tiny TP}} (N,M)$}\label{sec:mainres}

In this section, in Theorem \ref{theo:divisor} we associate a spectral curve and an effective divisor to any given soliton datum $[\hat A]\in Gr^{\mbox{\tiny TP}} (N,M)$ represented by the normalized band matrix $\hat A$ as in Definition \ref{def:bandmat}. More precisely, for any fixed sufficiently large positive parameter $\xi$, we construct such algebraic-geometrical data on the spectral curve $\Gamma(\xi)$ as in Definition \ref{def:gamma_xi}.
The proof of Theorem \ref{theo:divisor} follows from Theorem \ref{theo:main}, where we construct a vacuum wavefunction on $\Gamma(\xi)$ satisfying all the properties settled in Definition \ref{def:psi1}. We prove Theorem \ref{theo:main} in the next section: such proof is constructive and is divided into an algebraic and an analytic part.
In particular, we explicitly relate the elementary Darboux transformations to a sequence of maps 
$Gr^{\mbox{\tiny TP}} (1,M-N+1)\leftarrow Gr^{\mbox{\tiny TP}} (2,M-N+2)\leftarrow\ldots\leftarrow 
Gr^{\mbox{\tiny TP}} (r-1,M-N+r-1) \leftarrow Gr^{\mbox{\tiny TP}} (r,M-N+r)\leftarrow\ldots \leftarrow  Gr^{\mbox{\tiny TP}} (N,M)$, by constructing a specific upper triangular representative matrix $\hat A(\xi)$ for the soliton datum in $Gr^{\mbox{\tiny TP}} (N,M)$.
Such matrix $\hat A(\xi)$ rules the exact behavior of the vacuum wavefunction at all marked points and it is obtained as a perturbation of the band matrix $\hat A$  which rules the dominant behavior of the vacuum wavefunction at the same marked points. Finally, by construction, the two matrices represent the same point in $Gr^{\mbox{\tiny TP}} (N,M)$: $[\hat A(\xi)] = [\hat A]$.

\begin{definition}
\label{def:gamma_xi}
\textbf{Specification of positions of the double points for the rational spectral curve.}
Denote by $\Gamma(\xi)$, $\xi>1$, the spectral curve from Definition~\ref{def:gamma} with the following collection of double points:
on each $\Gamma_r$, $r\in[N]$
\begin{equation}\label{eq:lambdas}
\lambda_1^{(r)} =0, \;\; \lambda_l^{(r)} = - \xi^{2(l-2)},\;\; l\in [2,M-N+1],
\end{equation}
\begin{equation}\label{eq:alphas}
\alpha_s^{(r)} = \xi^{2s-5},\quad s\in [2,M-N+1].
\end{equation}
\end{definition}

\begin{remark}
For soliton data in $Gr^{\mbox{\tiny TP}} (1,M)$, $M\ge 4$, we remark that $\Gamma(\xi)$ in Definition \ref{def:gamma_xi} is not equivalent as a complex variety to the rational degeneration of the hyperelliptic curve $\Gamma$ used in section \ref{subseec:hyper}.
\end{remark}

\begin{definition}\label{def:bandmat}
To any given element of the totally positive Grassmannian $Gr^{\mbox{\tiny TP}} (N,M)$ we associate the unique normalized band matrix:
\begin{equation}
\label{eq:our_form2}
\hat A=\left[ \begin{array}{cccccccccccc}  
\hat A_1^1 & \hat A_2^1 &  \hat A_3^1 & \hat A_4^1 &\ldots & \hat A_{M-N+1}^1& 0 & 0 & \ldots  & 0 & 0 & 0\\
0 & \hat A_2^2 &  \hat A_3^2 & \hat A_4^2 & \ldots &   \hat A_{M-N+1}^2 & \hat A_{M-N+2}^2  & 0 & \ldots & 0 & 0 & 0 \\
0 & 0 &  \hat A_3^3 & \hat A_4^3 &  \ldots  & \hat A_{M-N+1}^3 & \hat A_{M-N+2}^3 &  \hat A_{M-N+3}^3  &  \ldots & 0 & 0 & 0  \\
\vdots & & & & & \ddots & & & & & &\vdots \\
0 & 0 &  0 & \ldots & 0 & \hat A_{N-1}^{N-1}   & \ldots & \ldots & \ldots &  \hat A_{M-2}^{N-1} & \hat A_{M-1}^{N-1} & 0  \\
0 & 0 & 0 &  0 & \ldots & 0 &  \hat A_N^N & \ldots & \ldots &  \hat A_{M-2}^{N}  &  \hat A_{M-1}^{N} & \hat A_{M}^{N}  \\
\end{array}\right],
\end{equation}
\begin{equation}
\label{eq:our_form2_1}
\sum\limits_{j=r}^{M-N+r} A_j^r =1, \ \ r\in[N],
\end{equation}
and the following set of heat hierarchy solutions:
\begin{equation}\label{eq:heatsol2}
f^{(N-r+1)}(\vec t) = \sum_{j=r}^{M-N+r} \hat A^r_j e^{\theta_j}, \quad r\in [N].
\end{equation}
It is easy to check that all elements $\hat A^i_j$ in this representation are positive.
\end{definition}
\begin{remark}\label{rem:norm}
In Appendix \ref{sec:totpos} we review a certain number of properties of the band matrix $\hat A$. In particular, there 
are two different natural normalizations for banded matrices: 
\begin{enumerate}
\item Normalization (\ref{eq:our_form2_1}) is convenient in the finite-gap approach, because it corresponds to the normalization of the wavefunction for effective divisors.
\item The normalization $A^i_i=1$, $i\in[N]$, provides local affine coordinates $x_{r,s}$ on $Gr^{\mbox{\tiny TP}} (N,M)$
defined in Proposition~\ref{prop:xcoor} in Appendix~\ref{sec:totpos}.
\end{enumerate}
The connection between these two normalizations is very simple:
\begin{equation}
\label{eq:norms}
\hat A^i_j=\frac{A^i_j}{A^i_i+A^i_{i+1}+\ldots+A^i_{M-N+i}}.
\end{equation}
\end{remark}

\begin{remark}
The positive banded matrices arise in many applications, see, for example, \cite{CDMS}, \cite{BP}. 
\end{remark}

\begin{theorem}\label{theo:main}{\bf (The vacuum wavefunction on $\Gamma(\xi)$)}
 Let the soliton data ${\mathcal K} =\{\kappa_1 < \cdots <\kappa_M\}$ and $[\hat A]\in Gr^{\mbox{\tiny TP}} (N,M)$ be fixed, with $\hat A$ as in Definition \ref{def:bandmat}. Let $f^{(r)}(\vec t)$, $r\in [N]$, be the heat hierarchy solutions associated to $\hat A$ as in (\ref{eq:heatsol2}). Let $\Gamma(\xi)$ denote the curve from Definition~\ref{def:gamma_xi} for $\xi>1$.

Then for any sufficiently big $\xi$ there exists an unique collection of divisor points 
$b^{(r)}_k =b^{(r)}_k (\xi)\in ]\lambda^{(r)}_{k+1},\lambda^{(r)}_k[$ and positive coefficients $\epsilon_{{j}}^{(r)}=\epsilon_{{j}}^{(r)}(\xi)$, 
$r\in[2,N]$, $j\in[r-1]$ such that
\begin{enumerate}
\item \begin{equation}\label{eq:infinity000}
\lim\limits_{\zeta\to\infty} \Psi^{(1)} (\zeta, \vec t) = f^{(1)}(\vec t),
\end{equation}
\item  For $r\in [2,N]$
\begin{equation}\label{eq:infinity00}
\lim\limits_{\zeta\to\infty} \Psi^{(r)} (\zeta, \vec t) = \frac{f^{(r)}(\vec t) + \sum_{j=1}^{r-1} \epsilon_{{j}}^{(r)} f^{(j)}(\vec t)}{ 1+\sum_{{{j=1}}}^{r-1} \epsilon_{{j}}^{(r)}}.
\end{equation}
\item At all double points the matching conditions of Definition~\ref{def:psi1} are fulfilled.
\end{enumerate}
Moreover, if $\xi\rightarrow+\infty$, then all the parameters $\epsilon_j^{(r)}$ tend to 0.
\end{theorem}

\begin{remark}
The proof of Theorem~\ref{theo:main} provides us with a small perturbation of the band matrix $\hat A$, $\hat A (\xi)$
which represents the same point in $Gr^{\mbox{\tiny TP}} (N,M)$ as $\hat A$: $[\hat A(\xi)]=[\hat A]$. Indeed the 
following set of solutions of the heat hierarchy is uniquely associated to (\ref{eq:infinity00})
\begin{equation}\label{eq:heatxi}
 f^{(r)}_{ \xi} (\vec t) \equiv \frac{f^{(r)} (\vec t) + \sum_{j=1}^{r-1} \epsilon^{(r)}_{j} f^{(j)} (\vec t)}{1 +\sum_{j=1}^{r-1} \epsilon^{(r)}_{j}}  = \sum_{j=N-r+1}^M \hat A (\xi)^{N-r+1}_j e^{\theta_j}, \quad\quad r\in [N].
\end{equation}
Then matrix $\hat A (\xi)$ is upper triangular by construction and it is defined starting from $\hat A$ in the following way:
for each $r\in [N]$,
the $(N-r+1)$-th row of $\hat A (\xi)$ is the linear combinations of the last $r$ rows of $\hat A $ with the positive coefficients $\epsilon^{(r)}_k (\xi)$, that is
\[
\hat A(\xi)^{N-r+1}_j = \hat A^{N-r+1}_j+\sum\limits_{k=1}^{r-1} \epsilon^{(r)}_k (\xi) \hat A^{N-k+1}_j , \quad\quad j\in [M].
\]
\end{remark}

The proof of this Theorem is provided in the section \ref{sec:proof}. We end this section with the Theorem which summarizes the properties of the effective divisor of the normalized KP wavefunction $\tilde \Psi$ which is obtained dressing the vacuum wavefunction in Theorem \ref{theo:main}.

\begin{theorem}\label{theo:divisor}{\bf (The effective divisor on $\Gamma_{\xi}$)}
Let $\xi\gg1$, $1\le N<M$ be fixed and let $\Gamma=\Gamma(\xi)$ be as in Definition \ref{def:gamma}. Let the soliton data ${\mathcal K} =\{\kappa_1 < \cdots <\kappa_M\}$ and $[\hat A]\in Gr^{\mbox{\tiny TP}} (N,M)$ be fixed and let $\Psi(\zeta, \vec t)$ be the corresponding vacuum wavefunction on $\Gamma$ satisfying Theorem \ref{theo:main}. Let $D^{(N)}= \partial_x^N - w_1 (\vec t) \partial_x^{N-1}- \cdots- w_N (\vec t)$ be the Darboux (dressing) transformation associated to the soliton data. Then the effective divisor ${\mathcal D}_{\xi}$ of the normalized dressed wavefunction
\[
{\tilde \Psi} (\zeta, \vec t) = \frac{D^{(N)} \Psi(\zeta, \vec t)}{D^{(N)} \Psi(\zeta, \vec 0)}
\]
has degree $N(M-N)$ and satisfies the reality and regularity conditions. 

More precisely let us denote ${\mathcal D}^{(0)} =\{ \gamma^{(0)}_k, \, k \in [N]\}$ the set of solutions to (\ref{eq:Satodiv})
\[
(\gamma^{(0)}_l)^N - w_1 (\vec 0) (\gamma^{(0)}_l)^{N-1} - \cdots - w_{N-1} (\vec 0) \gamma^{(0)}_l-w_N (\vec 0) = 0 ,\quad\quad l\in [N]
\]
and ${\mathcal D}^{(r)}_{\xi} = \{ \gamma^{(r)}_l(\xi), \, l \in [M-N-1]\}$ the set of solutions to
\[
D^{(N)} \Psi^{(r)} (\zeta, \vec 0) =0, \quad r\in [N].
\]
Then
${\mathcal D}_{\xi} = {\mathcal D}^{(0)}\cup {\mathcal D}^{(1)}_{\xi}\cup \cdots \cup {\mathcal D}^{(N)}_{\xi}$ and it has the following properties:
\begin{enumerate}
\item ${\mathcal D}_{\xi}\cap\Gamma_0 = {\mathcal D}^{(0)}$ and all points $ \gamma^{(0)}_k $ lying in $\Gamma_0$ are pairwise different;
\item For any fixed $r\in [N]$, ${\mathcal D}_{\xi}\cap\Gamma_r = {\mathcal D}^{(r)}_{\xi}$ and all points $ \gamma^{(r)}_k (\xi)$ lying in $\Gamma_r$ are pairwise different;
\item $\mathcal{D}_{\xi} \cap \Omega_0 =\emptyset$;
\item $\mathcal{D}_{\xi} \subset \bigcup\limits_{r,j} \Omega_{r,j}$, that is each $\gamma^{(r)}_k(\xi) $ is real and lies in some finite oval;
\item Each finite oval  $\Omega_{r,j}$ contains exactly one point of $\mathcal{ D} $ according to the counting rule 
from Definition~\ref{def:counting}.
\end{enumerate}
\end{theorem}

The proof of Theorem \ref{theo:divisor} follows immediately from Theorems \ref{theo:noneff_div}, \ref{theo:eff_div} and \ref{theo:main}.

\begin{remark}
Replacing the double points  of the reduced curve $\Gamma$ by thin handles, we pass from real bounded regular solitons 
to real bounded regular quasiperiodic finite-gap solutions to the KP equations.
Moreover, in a neighborhood of the point $[A]$ we can work with a fixed curve $\Gamma$ and we have a local bijection between the admissible divisors and an open subset in $Gr^{\mbox{\tiny TP}} (N,M)$, containing $[A]$.
\end{remark}

\section{Proof of Theorem~\ref{theo:main}}\label{sec:proof}

\textbf{The main ideas of the proof.} The proof of the Theorem~\ref{theo:main} may be divided in and algebraic and an analytic part. 

In the first part of the proof we present a recursive algebraic model of the principal cell of $Gr^{\mbox{\tiny TP}} (N,M)$,
providing a birational map of this cell to the positive octant of $\RR^{N(M-N)}$. In fact this construction can be treated 
as a special case of the Whitney theorem \cite{W}. 
We start defining natural projections: $\pi_N:Gr^{\mbox{\tiny TP}} (N,M)\rightarrow Gr^{\mbox{\tiny TP}} (N-1,M-1)$. 
If $\hat A$ is the \textbf{band} matrix of Definition~\ref{def:bandmat} representing the class $[\hat A]\in Gr^{\mbox{\tiny TP}} (N,M)$, then, removing 
the first row and the first column of $\hat A$ we obtain a point in $Gr^{\mbox{\tiny TP}} (N-1,M-1)$, represented by a $(N-1)\times (M-1)$ 
band matrix with the same normalization of the initial $\hat A$. Therefore we have a chain of maps:
\begin{equation}
\label{eq:chain1}
Gr^{\mbox{\tiny TP}} (N,M)\rightarrow \ldots\rightarrow  Gr^{\mbox{\tiny TP}} (r,M-N+r)
\rightarrow  Gr^{\mbox{\tiny TP}} (r-1,M-N+r+1) \rightarrow \ldots\rightarrow Gr^{\mbox{\tiny TP}} (1,M-N+1),
\end{equation}
which we use to settle a recursive construction of the totally positive band matrix $\hat A$ starting from the last row and moving up (Theorem~\ref{lemma:vectors}). The key point in such construction is Lemma~\ref{lemma:PAL}, where, to the $r$--th row of $\hat A$, we associate a minimal set of $M-N$ vectors with non--negative entries  
with the following property: \textbf{the band matrix obtained by adding the $(r-1)$--th row is totally positive if and only if 
the row $r-1$ is a linear combination with  
positive coefficients of these basic vectors and of the corresponding pivot vector.} The recursive procedure of Theorem~\ref{lemma:vectors} for the given band matrix $\hat A$ then follows from the following two properties: 1) at any fixed step, the positive coefficients in Lemma~\ref{lemma:PAL} are uniquely defined by the entries of the $r$--th row of $\hat A$, and 2) the minimal set of vectors associated with the $(r-1)$--th row 
is computed starting from the set of vectors associated with the $r$--th row via an explicit transition matrix. 

The second part of the proof uses the following idea: assume that we have a Riemann sphere with $M-N$ divisor points, $M-N+1$ ``input'' points and $M-N$ ``output'' points. Then the value of the wave function at each output point 
is a linear combination of the values of the wave function at the input points. The coefficients of this transition 
matrix depend on the positions of the points and the divisor. \textbf{We show that the transition matrices arising in 
Theorem~\ref{lemma:vectors} admit arbitrary good approximation by transition matrices associated with Riemann spheres 
by choosing proper marked points.} Therefore we can construct a family of spectral curves with 
divisors such that the asymptotics of the wave function at double points for large $\xi$ is ruled by the algebraic
construction of the first part of the proof. 

Moreover, this approximation is corrected at each step so that we obtain exact matching of the vacuum wavefunction at the double points and control its value at the Darboux points. In particular, we prove that 
$\Psi^{(r)}(\zeta,\vec t)$ is positive if $\zeta>0$ for all $\vec t$. Thus, the zeroes of the vacuum 
wave function are located at the required positions.

\subsection{The algebraic part}

In this section $\RR_{+}^n$ denotes the positive octant in the space $\RR^n$: $x_j>0$, $j\in[n]$. Let us also remark that
the positive octant in $\RR^{n}$ is naturally isomorphic to a collection of $n+1$ positive numbers $\{\hat B_1,
\hat B_2,\ldots,\hat B_{n+1}\}$ such that 
$$
\hat B_1+\hat B_2+\ldots+\hat B_{n+1}=1.
$$
\begin{equation}
\label{eq:birat1}
\hat B_1=\frac{1}{1+x_1+\ldots+x_{n}}, \ \ \hat B_{j+1}=\frac{x_{j}}{1+x_1+\ldots+x_{n}}, \ \ j\in[n].
\end{equation}

Let us start from $N=1$.
In the case of $Gr^{\mbox{\tiny TP}} (1,M)$ we have a very simple characterization: the matrix $\hat A$ is totally positive if
and only if it may be represented as:
\begin{equation}
\label{eq:birat2}
\hat A = \hat B_1\cdot [1,0,0,\ldots,0] + \hat B_2\cdot [0,1,0,\ldots,0]+\ldots+ \hat B_M\cdot [0,0,0,\ldots,1],
\end{equation}
where all $\hat B_i$ are positive. We fix the unique representative in $[\hat A]$ by assuming:
\begin{equation}
\label{eq:birat3}
\hat B_1+\hat B_2+\ldots+\hat B_M=1.
\end{equation}
Formulas (\ref{eq:birat1})-(\ref{eq:birat3}) define a birational map $\RR_{+}^{(M-1)}\rightarrow Gr^{\mbox{\tiny TP}} (1,M)$.

In the next Lemma we express the first row of the band matrix $\hat A$ as a linear combination with positive coefficients of a minimal set of $M-N+1$ vectors whose non--negative entries depend only on the $(N-1)\times (N-1)$ minors of the band matrix obtained eliminating the first row and the first column of $\hat A$.

\begin{lemma}\label{lemma:PAL} (\textbf{Principal Algebraic Lemma}).
Assume that $\hat A$ is an $N\times M$  matrix in banded form such that $\hat A^i_i>0$, { $\hat A^i_j=0$, if and only if} $j<i$ or $j>M-N+i$ {and} $\sum_{j=1}^{M} \hat A^i_j=1$ for all $i\in [N]$. Assume also that $\hat A$ has the following property:
after removing the first row and the first column from $\hat A$  we obtain a matrix ${\hat A}^{(0)}$ such that $[{\hat A}^{(0)}]\in Gr^{\mbox{\tiny TP}}(N-1,M-1)$. 

Then $[\hat A] \in Gr^{\mbox{\tiny TP}}(N,M)$ if and only if there exist $\hat B_n>0$, $n\in {[M-N+1]}$, such that $\sum\limits_{n=1}^{M-N+1} \hat B_n=1$ and the first line of $\hat A$ can be represented in the following form
\begin{equation}
\label{eq:first-line}
[\hat A^1_1,\hat A^1_2,\ldots,{\hat A^1_{M-N+1},0,\dots,0}]=\sum_{n=1}^{M-N+1} \hat B_n {\hat E}^n,
\end{equation}
where 
${\hat E}^n$ denotes the following collection of vectors
\begin{equation}
\label{eq:calA}
{\hat E}^1=[1,0,0,\ldots,0],
\quad\quad
{\hat E}^n  \displaystyle =[0, E^n_2, E^n_3,\ldots, E^n_n, 0,\ldots 0], \quad n\in [2,M-N+1],
\end{equation}
\[
E^n_j = \frac{\Delta_{[j,n+1,\ldots,n+N-2]}}{ \sum_{s=2}^{n} \Delta_{[s, n+1,\dots,n+N-2]}}, \quad j\in [2,n].
\]
Moreover, in such case  
\begin{equation}\label{eq:alpha1}
\hat B_n = \left\{\begin{array}{ll} 
{\hat A}^1_{1},&\quad\quad n=1, \\[0.5ex]
\frac{\Delta_{[n,\dots,n+N-1]}\left( \sum_{s=2}^{n} \Delta_{[s, n+1,\dots,n+N-2]}\right)}{\Delta_{[n,\dots,n+N-2]} \Delta_{[n+1,\dots,n+N-1]} }, &\quad\quad n\in [2,M-N+1].
\end{array}
\right.
\end{equation} 
\end{lemma}

\begin{proof}
To start with, let us present an alternative definition of the vectors 
${\hat E}^n$. Consider 
the matrix $\hat A^{[2,3,\ldots,N]}_{[2,3,\ldots,n+N-2]}$ obtained from $\hat A$ by taking the consequent columns $2$, $3$, \ldots, $n+N-2$ and removing the first row. Using the same Gaussian elimination process as above we can transform $\hat A^{[2,3,\ldots,N]}_{[2,3,\ldots,n+N-2]}$ to the banded form. Denote by ${\mathcal A}^{'n}$ 
the vector obtained from the first row of the banded form of $\hat A^{[2,3,\ldots,N]}_{[2,3,\ldots,n+N-2]}$ by adding one zero at the left-hand side and $M-N-n+2$ zeroes at the right-hand side.

The Gaussian elimination does not affect $(n-1)$-order minors, formed by the last $n-1$ rows and arbitrary columns. Therefore we have the following formulas:
$$
({\mathcal A}^{'n})_j=\left\{\begin{array}{ll}
\frac{\Delta_{[j,n+1,\ldots,n+N-2]}}{\Delta_{[n+1,\ldots,n+N-2]}} & j\in [2,n],\\
0 & j=1 \ \ \mbox{or} \ \ j>n+1,
\end{array}\right.
$$
and ${\hat E}^n= \left({\mathcal A}^{'n}\right)\cdot
\left(\sum_{j=2}^{n} ({\mathcal A}^{'n})_j \right)^{-1}$.

Let us replace the first row of $\hat A$ by ${\hat E}^n$, and denote the new matrix by 
$\tilde A^{(n)}$. Denote by $\tilde A^{(n)}_{[j,j+1,\ldots,j+N-1]}$  the $N\times N$ submatrices 
of $\tilde A^{(n)}$ formed by $N$ consecutive columns starting from the column $j$, $j\ge 2$. 
If $j\le n-1$, the first row of $\tilde A^{(n)}_{[j,j+1,\ldots,j+N-1]}$ is a linear combination of the 
other rows, and $\det(\tilde A^{(n)}_{[j,j+1,\ldots,j+N-1]})=0$. If $j>n$, all elements of the first
row are equal to $0$ and $\det(\tilde A^{(n)}_{[j,j+1,\ldots,j+N-1]})=0$. If $j=n$, the matrix 
$\tilde A^{(n)}_{[j,j+1,\ldots,j+N-1]}$  is {lower}-triangular, therefore we have 
$$
\det(\tilde A^{(n)}_{[j,j+1,\ldots,j+N-1]})=\delta^{n}_j \cdot 
\frac{\Delta_{[n,n+1,\ldots,n+N-2]} \Delta_{[n+1,n+2,\ldots,n+N-1]}}{\left( \sum_{s=2}^{n} \Delta_{[s, n+1,\dots,n+N-2]}\right)}, 
$$
where $ n,j\in [2,M-N+1]$, and
$\delta^i_j$ denotes the standard Kronecker symbol.
As a corollary we immediately obtain that the vectors ${\hat E}^n$, $n\in [0,M-N]$ are 
linearly independent, therefore any vector with zero elements in the positions $M-N+2$, $M-N+3$, \ldots, $M$ can be uniquely represented as a linear combination of these vectors. 

Denote by $\tilde A$ the matrix, obtained from $\hat A$ by replacing the first row with the 
vector, defined by the formula (\ref{eq:first-line}). Then 
$$
\Delta_{[n,n+1,\ldots,n+N-1]}(\tilde A) = \hat B_n \cdot 
\frac{\Delta_{[n,n+1,\ldots,n+N-2]} \Delta_{[n+1,n+2,\ldots,n+N-1]}}{\left( \sum_{s=2}^{n} \Delta_{[s, n+1,\dots,n+N-2]}\right)}, 
$$
where $n>0$,
$$
\Delta_{[1,2,\ldots,N]}(\tilde A) = \Delta_{[1,2,\ldots,N]}.
$$
Therefore we have $\tilde A =\hat A$ if and only if $\hat B_n$ are defined by (\ref{eq:alpha1}) for $n>1$, $\hat B_1=\hat A^1_1$. 

If all minors of the matrix $\hat A$ are strictly positive, then all $\hat B_n>0$. Conversely, if all $\hat B_n>0$, then all $N$-order minors of $\hat A$ formed by consequent columns are strictly positive, and applying Lemma~\ref{lem:l2} we obtain that all $N$-order minors are strictly positive. It completes the proof. 
\end{proof}

\begin{corollary}
\label{cor:alg1}
Assume that the matrix $\hat A$ is the same as in the Principal Algebraic Lemma, and all $N\times N$ minors of $\hat A$ are strictly positive.  Denote by $\breve A$ the matrix, obtained from $\hat A$ by removing the last $s$ columns, $s<M-N$. If we apply to $\breve A$ the same procedure as in the Principal Algebraic Lemma, we obtain 
a collection of vectors $\breve {E}^n$, $n=1,2,\ldots,M-N-s+1$, where 
$\breve {E}^n$ is obtained from ${\hat E}^n$ by removing the last
$s$ zeroes.
\end{corollary}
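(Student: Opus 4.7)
The plan is to observe that the vectors $\hat E^n$ produced by the Principal Algebraic Lemma depend only on a limited portion of $\hat A$, and then verify that this portion is untouched when passing from $\hat A$ to $\breve A$. Specifically, for $n\in [2,M-N+1]$, the entries of $\hat E^n$ are
\[
E^n_j = \frac{\Delta_{[j, n+1, \ldots, n+N-2]}}{\sum\limits_{t=2}^{n} \Delta_{[t, n+1,\ldots, n+N-2]}}, \quad j\in[2,n],
\]
with all other entries zero. These minors are formed from the last $N-1$ rows of $\hat A$ and from columns whose indices do not exceed $n+N-2$.

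I would proceed as follows. First, I would show that for the truncated matrix $\breve A$, which has $M-s$ columns, the Principal Algebraic Lemma construction yields vectors $\breve E^n$ for $n=1,2,\ldots, (M-s)-N+1 = M-N-s+1$. For $n=1$ the vector is just $[1,0,\ldots,0]$ of length $M-s$, which is plainly $\hat E^1$ with $s$ trailing zeros removed. For $n\in [2, M-N-s+1]$ the column indices $j,n+1,\ldots, n+N-2$ appearing in the relevant minors satisfy $n+N-2 \leq (M-N-s+1)+N-2 = M-s-1$, so all these columns survive the truncation.

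Second, since the last $N-1$ rows of $\breve A$ coincide with the last $N-1$ rows of $\hat A$ restricted to columns $1,\ldots, M-s$, every minor $\Delta^{\breve A}_{[j,n+1,\ldots, n+N-2]}$ equals the corresponding minor of $\hat A$. Consequently the formula above produces the same numerical coefficients $\breve E^n_j = E^n_j$ for $j\in [2,n]$. The remaining coordinates of $\breve E^n$ (positions $1$ and $n+1, \ldots, M-s$) are zero by the same argument as in the Principal Algebraic Lemma applied to $\breve A$. Thus $\breve E^n$ is precisely the vector obtained from $\hat E^n$ by deleting the last $s$ entries, all of which were zero since $n\leq M-N-s+1 \leq M-s$.

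No substantive obstacle is foreseen; the statement is essentially a compatibility check that the defining formulas for $\hat E^n$ use only columns with index at most $n+N-2$, and these indices stay within the unremoved columns for the permitted range of $n$. The only point to be careful with is bookkeeping: verifying that the positional zeros in $\hat E^n$ beyond index $n$, and in particular those in positions $M-s+1,\ldots, M$, are exactly the zeros being deleted, so that no nontrivial entry is lost in passing from $\hat E^n$ to $\breve E^n$.
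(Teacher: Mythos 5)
Your proposal is correct and follows essentially the same route as the paper, whose proof consists of the single observation that $\hat E^n$ is determined by the first $N+n-2$ columns of $\hat A$, which are untouched by removing the last $s$ columns when $n\le M-N-s+1$. Your write-up simply makes the bookkeeping (the column-index bound, equality of the relevant minors of the last $N-1$ rows, and the identification of the deleted trailing zeros) explicit.
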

\begin{proof}
The proof follows directly from the following property of the collection  
${\hat E}^n$: to define the element ${\hat E}^n$ it is sufficient 
to know the first $N+n-2$ columns of the matrix $\hat A$.
\end{proof}

In Lemma~\ref{lemma:PAL} the bases ${\hat E}^n$ are provided by explicit formulas. However, our construction of the vacuum wavefunction for a given spectral 
curve requires a recursive procedure for calculating the vectors ${\hat E}^n$, since such vectors rule the large $\xi$ 
asymptotic behavior of the vacuum wave function at all double points and Darboux points. Next Theorem is the main result of this subsection and it provides such recursive algebraic construction.

In Theorem~\ref{lemma:vectors} to each row of $\hat A$ we associate a unique collection of 
$M-N+1$ normalized vectors $\hat E^{(r)[j]}$, $j\in[M-N+1]$ with non--negative entries, such that this row is expressed as a linear combination 
of these vectors with positive coefficients $\hat B^{(r+1)}_l>0$, $r\in [0,N-1]$, $l\in [M-N+1]$. For each row we 
organize such a collection of vectors as an $(M-N+1)\times M$ matrix $\hat E^{(r)}$. We show that matrices 
$\hat E^{(r)}$ can be computed using a simple recursive procedure.

\begin{theorem}\label{lemma:vectors}(\textbf{The recursive algebraic construction}).
Let $N< M$ and let $\hat A$ be the totally non-negative $N\times M$ band matrix as in
Definition~\ref{def:bandmat} representing a given point $[\hat A]\in Gr^{\mbox{\tiny TP}} (N,M)$.

For any $r\in[0,N-1]$ let us define a normalized  $(M-N+1)\times M$  matrix  $\hat E^{(r)}$ with non-negative entries of 
the form:
\[
\hat E^{(r)}=\left[ \begin{array}{c} \hat E^{(r)[1]} \\  \hat E^{(r)[2]} \\ \vdots \\
\hat E^{(r)[M-N+1]}   
\end{array}\right],
\quad\quad\quad
\hat E^{(r)[j]}=[(\hat E^{(r)[j]})_1,(\hat E^{(r)[j]})_2,\ldots, (\hat E^{(r)[j]})_M],
\]
and normalization $\sum_{s=1}^{M} (\hat E^{(r)[j]})_s=1$, by the following formulas:
\begin{enumerate}
\item For $r=0$ the matrix $\hat E^{(0)}$ is defined by:
$(\hat E^{(0)[l]})_j = \delta_j^{N+l-1}.$
\item For $r\in[1,N-1]$ the matrix $\hat E^{(r)}$ is defined by:
$(\hat E^{(r)[1]})_j=\delta^{N-r}_j$,
and for $n\in[2,M-N+1]$ 
\begin{equation}\label{eq:Ern}
(\hat E^{(r)[n]})_j=\left\{\begin{array}{ll}  0, &\quad\quad\mbox{if} \ \ j \in [N-r] \\[0.5ex] 
\frac{\Delta_{[j;N-r+n,N-r+n+1,\ldots,N+n-2]}}
{\sum_{s=N-r+1}^{N-r+n-1} \Delta_{[s;N-r+n,N-r+n+1,\dots,N+n-2]}} ,	&\quad\quad \mbox{if}\ \  j	\in [N-r+1 , N-r+n-1] \\[0.5ex]
0, &\quad\quad \mbox{if} \ \ j \in[ N-r+n ,M].
\end{array}\right.
\end{equation} 
\end{enumerate}
For each $r\in[N]$ let us define a collection $\hat B^{(r)}_j$, $j\in[M-N+1]$ of positive coefficients as follows:
\begin{enumerate}
\item For $r=1$ the constants $\hat B^{(1)}_j$ are defined by:
$\hat B^{(1)}_j=\hat A^N_{N+j-1}$, $j\in[M-N+1].$
\item For $r\in[2,N]$ the constants $\hat B^{(r)}_j$ are defined by:
\begin{equation}\label{ex:explB}
\hat B^{(r)}_j = \left\{\begin{array}{ll}
\hat A^{N-r+1}_{N-r+1}, &\quad\quad \mbox{if} \ \ j=1 \\ 
\frac{\Delta_{[N-r+j,\dots,N+j-1]} \left( \sum_{s=N-r+2}^{N-r+j} \Delta_{[s;N-r+j+1,N-r+j+2,\dots,N+j-2]}
\right) }{\Delta_{[N-r+j+1,\dots,N+j-1]} \Delta_{[N-r+j,\dots,N+j-2]} }, &\quad\quad \mbox{if} \ \ j\in[2,M-N+1]
\end{array}
\right.
\end{equation}
\end{enumerate}
Then the matrices  $\hat E^{(r)}$ and the coefficients  $\hat B^{(r)}_j$ have the following properties:
\begin{enumerate}
\item The constants $\hat B^{(r)}_l$ are normalized:
$\sum_{s=1}^{M-N+1} \hat B^{(r)}_s=1$.
\item For each $r\in[1,N]$ we have:
\begin{equation}\label{eq:infinity0}
\hat A^{[N-r+1]}=\sum\limits_{j=1}^{M-N+1} \hat B^{(r)}_j\hat E^{(r-1)[j]},
\end{equation}
\item For each $r\in[1,N]$ we have:
{
\begin{equation}\label{eq:glue0}
\hat E^{(r)[2,M-N+1]} = {\mathcal B}^{(r)} \hat E^{(r-1)}, 
\end{equation}
}
where ${\mathcal B}^{(r)}$ denotes the following $(M-N)\times(M-N+1)$ matrix:
\begin{equation}
{\mathcal B}^{(r)} =\left[
\begin{array}{cccccc}
\frac{\hat B^{(r)}_1}{\hat B^{(r)}_1} & 0 & 0 & \ldots & 0 & 0 \\
\frac{\hat B^{(r)}_1}{\hat B^{(r)}_1+\hat B^{(r)}_2} & 
\frac{\hat B^{(r)}_2}{\hat B^{(r)}_1+\hat B^{(r)}_2} & 0 & \ldots & 0 & 0 \\
\frac{\hat B^{(r)}_1}{\hat B^{(r)}_1+\hat B^{(r)}_2+\hat B^{(r)}_3 } & 
\frac{\hat B^{(r)}_2}{\hat B^{(r)}_1+\hat B^{(r)}_2+ \hat B^{(r)}_3 } & 
\frac{\hat B^{(r)}_3}{\hat B^{(r)}_1+\hat B^{(r)}_2+ \hat B^{(r)}_3 } & 
\ldots & 0 & 0 \\
\vdots & \vdots & \vdots & \ddots & 0 & 0 \\
\frac{\hat B^{(r)}_1}{\sum_{l=1}^{M-N}\hat B^{(r)}_l } & 
\frac{\hat B^{(r)}_2}{\sum_{l=1}^{M-N}\hat B^{(r)}_l } & 
\frac{\hat B^{(r)}_3}{\sum_{l=1}^{M-N}\hat B^{(r)}_l } & 
\ldots & 
\frac{\hat B^{(r)}_{M-N}}{\sum_{l=1}^{M-N}\hat B^{(r)}_l } & 
0 \\
\end{array}
\right]
\end{equation}
\end{enumerate}
\end{theorem}

\begin{remark}\label{rem:ultime}
For any fixed $r\in [N]$, the elements of the matrix $\hat E^{(r)}$ and  the coefficients ${\hat B}^{(r)}_j$ are subtraction free rational expressions in the minors of $\hat A$ formed with its last $k$ rows for $k\le r$, so they may be expressed as subtraction free rational expressions in the elements $x_{l,s}$, $l\in [r]$, $s\in [M-N]$ of the FZ-basis following Corollary \ref{cor:ultime}.
As a consequence, all the identities in the above Theorem may be expressed in invariant form, that is they are associated 
to the given point in the Grassmannian and not to the representative matrix $\hat A$. 
\end{remark}

The proof of the Theorem is based on the following two Lemmas:

\begin{lemma}\label{lemma:poles}
Let $\hat A$ be a totally positive $N\times M$ matrix in banded as in Definition \ref{def:bandmat} and
let $s\in [N-1]$ be fixed. Let us define
\[
{\hat B}_j  = \left\{ \begin{array}{ll} \displaystyle {\hat A}^{s}_{s}, & \quad j=1		\\
\frac{\Delta_{[s+j-1,\dots,N+j-1]}\cdot
\left(\sum_{k=1}^{j-1} \Delta_{[s+k;s+j,\dots,N+j-2]}\right)}{\Delta_{[s+j-1,\dots,N+j-2]}\Delta_{[s+j,\dots,N+j-1]}} , &\quad j\in [2,M-N+1]{.}
\end{array}
\right.
\]
Then
\[
\sum\limits_{j=1}^k {\hat B}_j = \frac{\sum_{j=1}^k \Delta_{[s+j-1;s+k,\dots N+k-1]}}{\Delta_{[s+k,\dots,N+k-1]}}, \quad\quad k\in [M-N+1].
\]
In particular
\[
\sum_{j=1}^{M-N+1} {\hat B}_j = \frac{\sum_{j=1}^{M-N+1} \Delta_{[s+j-1;M-N+s+1,\dots M]}}{\Delta_{[M-N+s+1,\dots,M]}} = \sum_{j=s}^{M-N+s} {\hat A}^{s}_{j} \equiv 1.
\]
\end{lemma}

\begin{proof}
The proof is by induction in $l$ using the minors identity 
\[
\begin{array}{l}
\displaystyle
\Delta_{[s+j-1;s+k-1,\dots ,N+k-2 ]}\Delta_{[s+k,\dots ,N+k-1]} +\Delta_{[s+k-1,\dots ,N+k-1]}
\Delta_{[s+j-1; s+k,\dots ,N+k-2 ]} = \\[0.5ex]
\quad \displaystyle \Delta_{[s+j-1; s+k,\dots, N+k-1]} \Delta_{[s+k-1,\dots, N+k-2]}
\end{array}
\]
where $s< s+j-1< s+k-1 < N+k-1$.

Indeed for $l=1$ we just have
\[
{\hat B}_1 \equiv {\hat A}^{s}_{s} = \frac{\Delta_{[s\dots N]}}{\Delta_{[s+1,\dots,N]}}.
\]
Suppose the identity holds for $l=k-1$, then for $l=k$, using the minors identity, we immediately get
\[
\begin{array}{ll}
\displaystyle \sum_{l=1}^k {\hat B}_l &= \displaystyle\frac{\Delta_{[s+k-1,\dots,N+k-1]} \sum_{j=1}^{k-1} \Delta_{[s+j;s+k,\dots,N+k-2]}}{\Delta_{[s+k-1,\dots,N+k-2]}\Delta_{[s+k,\dots,N+k-1]}}+\frac{\sum_{j=1}^{k-1} \Delta_{[s+j-1;s+k-1,\dots, N+k-2]}}{\Delta_{[s+k-1,\dots,N+k-2]}}\\[0.5ex]
&=\displaystyle \frac{\Delta_{[s; s+k,\dots, N+k-1]} +\Delta_{[s+k-1,\dots ,N+k-1]} }{\Delta_{[s+k,\dots, N+k-1]}} +
\displaystyle \frac{\sum_{j=2}^{k-1} \Delta_{[s+j-1;s+k,\dots, N+k-1 ]} }{\Delta_{[s+k,\dots, N+k-1 ]}}
\end{array}
\]
During this calculation we used the following formula: due to the banded structure of $\hat A$ and 
$k\in[2,M-N+1]$:
$\frac{\Delta_{[s; s+k-1,\dots, N+k-2]}}{\Delta_{[s+k-1,\dots, N+k-2]}}
=\frac{\Delta_{[s; s+k,\dots, N+k-1]}}{\Delta_{[s+k,\dots, N+k-1]}}=\hat A^s_s$.
\end{proof}

\begin{lemma}\label{lemma:poles2}
Let $\hat A$ be a totally positive $N\times M$ matrix in banded form as in Definition \ref{def:bandmat}.
Let $r\in[N-1]$, $k\in[2,M-N+1]$ and $j\in [N-r+1,N-r+k-1]$.
Then we have the following identity
\begin{equation}
\label{eq:ident2}
\displaystyle
\sum\limits_{n=1}^{k-1} 
\frac{
\Delta_{[N-r+n,\dots,N+n-1]} \cdot \Delta_{[j;N-r+n+1,\dots,N+n-2]} }{\Delta_{[N-r+n+1,\dots,N+n-1]}\cdot\Delta_{[N-r+n,\dots,N+n-2]}}=
 \frac{\Delta_{[j;N-r+k,\dots,N+k-2]} }{\Delta_{[N-r+k,\dots,N+k-2]}}
\end{equation}
\end{lemma}
\begin{proof}
The proof is again by induction. For $k=2$ we have $j=N-r+1$, and the identity is trivial. Let $k>2$, 
and suppose that for all $2\le k'\le k-2$ the identity has been proven. Then for $j\in[N-r,N-r+k-2]$
we can write
\[
\begin{array}{l}
\sum_{n=1}^{k-1} \frac{
\Delta_{[N-r+n,\dots,N+n-1]} \cdot \Delta_{[j;N-r+n+1,\dots,N+n-2]} }{\Delta_{[N-r+n+1,\dots,N+n-1]}\cdot\Delta_{[N-r+n,\dots,N+n-2]}}=
\\[0.5ex]
\quad\quad  =\sum_{n=1}^{k-2} \frac{
\Delta_{[N-r+n,\dots,N+n-1]} \cdot \Delta_{[j;N-r+n+1,\dots,N+n-2]} }{\Delta_{[N-r+n+1,\dots,N+n-1]}\cdot\Delta_{[N-r+n,\dots,N+n-2]}}+ \frac{
\Delta_{[N-r+k-1,\dots,N+k-2]} \cdot \Delta_{[j;N-r+k,\dots,N+k-3]} }{\Delta_{[N-r+k,\dots,N+k-2]}\cdot\Delta_{[N-r+k-1,\dots,N+k-3]}}=
\\[0.5ex]
\quad\quad  =\frac{\Delta_{[j;N-r+k-1,\dots,N+k-3]} }{\Delta_{[N-r+k-1,\dots,N+k-3]}}+
\frac{
\Delta_{[N-r+k-1,\dots,N+k-2]} \cdot \Delta_{[j;N-r+k,\dots,N+k-3]} }{\Delta_{[N-r+k,\dots,N+k-2]}\cdot\Delta_{[N-r+k-1,\dots,N+k-3]}}=
\\[0.5ex]
\quad\quad  =\frac{\Delta_{[j;N-r+k-1,\dots,N+k-3]} \cdot \Delta_{[N-r+k,\dots,N+k-2]}+
\Delta_{[N-r+k-1,\dots,N+k-2]} \cdot \Delta_{[j;N-r+k,\dots,N+k-3]} }{\Delta_{[N-r+k,\dots,N+k-2]}\cdot\Delta_{[N-r+k-1,\dots,N+k-3]}}=
\end{array}
\]
Applying the minor identity to the numerator, we obtain
\[
\begin{array}{l}
\sum_{n=1}^{k-1} 
\frac{
\Delta_{[N-r+n,\dots,N+n-1]} \cdot \Delta_{[j;N-r+n+1,\dots,N+n-2]} }{\Delta_{[N-r+n+1,\dots,N+n-1]}\cdot\Delta_{[N-r+n,\dots,N+n-2]}}=\frac{\Delta_{[j;N-r+k,\dots,N+k-2]} \cdot \Delta_{[N-r+k-1,\dots,N+k-3]}}{\Delta_{[N-r+k,\dots,N+k-2]}\cdot\Delta_{[N-r+k-1,\dots,N+k-3]}}=
\\[0.5ex]
\quad\quad =\frac{\Delta_{[j;N-r+k,\dots,N+k-2]}}{\Delta_{[N-r+k,\dots,N+k-2]}}.
\end{array}
\]
Assume now that $j=N-r+k-1$. Then we have only one nonzero term in our sum:
$$
\frac{
\Delta_{[N-r+k-1,\dots,N+k-2]} \cdot \Delta_{[N-r+k-1;N-r+k,\dots,N+k-3]}}{\Delta_{[N-r+k,\dots,N+k-2]}\cdot\Delta_{[N-r+k-1,\dots,N+k-3]}}=
\frac{\Delta_{[N-r+k-1;N-r+k,\dots,N+k-2]} }{\Delta_{[N-r+k,\dots,N+k-2]}}.
$$
\end{proof}

Now we are ready to prove Theorem~\ref{lemma:vectors}.

\begin{proof}
The first item follows immediately from Lemma~\ref{lemma:poles}

The second statement is exactly the Principal Algebraic Lemma, applied to the matrix, obtained from $\hat A$
by removing the first $N-r$ rows and the first $N-r$ columns. Applying the formula (\ref{eq:calA}) we 
immediately notice, indeed, that the vector $\hat E^{(r-1)[l]}$ is obtained from ${\hat E}^l$ by adding $N-r$ zeroes 
from the left.

The last statement follows immediately from the Lemma~\ref{lemma:poles2}.
\end{proof}

\medskip

\subsection{The analytic part} We do the proof recursively starting from the last line.

Let us introduce the following notation.

\begin{equation}\label{eq:psismall}
\psi^{(r)}_n (\vec t) =\sum_{j={1}}^{M} \hat E^{(r)[n]}_{j}
e^{\theta_{j}(\vec t)}
 =\frac{\sum_{j= N-r+1 }^{ N-r+n-1} \Delta_{[j;N-r+n,N-r+n+1,\ldots,N+n-2]} e^{\theta_{j} (\vec t)}} 
{\sum_{s=N-r+1}^{N-r+n-1} \Delta_{[s;N-r+n,N-r+n+1,\dots,N+n-2]}},
\end{equation}
where $\hat E^{(r)[n]}_{j}$ are as in Theorem \ref{lemma:vectors}. In Proposition \ref{prop:1}
we compute the vacuum divisor points on the component $\Gamma_1$, and we check that, at leading order in $\xi$, 
the value of the vacuum wave function at each double point coincides with that prescribed by 
Theorem~\ref{lemma:vectors}, i.e. we check that $\psi^{(0)}_n (\vec t)$ and  $\psi^{(1)}_n (\vec t)$ respectively are the leading order
behavior at $\lambda^{(1)}_n$ and $\alpha^{(1)}_n$ when $\xi\gg1$ .

\begin{prop}\label{prop:1}
Let  $\xi\gg1$ be fixed, and let $\Gamma=\Gamma(\xi)$ be as in Definition~\ref{def:gamma_xi}, with
 $\lambda_l^{(1)}$, $l\in [M-N+1]$, $\alpha_s^{(1)}$, $s\in [2,M-N+1]$, as in (\ref{eq:lambdas}) and (\ref{eq:alphas}). Let 
\[
\Psi^{(1)} (\zeta, \vec t) = \sum_{l=1}^{M-N+1} {\mathring B^{(1)}_l} \frac{\prod_{k\not = l}^{M-N+1} (\zeta -\lambda_k^{(1)}) }{\prod_{k=1}^{M-N} (\zeta -b^{(1)}_k )}
e^{\theta_{N+l-1}(\vec t)},
\]
where $\mathring B^{(1)}_l$, $b^{(1)}_k$ are unknown parameters. 
Then the requirements:
\begin{enumerate}
\item\label{eq:caseN}
$\displaystyle \lim_{\zeta\to \infty} \Psi^{(1)} (\zeta, \vec t) = \sum_{j=N}^{M}
{\hat A}^N_j e^{\theta_{j}(\vec t)},$
\item\label{eq:N-N+1}
$\displaystyle
\Psi^{(1)} (\lambda_l^{(1)}, \vec t) = e^{\theta_{N+l-1}(\vec t)}>0$, $l\in[M-N+1]$
\end{enumerate}
uniquely define the coefficients $\mathring B^{(1)}_l = {\hat A}^N_{N+l-1}$, $l\in [M-N+1]$ and the divisor points $b^{(1)}_k= b^{(1)}_k(\xi)\in ]\lambda^{(1)}_{k+1},\lambda^{(1)}_k[$, $k\in[M-N]$. Moreover $\Psi^{(1)}$ has the following properties:
\begin{enumerate}
\item $\Psi^{(1)} (\zeta, \vec t) >0$, $\forall \zeta>0,$ and for all \textbf{real} $\vec t$. In particular, all $\alpha^{(1)}_n>0$,
therefore $\Psi^{(1)} (\alpha_{n}^{(1)}, \vec t)>0$ for all $n$ and for all real $\vec t$. Moreover, we have the following 
estimate: for $\alpha^{(1)}_n$ as in (\ref{eq:alphas}), $n\in [2,M-N+1]$,
\begin{equation}\label{eq:estN}
\Psi^{(1)} (\alpha_{n}^{(1)}, \vec t)  = \displaystyle \sum_{{j=1}}^{M} E^{(1)[{{n}}]}_{j}(\xi)
e^{\theta_{j}(\vec t)} = \left(\psi^{(1)}_n(\vec t)  + \frac{\sum_{j=N+{{n}}-1}^{M} {\hat A}^{N}_{j}e^{\theta_{j}(\vec t)}
\xi^{-2(j-{{n}}-N+1)-1} }{\sum_{s=N}^{N+{{n}}-2} {\hat A}^{N}_{s}}  \right) (1+O(\xi^{-1})),
\end{equation}
where
$$
\psi^{(1)}_n(\vec t)=\sum_{j=N}^{N+{{n}}-2} {\hat A}^{N}_{j}e^{\theta_{j}(\vec t)},
$$
and for all $n\in [2,M-N+1]$ the coefficients $E^{(1)[{{n}}]}_{j}(\xi)$ defined by (\ref{eq:estN}) 
have the following properties:
\begin{enumerate}
\item $E^{(1)[n]}_{j}(\xi) \equiv 0$,  if  $j\in [N-1]$;
\item $E^{(1)[n]}_{j}(\xi)$ are rational functions in $\xi$ for all $j\in[N,M]$;
\item $\displaystyle \lim_{\xi \to \infty} E^{(1)[n]}_{j}(\xi) =\hat E^{(1)[n]}_{j}$, where $\hat E^{(1)}$ is as in 
Theorem~\ref{lemma:vectors} for $r=1$.
\end{enumerate}
\item The elementary symmetric functions in the $b^{(1)}_k(\xi)$,
\[
\Pi^{(1)}_s (\xi) = \sum_{1<j_1<j_2<\cdots<j_s\le M-N} \left(\prod_{l=1}^s b^{(1)}_{j_l}\right), \quad\quad s\in [M-N]
\] 
are rational functions in $\xi$ with coefficients depending only on $\hat A^N_{N+l}$, $l\in [0,M-N]$; 

\item For $\xi\gg1$, we have the following explicit asymptotic estimates for the divisor points $b^{(1)}_k$ on $\Gamma_1$:
\begin{equation}\label{eq:bk1}
b^{(1)}_k =-\displaystyle \frac{\sum_{l=0}^{k-1} \hat A^N_{N+l}}{\sum_{l=0}^{k} \hat A^N_{N+l}}\xi^{2(k-1)} (1+O(\xi^{-1)})),\quad\quad k\in [M-N];
\end{equation}
\end{enumerate}
\end{prop}

\begin{proof}
Requirement (\ref{eq:caseN}) is clearly equivalent to ${\mathring B^{(1)}_l} = {\hat A}^N_{N+l-1}$, $l\in [M-N+1]$. 
Requirement (\ref{eq:N-N+1}) is equivalent to:
\[
{\hat A}^{N}_{N+l-1} \frac{\prod_{k\not = j}^{M-N+1} (\lambda_j^{(1)} -\lambda_k^{(1)}) }{\prod_{k=1}^{M-N} (\lambda_j^{(1)} -b^{(1)}_k )} = \delta^j_l, \quad\quad j,l\in [M-N+1].
\]
Then, using Lemma \ref{lemma:C} in Appendix~\ref{sec:lemmas} with $c_l = {\hat A}^N_{N+l-1}$, $l\in[M-N+1]$, the coefficients $b^{(1)}_k$, $k\in [M-N]$ are uniquely defined and satisfy the required asymptotics (\ref{eq:bk1}).
If $\zeta>0$, then $\Psi^{(1)}(\zeta, \vec t)$ is a finite sum of  positive terms.
Finally the asymptotic behavior of $\Psi^{(1)} (\alpha_{n}^{(1)}, \vec t)$, $n\in [2,M-N+1]$,  easily follows 
from Lemma \ref{lemma:C}.
\end{proof}

Let us now present the construction and the properties of the vacuum wavefunction on each $\Gamma_r$, for $r\in[2,N]$.

\begin{theorem}\label{theo:main0}
Let $\xi>1$ fixed and sufficiently big, $\lambda_l^{(r)}$, $l\in [M-N+1]$, $\alpha_s^{(r)}$, $s\in [2,M-N+1]$, $r\in [N]$, as in (\ref{eq:lambdas}) and (\ref{eq:alphas}), $f^{(i)} (\vec t)$, $i\in [N]$ as in (\ref{eq:heatsol2}) and let $\Psi^{(1)} (\zeta, \vec t)$ satisfy Proposition \ref{prop:1}. 

For $r\in [2,N]$ let $\Psi^{(r)} (\zeta, \vec t)$ as in (\ref{eq:psi0}) with $V^{(r)}_l (\vec t)$ as in (\ref{eq:V0}), $l\in [M-N+1]$. Then for  $r\in[2,N]$  properties
\begin{equation}\label{eq:infinity002}
\lim_{\zeta\to\infty} \Psi^{(r)} (\zeta, \vec t) = \frac{ f^{(r)}(\vec t) + \sum_{j=1}^{r-1} \epsilon_{j}^{(r)} f^{(j)}(\vec t)}{ 1+\sum_{j=1}^{r-1} \epsilon_{{j}}^{(r)}},
\end{equation}
\begin{equation}\label{eq:comp0}
\displaystyle \Psi^{(r)} (\lambda_{{n}}^{(r)}, \vec t) =\left\{ \begin{array}{ll} \displaystyle 
\displaystyle e^{\theta_{N-r+1}(\vec t)}, & \quad {{n}}=1\\[0.5ex]
\displaystyle \displaystyle \Psi^{(r-1)}  (\alpha_{{n}}^{(r-1)}, \vec t),
&\quad {{n}}\in [2,M-N+1],
\end{array}
\right.
\end{equation}
uniquely define the coefficients $\mathring B^{(r)}_l\equiv\displaystyle \frac{B^{(r)}_l}{ 1+\sum_{k=i}^{r-1} \epsilon_k^{(r)}}$, the parameters $\epsilon_j^{(r)}$ and the divisor points $b^{(r)}_k\in ]\lambda^{(r)}_{k+1},\lambda^{(r)}_k[$. Moreover 
\begin{enumerate}
\item $B^{(r)}_l= B^{(r)}_l(\xi)$, $\epsilon_j^{(r)}=\epsilon_j^{(r)}(\xi)$ and the elementary symmetric functions in the divisor points $b^{(r)}_k=b^{(r)}_k(\xi)$,
\[
\Pi^{(r)}_s (\xi) = \sum_{1<j_1<j_2<\cdots<j_s\le M-N} \left(\prod_{l=1}^s b^{(r)}_{j_l}\right), \quad\quad s\in [M-N]
\] 
are all rational function in $\xi$ with coefficients depending only on $x_{l,s}$, $l\in [r]$, $s\in [M-N]$;
\item For $\xi\gg1$, $B^{(r)}_{{l}}>0$, $\epsilon^{(r)}_{j}>0$ and:
\begin{equation}\label{eq:Bs}
B^{(r)}_{{l}} = \left\{\begin{array}{ll} 
{\hat A}^{N-r+1}_{N-r+1}, & {l}=1, \\[0.5ex]
{\hat B}^{(r)}_{l} \cdot(1 + O(\xi^{-1})), & l\in [2,M-N+1],
\end{array}
\right.
\end{equation}
where ${\hat B}^{(r)}_{l}$ are defined by (\ref{ex:explB}); 
\begin{equation}\label{eq:eps}
\epsilon^{(r)}_{j} = \eta^{(r)}_{j} \xi^{-{{j}}} (1 + O(\xi^{-1} )), \quad \quad j\in [1,r-1],
\end{equation}
where the  positive constants $\eta^{(r)}_{j}$ are as in (\ref{eq:epsilon}) and may be explicitly computed using 
(\ref{eq:psialpha2}) in Lemma \ref{lemma:beps} and (\ref{eq:sigmarec}) in Lemma \ref{lemma:psiasym};
\item For $\xi\gg1$, the poles have the following asymptotics:
\begin{equation}\label{eq:bs}
b^{(r)}_k = -\xi^{2(k-1)} \left( \frac{\sum_{l=1}^k \hat B^{(r)}_l}{\sum_{l=1}^{k+1} \hat B^{(r)}_l} \right)(1+O(\xi^{-1})), \quad\quad k\in[M-N];
\end{equation}
\item $\Psi^{(r)} (\zeta, \vec t)>0$, for all $\zeta >0$ and for all $\vec t$, $r\in [2,N]$. In particular, 
all $\alpha^{(r)}_n>0$, therefore $\Psi^{(r)} (\alpha_{n}^{(r)}, \vec t)>0$ for all $n$ and for all $\vec t$. 
Moreover, for any $n\in [2,M-N+1]$ and for all $\vec t$, 
\begin{equation}\label{eq:psialpha}
\begin{array}{l}   \Psi^{(r)} (\alpha_n^{(r)} ,\vec t) = \sum_{j={1}}^{M} E^{(r)[n]}_{j}(\xi)
e^{\theta_{j}(\vec t)} =\left\{\psi^{(r)}_n(\vec t)+ \frac{\sum_{j=N-r+n}^{N+n-2} \sigma^{(r)}_{n,j}e^{\theta_{j}} \xi^{-j+N-r+n-1}}
{\sum_{s=N-r+1}^{N-r+n-1} \Delta_{[s;N-r+n,N-r+n+1,\dots,N+n-2]}}+ \right.
\\[1.5ex]
+\left.\frac{\sum_{j=N+n-1}^{M} \Delta_{[N-r+n,N-r+n+1,\ldots,N+n-2;j]} e^{\theta_{j}} \xi^{-r-2(j-N-n+1)}}
{\sum_{s=N-r+1}^{N-r+n-1} \Delta_{[s;N-r+n,N-r+n+1,\dots,N+n-2]}}
  \right\}\times {\displaystyle\left( 1 + O(\xi^{-1})\right)},
\end{array}
\end{equation}
where the constants $\sigma^{(r)}_{n,j}>0$ are recursively computed using Lemmas \ref{lemma:psiasym} and \ref{lemma:sigma} and depend only on the affine coordinates $x_{l,k}$ on $Gr^{\mbox{\tiny TNN}} (N,M)$, $l\in [r]$, $s\in [M-N]$, defined 
in Proposition~\ref{prop:xcoor} in Appendix~\ref{sec:totpos}. Moreover, for all $n\in [2,M-N+1]$:
\begin{enumerate}
\item $E^{(r)[n]}_{j}(\xi) \equiv 0$ if  $j\in [N-r]$;
\item $E^{(r)[n]}_{j}(\xi)$ are rational functions in $\xi$, for $j\in[N-r+1,M]$
\item $\displaystyle \lim\limits_{\xi \to \infty} E^{(r)[n]}_{j}(\xi) =\hat E^{(r)[n]}_{j}$, where $\hat E^{(r)}$ is as in 
Theorem~\ref{lemma:vectors}.
\end{enumerate}
\end{enumerate}
\end{theorem}

\begin{proof} {\it of Theorem \ref{theo:main0}.}
The proof goes through several steps and by induction. 

Step 1: Direct proof for $r=2$. Let $\Psi^{(1)}(\zeta, \vec t)$ be as in Proposition \ref{prop:1}
and let 
\[
\begin{array}{ll}
\displaystyle \Psi^{(2)} (\zeta , \vec t) &= \frac{\mathring B^{(2)}_{1}\prod_{k\not = 1}^{M-N+1} (\zeta - \lambda_k^{(2)})}{\prod_{k = 1}^{M-N} (\zeta - b^{(2)}_k)} e^{\theta_{{N-1}} (\vec t)} +
\sum\limits_{j=2}^{M-N+1} 
\frac{\mathring B^{(2)}_j\prod_{k\not = j}^{M-N+1} (\zeta - \lambda_k^{(2)})}{\prod_{k = 1}^{M-N}(\zeta - b^{({2})}_k)} \Psi^{(1)} (\alpha_j^{(1)} , \vec t).
\end{array}
\]
with all of the $\displaystyle \mathring B^{(2)}_l = \frac{B^{(2)}_l}{1+ \epsilon^{(2)}_1}$, for $l\in [M-N+1]$, $\epsilon^{(2)}_1$ and $b^{(2)}_k$ to be determined. Then:

a) We have 
\begin{equation}\label{eq:limN-1}
\lim_{\zeta \to \infty} \Psi^{(2)} (\zeta , \vec t) =\frac{ f^{(2)} (\vec t) +\epsilon^{(2)}_1 f^{(1)} (\vec t)}{1+ \epsilon^{(2)}_1}
\end{equation}
if and only if $B^{(2)}_1 ={\hat A}^{N-1}_{N-1}$ and the remaining coefficients satisfy the linear system
\[
\sum_{l=2}^{M-N+1} B^{(2)}_l E^{(1)[l]}_{s}(\xi)  = {\hat A}^N_{s} \epsilon^{(2)}_{{1}} + {\hat A}^{N-1}_{s}, \quad \quad s \in [N, M],
\]
where the coefficients $E^{(1)[l]}_{s}(\xi)$ are the rational functions in $\xi$ defined in (\ref{eq:estN}).
Then it is straightforward to check both the uniqueness of the solution of the above system for almost all $\xi>1$ and the regularity properties in $\xi$ of the coefficients $B^{(2)}_l$ and $\epsilon^{(2)}_{1}$. Moreover, using Lemma \ref{lemma:beps} for $r=2$ with 
$\sigma^{(1)}_{n,l} = {\hat A}^N_{N+l-1}$, $l\in [1,M-N+1]$, $n\in [2,M-N+1]$, (\ref{eq:estN}), the Principal Algebraic Lemma \ref{lemma:PAL} and Lemma \ref{lemma:sigma}, we immediately get the required estimates for the coefficients ($l\in[2,M-N+1]$),
\[
B^{({2})}_{{l}} = \frac{\Delta_{[N+l-2,N+l-1]}\left( \sum_{j=1}^{l-1} {\hat A}^N_{N+j-1}\right)}{{\hat A}^N_{N+l-2}{\hat A}^N_{N+l-2} } \left( 1 + O ( \xi^{-1}) \right),
\quad\quad
\epsilon^{(2)}_{{1}} = \frac{{\hat A}^{N-1}_{M-1} }{\xi{\hat A}^N_{M-1} } \left( 1 + O ( \xi^{-1}) \right).
\]
Thanks to the positivity property of the matrix ${\hat A}$ we immediately have $B^{(2)}_j >0$, $j\in [M-N+1]$ and $\epsilon^{(2)}_{1}>0$ for all $\xi\gg1$. Moreover, inserting $\vec t = \vec 0$ in (\ref{eq:limN-1}), we conclude
$\sum_{j=1}^{M-N+1} {\mathring B}^{(2)}_j (\xi)\equiv 1.$ Finally all of the quantities my be expressed in invariant form using the FZ-basis $x_{l,k}$, $l=1,2$, $k\in [M-N]$.

\smallskip

b) The set of conditions 
\[
\Psi^{(2)} (\lambda_1^{(2)},\vec t ) = e^{\theta_{N-1} (\vec t)}, \quad  \Psi^{(2)} (\lambda_{n}^{(2)},\vec t ) =\Psi^{(1)} (\alpha_{n}^{(1)},\vec t ), \; n\in [2,M-N+1],
\]
is equivalent to 
\[
{\mathring B}^{(N-1)}_{n} \prod\limits_{k\not = n}^{M-N+1} (\lambda_{n}-\lambda_k) = \prod\limits_{k=1}^{M-N} (\lambda_{n}-b^{(2)}_{k}) , \quad\quad n\in [1,M-N+1].
\]
Using Lemmas \ref{lemma:poles} and \ref{lemma:C} in the case $r=2$, with $c_j ={\mathring B}^{(2)}_j$, we immediately obtain the required conditions for the regularity in $\xi$, the position and the leading order expansion of the poles ($k\in [M-N]$),
\[
b^{(2)}_{{k}} = - {\xi^{2(k-1)}} \frac{{\hat A}^N_{N+k} \left(\sum_{l=1}^{k} \Delta_{[N-2+l,N-1+k]}\right)}{{\hat A}^N_{N+{{k}}-1}\left(\sum_{l=1}^{k+1} \Delta_{[N-2+l,N+k]}\right)} (1 + O(\xi^{-1})).
\]
c) If $\zeta>0$, then $\Psi^{(2)}(\zeta, \vec t)$ is a finite sum of  positive terms.
Finally, using Lemmas \ref{lemma:psiasym}, \ref{lemma:sigma} and Theorem \ref{lemma:vectors}, we get that the coefficients $E^{(2)[n]}_{j}(\xi)$ have the required regularity properties in $\xi$ and that they satisfy the asymptotic expansion (\ref{eq:psialpha})
\[
\begin{array}{rl} 
\Psi^{({2})} (\alpha_{{n}}^{(2)}, \vec t) &=\sum\limits_{{j}=1}^{{M}} E^{({2}){{[n]}}}_{{{j}}}(\xi)
e^{\theta_{{j}}(\vec t)}
= \left( \sum_{j=N-1}^{N+n-3} \Delta_{[j,N+n-2]} e^{\theta_{{j}}}+\frac{({\hat A}^N_{N+n-2})^2 \Delta_{[N+n-3,N+n-1]}}{{\hat A}^N_{N+{n-3}} {\hat A}^N_{N+n-1}}  \frac{ e^{\theta_{j}}}{\xi}  +\right.
\\[0.5ex]
& \left. +\sum_{j=N+n-1}^{M} \frac{\Delta_{[N+n-2,j]}e^{\theta_{j}}}{  \xi^{2(j-N-n+2)}}  \right)\times\frac{(1 +O (\xi^{-1}))}{\sum_{k =N-1}^{N+n-3} \Delta_{[k,N+n-2]}} ;
\end{array}
\]

\smallskip

Step 2: The induction procedure. Let $r\in [3,N]$ and let us suppose we proved the Theorem for
$i=2,\dots,r-1$, and let us prove it for $i=r$. Let us denote $\mathring B^{(r)}_l = \frac{B^{(r)}_l}{1+ \sum_{j=1}^{r-1} \epsilon^{(r)}_{j}}$, for $l\in [M-N+1]$ and define 
\[
\displaystyle \Psi^{(r)} (\zeta , \vec t) = \frac{ B^{(r)}_{1}e^{\theta_{{N-r+1}} (\vec t)}\prod_{j\not = 1}^{M-N+1} (\zeta - \lambda_{j}^{(r)})}{(1 + \sum_{j=1}^{r-1} \epsilon^{(r)}_{j}) \prod_{k = 1}^{M-N}(\zeta - b^{(r)}_{k})} + \sum_{n=2}^{M-N+1} 
\frac{B^{(r)}_{n}\Psi^{(r-1)} (\alpha_{n}^{(r-1)} , \vec t)\prod_{j\not = n}^{M-N+1} (\zeta - \lambda_{j}^{(r)})}{(1 + \sum_{j=1}^{r-1} \epsilon^{(r)}_{j}) \prod_{k = 1}^{M-N}(\zeta - b^{(r)}_{k})} ,
\]
with coefficients $B^{(r)}_{n}$, $\epsilon^{(r)}_{j}$ and
$b^{(r)}_{k}$ to be determined.

a) We have
\begin{equation}\label{eq:limN-l}
\lim_{\zeta \to \infty} \Psi^{(r)} (\zeta , \vec t) =\frac{ f^{(r)} (\vec t) +\sum_{j=1}^{{r-1}} \epsilon^{(r)}_{j} f^{(j)} (\vec t)}{1+ \sum_{j=1}^{{r-1}} \epsilon^{(r)}_{j}}
\end{equation}
if and only if ${B^{(r)}_1 ={\hat A}^{N-r+1}_{N-r+1}}$ and the remaining coefficients satisfy the linear system
\[
{\sum\limits_{l=2}^{M-N+1} B^{(r)}_l E^{(r-1)[l]}_{s}(\xi)  =\sum\limits_{j=1}^{r-1} {\hat A}^{N-r+j+1}_{s} \epsilon^{(r)}_{j} + {\hat A}^{N-r+1}_{s}, \quad  s \in [N-r+2, M],}
\]
where the coefficients $E^{(r-1)[l]}_{s}(\xi)$ are rational functions in $\xi$. Due to the compatibility of the above linear system for almost all $\xi>1$ and the regularity properties of the coefficients, there immediately follow  both the uniqueness for almost all $\xi>1$ and the regularity properties in $\xi$ for $B^{(r)}_l$ and $\epsilon^{(r)}_{j}$. Again, using Lemmas \ref{lemma:PAL}, \ref{lemma:beps} and  \ref{lemma:sigma}, we immediately get the required asymptotic estimates for the coefficients $B^{(r)}_l$ ($l\in[2,M-N+1]$) and $\epsilon^{(r)}_j$, ($j\in [r-1]$) as in (\ref{eq:Bs}) and (\ref{eq:eps}), respectively, when $\xi\gg1$. 
In particular, substituting $\vec t =\vec 0$ in (\ref{eq:limN-l}), we have  $\sum_{{l}=1}^{M-N+1}	{\mathring B}^{(r)}_{{l}}=1$.

b) The set of conditions 
\[
\Psi^{({r})} (\lambda_1^{(r)},\vec t ) = e^{\theta_{N-r+1} (\vec t)}, \quad  \Psi^{(r)} (\lambda_{n}^{(r)},\vec t ) =\Psi^{({r-1})} (\alpha_{n}^{(r-1)},\vec t ), \; {n}\in [2,M-N+1],
\]
is equivalent to 
\[
{\mathring B}^{(r)}_{n} \prod\limits_{j\not = n}^{M-N+1} (\lambda_{n}^{(r)}-\lambda_j^{(r)}) = \prod\limits_{k=1}^{M-N} (\lambda_{n}^{(r)}-b^{(r)}_{k}), \quad\quad n\in [1,M-N+1].
\]
Again, using Lemma \ref{lemma:C} in the case $c_j ={\mathring B}^{(r)}_j$, and Lemma \ref{lemma:poles}, we immediately obtain the required estimates for the regularity and for position of the poles $b^{(r)}_{k}$ ($k\in [M-N]$) as well as the leading order estimates for $(\xi\gg1$) as in (\ref{eq:bs}).

c) Finally, from Lemmas~\ref{lemma:psiasym}, \ref{lemma:sigma} and Theorem \ref{lemma:vectors}, we get the required estimates for the regularity, the sign and the leading order term expansion of $\Psi^{(r)} (\alpha_{n}, \vec t)$, for  $n\in [2,M-N+1]$.
\end{proof}

In Theorem~\ref{theo:main0} we provide the explicit construction of the normalized vacuum wave function, with 
the properties required in Theorem~\ref{theo:main}. This remark completes the proof of Theorem~\ref{theo:main}.

\medskip
\section{Analytic properties of the effective divisor}\label{sec:divest}

\subsection{Zero divisor for the vacuum wave function}
In Theorem~\ref{theo:main0} we provide an explicit estimate of the position of the vacuum pole divisor (\ref{eq:bs}). 
Let us now present some estimates for the position of vacuum zero divisor for sufficiently small times. We do not use
directly these formulas in our paper, but they may be useful in future inverstigations.

\begin{corollary}\label{cor:zerodiv} 
Let $\xi\gg1$ be fixed, $\Gamma =\Gamma(\xi)$ be as in Definition \ref{def:gamma} and $\Psi(\zeta, \vec t)$  be the vacuum wavefunction of Theorem \ref{theo:main}. 
Then in each finite oval $\Omega_{r,n}$, ($r\in [N]$, $n\in [M-N]$), $\Psi(\zeta, \vec t)$ possesses exactly  one simple pole $b^{(r)}_n (\xi)$, whose position is independent of $\vec t$, and exactly one simple zero $\chi^{(r)}_n (\xi;\vec t)$. In particular
\begin{enumerate}
\item $b^{(r)}_n (\xi)\in ]\lambda^{(r)}_{n+1}, \lambda^{(r)}_{n} [ \subset\Gamma_r\cap \Omega_{r,n}$;
\item $\chi^{(r)}_n (\xi;\vec 0) =b^{(r)}_n (\xi)$;
\item $\chi^{(r)}_n (\xi;\vec t) \in ]\lambda^{(r)}_{n+1}, \lambda^{(r)}_{n} [\subset\Gamma_r\cap \Omega_{r,n}$,  for all $\vec t$;  
\item
Assume that only a finite number of times is different from zero: $t_j=0$ for $j>j_0$, and all times $t_1$, $t_2$, 
\ldots, $t_{j_0}$ lie in a compact domain $K_0$. Then we have the following asymptotic expansion in $\xi\gg1$:
\begin{equation}\label{eq:zeroes0}
\resizebox{\textwidth}{!}{$
 \chi^{(r)}_n (\xi;\vec t)  = - \,\frac{\sum\limits_{l=1}^n {\hat B}^{(r)}_l V^{(r)}_l (\vec t)}{\sum\limits_{l=1}^{n+1} {\hat B}^{(r)}_l V^{(r)}_l (\vec t)} \xi^{2(j-1)} (1+O(\xi^{-1}))
 = -\, \frac{\sum\limits_{j=N-r+1}^{N-r+n} \frac{\Delta_{[j;N-r+n+1,\dots,N+n-1]}}{\Delta_{[N-r+n+1,\dots,N+n-1]} } e^{\theta_{j}}}{\sum\limits_{j=N+r-1}^{N-r+n+1}\frac{ \Delta_{[j;N-r+n+2,\dots,N+n]}}{\Delta_{[N-r+n+2,\dots,N+n]}} e^{\theta_{j}}} \xi^{2(j-1)}(1+O(\xi^{-1})).$}
\end{equation}
\end{enumerate}
\end{corollary}

\begin{proof}
The number of poles $b^{(r)}_k$ is equal to the number of ovals and their position is computed in Proposition \ref{prop:1} and in Theorem \ref{theo:main0}. Therefore the number of zeroes of $\Psi(\zeta, \vec t)$ is equal to the number of ovals. 
For $\vec t=(0,0,\ldots)$, by definition, the zeroes coincide with the divisor points, and their positions continuously depend on $\vec t$.  A zero could leave a real oval only if it collides with another zero coming from another oval. 
That is impossible since $\Psi^{(r)} (\lambda^{(r)}_j, \vec t)>0$, for all $\vec t$, with
$r\in [N]$, $j\in [M-N+1]$ (see Theorem~\ref{theo:main0}).
It means that for all times $\vec t$ each zero remains in the same open interval 
$]\lambda^{(r)}_n, \lambda^{(r)}_{n+1} [$.

All terms $V^{(r)}_l (\vec t)$ are of order 1 in $\xi$  for $(t_1,\dots, t_{j_0}) \in K_0$.  Let us write the function  $\Psi^{(r)}(\zeta, \vec t)$ as a sum of simple fractions: 
\[
\Psi^{(r)}(\zeta, \vec t)= f_{r,\xi} (\vec t)+\sum\limits_{k=1}^{M-N}\frac{\psi^{(r)}_k (\vec t)}{\zeta-b^{(r)}_k},
\]
where $f_{r,\xi} (\vec t)$ is as in (\ref{eq:heatxi}) and the positions of the poles are given by (\ref{eq:bs}). Therefore 
\[
\psi^{(r)}_k (\vec t)=\xi^{2k-2}\cdot \left(\frac{ \sum_{j=1}^{k} \hat B^{(r)}_j   V^{(r)}_j (\vec t)   - \left(   \sum_{j=1}^{k} \hat B^{(r)}_j  \right)  V^{(r)}_{k+1} (\vec t) }{
\left(\sum_{j=1}^{k+1} \hat B^{(r)}_j \right)^2} \hat B^{(r)}_{k+1}   \right) (1+O(\xi^{-1})),
\]
and inside the interval $[\lambda^{(r)}_{k+1},\lambda^{(r)}_{k}]$ we have 
\begin{equation}
\label{eq:psi_int}
\Psi^{(r)}(\zeta, \vec t)=\xi^{2k-2}\cdot \left(\frac{  \sum_{j=1}^{k+1} \hat B^{(r)}_j   V^{(r)}_j (\vec t) }
{  \sum_{j=1}^{k+1} \hat B^{(r)}_j } -\frac{\psi^{(r)}_k (\vec t)}{\zeta-b^{(r)}_k}, \right) (1+O(\xi^{-1})).
\end{equation}
We complete the proof solving the equation $\Psi^{(r)}(\zeta, \vec t)=0$ and using the approximation (\ref{eq:psi_int}).
\end{proof}

We remark that the condition that each zero of $\Psi(P,\vec t)$ lies in a well-defined  open interval $]\lambda_{j+1}^{(r)}, \lambda_{j}^{(r)}[$ for all $\vec t$, is natural since $\Psi(P,\vec t)$ represents a {vacuum} wave function: no collision is possible for the zero divisor in this case!

\subsection{Zero divisor for the normalized Darboux transformed wave function}

We now provide some estimates on the zero and pole divisors of the normalized wavefunction ${\tilde\Psi} (\zeta, \vec t)$. The pole (effective) divisor has been characterized in Theorem \ref{theo:divisor}. For the zero divisor ${\mathcal D}_{\xi} (	\vec t)$ we adopt the following notation. By construction, for any fixed $\vec t$, its restriction to $\Gamma_0$ is
 ${\mathcal D}^{(0)}(\vec t) =\{ \gamma^{(0)}_k (\vec t), \, k \in [N]\}$ and it coincides with the set of solutions to (\ref{eq:Satodiv})
\[
(\gamma^{(0)}_l)^N (\vec t)- w_1 (\vec t) (\gamma^{(0)}_l)^{N-1} (\vec t)- \cdots - w_{N-1} (\vec t) \gamma^{(0)}_l-w_N (\vec t) = 0 ,\quad\quad l\in [N].
\]
The restriction of the zero divisor to $\Gamma_r$ is  ${\mathcal D}^{(r)}_{\xi} (\vec t)= \{ \gamma^{(r)}_l (\vec t), \, l \in [M-N-1]\}$ and it coincides with the set of solutions to
\[
D^{(k)} \Psi^{(r)} (\zeta, \vec t) =0, \quad r\in [N].
\]

In the following theorem we estimate the position of the zero divisor in the case in which only a finite number of times in $\vec t$ may be different from zero, and moreover they vary in a neighborhood of $\vec 0=(0,\dots0)$. In particular we give the explicit estimate
for the position of the effective divisor $\mathcal{D}_{\xi}$.

\begin{theorem} (\textbf{Estimate of the position of divisor $\mathcal{D}_{\xi}$})\label{theo:t0}
Let $\xi\gg1$ and let $D\Psi^{(r)} (\zeta, \vec t)$, $r\in [N]$, as above. Assume that only a finite number of times may be different from zero: $t_j=0$ for $j>j_0$, and all times $t_1$, $t_2$, 
\ldots, $t_{j_0}$ lie in a compact domain $K_0$ containing the point $(t_1,\dots, t_{j_0}) = (0,\dots, 0)$.
Then for $\xi\gg1$, generically
the following asymptotic expansion holds for the zeroes of $D\Psi^{(r)} (\zeta, \vec t)$, $\vec t \in K_0$:
\begin{equation}\label{eq:zeroesD}
\resizebox{\textwidth}{!}{$
\gamma^{(r)}_n (\vec t) = -\frac{ \sum_{l=1}^n {\hat B}^{(r)}_l DV^{(r)}_l (\vec t)}{ \sum_{l=1}^{n+1} {\hat B}^{(r)}_l DV^{(r)}_l (\vec t)} \xi^{2(n-1)} (1+O(\xi^{-1}))
= -\frac{\sum_{j=N-r+1}^{N-r+n} \frac{ \Delta_{[j;N-r+n+1,\dots,N+n-1]}}{ \Delta_{[N-r+n+1,\dots,N+n-1]} } P^{(0)} (\kappa_j) e^{\theta_j}}{\sum\limits_{j=N+r-1}^{N-r+n+1}\frac{ \Delta_{[j;N-r+n+2,\dots,N+n]}}{ \Delta_{[N-r+n+2,\dots,N+n]}} P^{(0)} (\kappa_j)e^{\theta_j}}\xi^{2(n-1)} (1+O(\xi^{-1})),
$}
\end{equation}
where $P^{(0)} (\kappa_j) = \prod\limits_{l=1}^N ( \kappa_j -\gamma^{(0)} (\vec t))$, $j\in [M]$.
In particular, for $(t_1,\dots,t_{j_0})= (0,\dots,0)\in K_0$ we have the estimate of the effective divisor $\mathcal{D}_{\xi}$.
\end{theorem}

In Corollaries \ref{cor:zeros1} and \ref{cor:zeropos} we control the position of the both the pole and the zero divisor on each sheet. Indeed, during the time evolution the  divisor points can pass through the double points $X\in \Gamma_{r_1} \cap \Gamma_{r_2}$ only in pairs and coming from different sheets ($r_1\not = r_2$), because of the properties of $\tilde \Psi (P, \vec t)$ settled in Theorem \ref{theo:eff_div} (see Figure \ref{fig:7}).

\begin{figure}[!tbp]
  \centering
  {\includegraphics[scale=0.15,angle=0,width=0.56\textwidth]{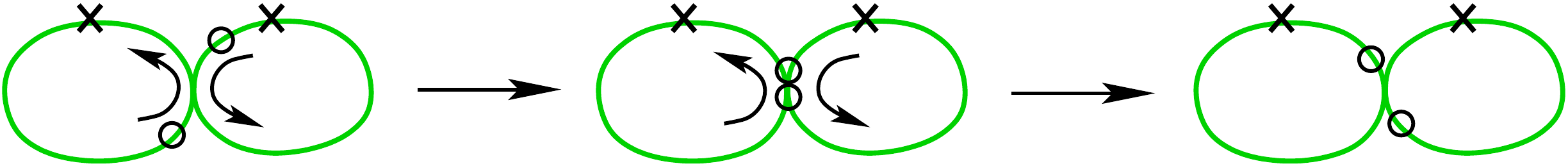}}
\caption{\small{\sl A pair of divisor points passes through a double point.}}\label{fig:7}        
\end{figure}

In Theorem~\ref{theo:divisor} we have shown that any finite oval contains exactly one point of the zero divisor  
$\mathcal{D}_{\xi} (\vec t)$. Let us provide additional information about the positions of the zero divisor points
for any fixed real $\vec t$.

\begin{corollary}\label{cor:zeros1} {\bf Characterization of the zero divisor  $\mathcal{D}_{\xi} (\vec t)$}
For any fixed $\xi\gg1$ and for any real $\vec t$, we have
\begin{enumerate}
\item\label{it:g01} $\mathcal{ D}^{(0)}{ (\vec t)}\subset ]\kappa_1,\kappa_M[$ and $\# \left( \mathcal{ D}^{(0)}{ (\vec t)}\cap ]\kappa_1,\kappa_M[ \right) =N$;
\item\label{it:g02} There is at most one divisor point in each interval $[\kappa_j, \kappa_{j+1}]$, $j\in [M-1]$;
\item For any $r\in [N]$, $\mathcal{ D}^{(r)}_{\xi} { (\vec t)}\subset ]\lambda^{(r)}_{M-N+1}, \alpha^{(r)}_{M-N+1}[$  and 

$\# \left( \mathcal{ D}^{(r)}_{\xi}{ (\vec t)}\cap ]\lambda^{(r)}_{M-N+1}, \alpha^{(r)}_{M-N+1}[ \right) =M-N-1$;
\item For any $r\in [N]$, there is at most one divisor point in each interval $[\lambda^{(r)}_{j+1}, \lambda^{(r)}_{j}]$, $j\in [M-N]$;
\item For any $r\in [N]$, there is at most one divisor point in each interval $[\alpha^{(r)}_{j}, \alpha^{(r)}_{j+1}]$, $j\in [M-N]$, where $\alpha^{(r)}_1 =\lambda^{(r)}_1$;
\item For any $r\in [N]$, 
\[
s^{(r)} { (\vec t)}\equiv \# \left( \mathcal{ D}^{(r)}_{\xi} { (\vec t)}\cap [\lambda^{(r)}_1, \alpha^{(r)}_{M-N+1}[ \right)\le 
\min \{ N-r, M-N-r\}.
\]
\end{enumerate}
\end{corollary}

For any fixed real $\vec t$, we have the complete control of the position of the divisor $\mathcal{D}^{(r)}_{\xi} (\vec t)$ and in particular it is possible to estimate the position of the effective divisor.

\begin{corollary}\label{cor:zeropos} \textbf{(Counting the number of positive poles and zeros on each sheet)}
For any fixed $\vec t$, the number of negative and positive divisor points in $\mathcal{D}^{(r)}_{\xi} (\vec t)$ is uniquely determined for all $r\in [N]$ from $\mathcal{ D}^{(0)} (\vec t)$.
Indeed let $\vec t$ be fixed and define
\[
\begin{array}{lll}
\displaystyle s^{(0)}\equiv s^{(0)} ({\vec t}) &\displaystyle = \# \left( \mathcal{D}^{(0)} (\vec t) \cap [\kappa_N,\kappa_M{[}\right); &\\[0.5ex]
\displaystyle s^{(r)}\equiv s^{(r)} ({\vec t}) &\displaystyle = \# \left( \mathcal{D}^{(r)}_{\xi} (\vec t) \cap [\lambda^{(r)}_1,\alpha^{(r)}_{M-N+1}{[}\right); & \quad r\in [N].
\end{array}
\]
\begin{enumerate}
\item $s^{(r)}$ is a decreasing function of $r$, $r\in [0,N]$ and $s^{(N)} =0$;
\item $s^{(r)} \le \min \{ N-r, M-N-r\} $, for all $r\in [0,N]$;
\item If $s^{(0)}=1$, then $s^{(r)} =0$ for any $r\in [N]$.
\item If $s^{(0)} >1$ and 
$\# \left( \mathcal{D}^{(0)} \cap [\kappa_{N-1}, \kappa_N[ \right) = 1$, then
$s^{(1)}=s^{(0)}$; otherwise $s^{(1)} =s^{(0)}-1$;
\item Let $r\in [2,N]$ be fixed and suppose that $s^{(r-1)} \ge 1$. Then:

if $\# \left( \mathcal{D}^{(0)} \cap [\kappa_{N-r+1}, \kappa_{N-r+2}[ \right) = 1$, then
$s^{(r)}=s^{(r-1)}$; 

otherwise $s^{(r)} =s^{(r-1)}-1$.
\end{enumerate}
\end{corollary}

\begin{corollary}\label{cor:sign}
Under the hypotheses of the Theorem \ref{theo:divisor} and for any fixed $\vec t$, the following holds true:
\begin{enumerate}
\item $D\Psi^{(0)} (\kappa_M, \vec t) >0$, $(-1)^N D\Psi^{(0)} (\kappa_1 ,\vec t)>0$;
\item $(-1)^r D\Psi^{(r)} (\alpha^{(r)}_{M-N+1}, \vec t) >0$, for all $r\in [N-1]$;
\item $\tilde \Psi^{(0)} (\kappa_M, \vec t) >0$, $\tilde \Psi^{(0)} (\kappa_1, \vec t) >0$;
\item $\tilde \Psi^{(r)} (\alpha^{(r)}_{M-N+1}, \vec t) >0$, for all $r\in [N]$.
\end{enumerate}
\end{corollary}

\begin{remark}
In \cite{Mal}, Malanyuk states  that if $A$ is an element of $Gr^{\mbox{\tiny TNN}} (N,M)$ then, for $j\in [N]$, $\gamma^{(0)}_j (\vec 0)$are real, distinct and lie in $[\kappa_1,\kappa_M]$. 
Our estimates improve such result in the case $Gr^{\mbox{\tiny TP}} (N,M)$ and are optimal.
\end{remark}

\section{$\Gamma(\xi)$ and the vacuum divisor for soliton data in $Gr^{\mbox{\tiny TP}} (2,4)$}\label{sec:example}

In this section we construct the rational curve and the vacuum pole divisor associated to generic soliton data in $Gr^{\mbox{\tiny TP}} (2,4)$.
The degenerate curve $\Gamma(\xi)$ in Definition (\ref{def:gamma_xi}) is the partial normalization of the nodal plane curve in (\ref{eq:curveGr24}), which is the rational degeneration of the genus 4 $\mathtt M$--curve in (\ref{eq:curveGr24_pert}). Let us recall that
generic curves of sufficiently high genus can not be represented as plane curves without self-intersections \cite{GrH}, \cite{ACG}, 
therefore the partial normalization is generically unavoidable. We plot both the topological model and the partial normalization 
for this example in Figure \ref{fig:gr24_top}.

\subsection{$\Gamma(\xi)$ and its desingularization for generic soliton data in $Gr^{\mbox{\tiny TP}} (2,4)$}

\begin{figure}[!tbp]
  \centering
  {\includegraphics[width=0.32\textwidth]{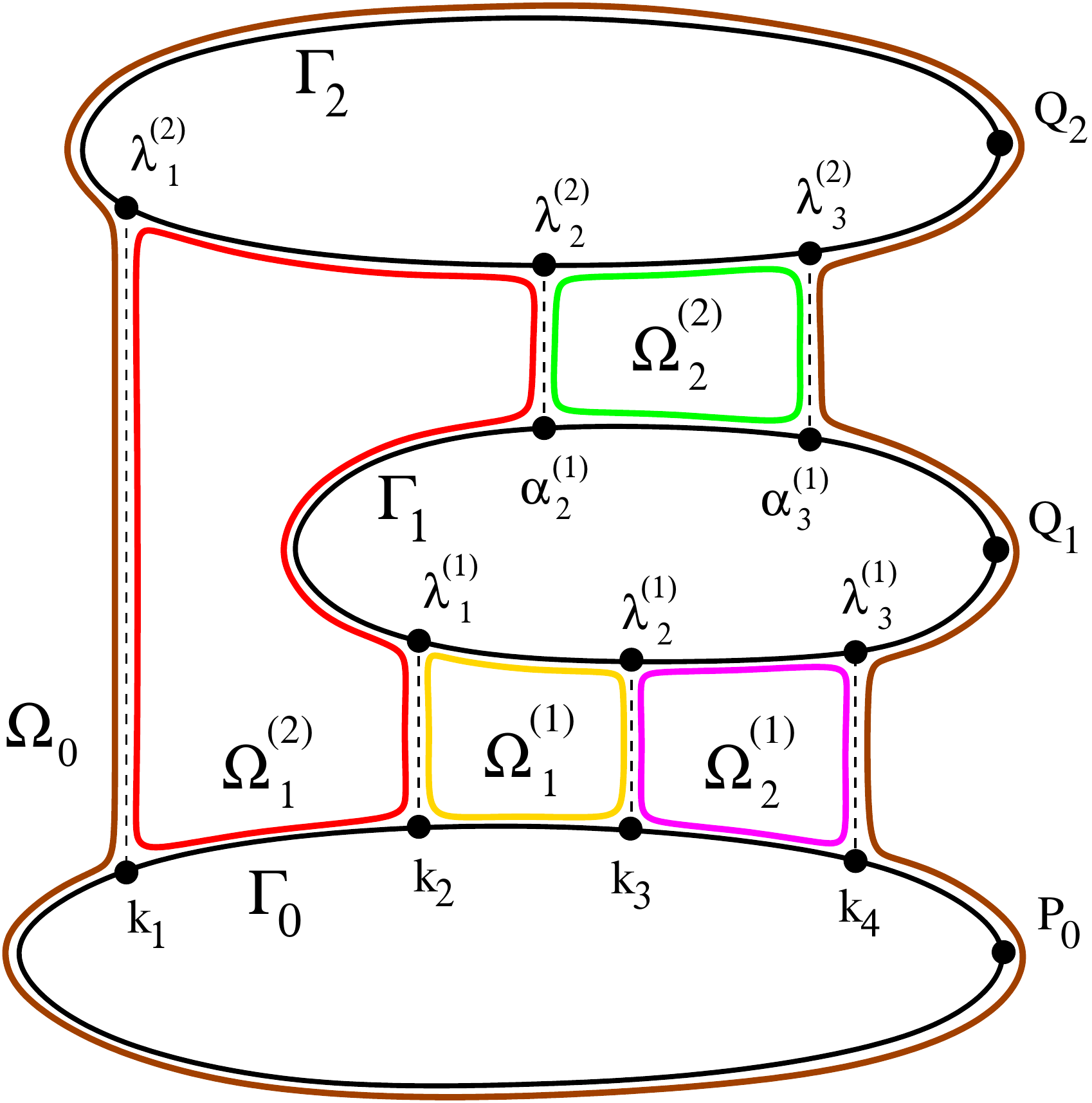}}
  \hspace{.6 truecm}
  {\includegraphics[,width=0.42\textwidth]{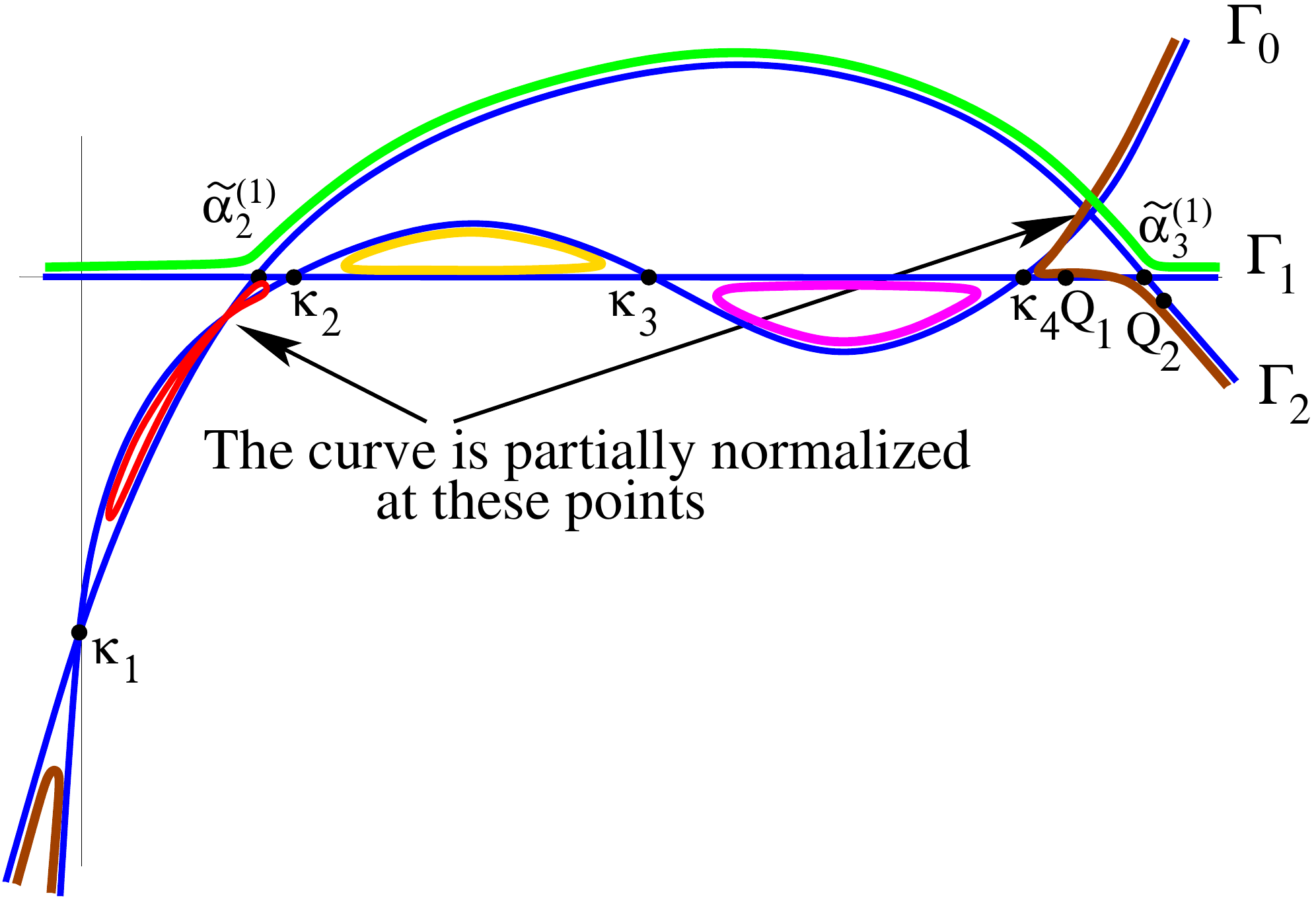}}
\caption{\small{\sl The topological scheme of spectral curve for soliton data $Gr^{\mbox{\tiny TP}}(2,4)$, $\Gamma(\xi)$ (left) is the partial normalization the plane algebraic curve (right), which is the rational degeneration of the genus 4 $\mathtt M$--curve in (\ref{eq:curveGr24_pert}). The ovals in the nodal plane curve are labeled as in the real part of its partial normalization.}}\label{fig:gr24_top}        
\end{figure}

According to Definition \ref{def:gamma_xi}, the degenerate curve $\Gamma(\xi)$ is obtained gluing three copies of $\mathbb{CP}^1$: $\Gamma_0$, $\Gamma_1$ and $\Gamma_2$. 

On $\Gamma_0$ we have 5 real marked points $\kappa_1<\kappa_2<\kappa_3<\kappa_4$ and $P_0$. Let ${\tilde \zeta}\equiv \zeta$ the local coordinate on $\Gamma_0$ such that $\tilde \zeta^{-1} (P_0)=0$ and let us denote with the same symbol the phases and their local coordinates, ${\tilde \zeta} (\kappa_j) =\kappa_j$, $j\in [4]$. To simplify the expressions below and without loss of generality, we take ${\tilde \zeta}(\kappa_1)=0$.

On $\Gamma_1$, we have 6 real ordered marked points, with $\zeta$--coordinates: $\zeta(\alpha^{(1)}_3)=\xi>\zeta(\alpha^{(1)}_2)=\xi^{-1}>\zeta(\lambda^{(1)}_1)=0>\zeta(\lambda^{(1)}_2)=-1>\zeta(\lambda^{(1)}_3)=-\xi^2$ and $\zeta^{-1}(Q_1)=0$. 
The fractional linear change of coordinates $\zeta\mapsto {\tilde \zeta}$ is uniquely defined by the conditions that the $\tilde \zeta$ coordinates of the marked points $\lambda^{(1)}_j\in \Gamma_1$, $j\in [3]$, coincide with the $\tilde \zeta$ coordinates of $\kappa_{j+1}\in \Gamma_0$:
${\tilde \zeta} (\lambda^{(1)}_1)=\kappa_2$, ${\tilde \zeta} (\lambda^{(1)}_2)=\kappa_3$ and
${\tilde \zeta} (\lambda^{(1)}_3)=\kappa_4$:
\begin{equation}\label{eq:coorG1}
{\tilde \zeta} = \frac{\xi^2\kappa_2(\kappa_4-\kappa_3)-\zeta [\xi^2 \kappa_4 (\kappa_3-\kappa_2)-\kappa_3(\kappa_4-\kappa_2)]}{\xi^2(\kappa_4-\kappa_3)-\zeta [\xi^2 (\kappa_3-\kappa_2)-(\kappa_4-\kappa_2)]}
\end{equation}
Then, it is easy to verify that
\begin{equation}\label{eq:marG1}
\begin{array}{l}
{\tilde \zeta} (\alpha^{(1)}_2) = \frac{\xi^2\kappa_2 (\kappa_4-\kappa_3)-(\xi-1)\kappa_3 (\kappa_4-\kappa_2) }{\xi^2 (\kappa_4-\kappa_3)-(\xi-1)(\kappa_4-\kappa_2)}=\kappa_2 -\frac{(\kappa_3-\kappa_2)(\kappa_4-\kappa_2)}{\xi(\kappa_4-\kappa_3)}+ O\left(\xi^{-2}\right)\\
{\tilde \zeta} (\alpha^{(1)}_3) = \frac{\xi\kappa_4 (\kappa_3-\kappa_2)-\kappa_3 (\kappa_4-\kappa_2) }{\xi (\kappa_3-\kappa_2)-(\kappa_4-\kappa_2)}=\kappa_4 +\frac{(\kappa_4-\kappa_3)(\kappa_4-\kappa_2)}{\xi(\kappa_3-\kappa_2)}+ O\left(\xi^{-2}\right)\\
{\tilde \zeta} (Q_1) = \frac{\xi^2\kappa_4 (\kappa_3-\kappa_2)-\kappa_3 (\kappa_4-\kappa_2) }{\xi^2 (\kappa_3-\kappa_2)-(\kappa_4-\kappa_2)}=\kappa_4 +\frac{(\kappa_4-\kappa_3)(\kappa_4-\kappa_2)}{\xi^2(\kappa_3-\kappa_2)}+ O\left(\xi^{-4}\right).
\end{array}
\end{equation}

On $\Gamma_2$, we have 4 real ordered marked points, with $\zeta$--coordinates: $\zeta(\lambda^{(2)}_1)=0>\zeta(\lambda^{(2)}_2)=-1>\zeta(\lambda^{(2)}_3)=-\xi^2$ and $\zeta^{-1}(Q_2)=0$. 
The fractional linear change of coordinates $\zeta\mapsto {\tilde \zeta}$ is uniquely defined by the condition that the $\tilde \zeta$ coordinates at the double points $\lambda^{(2)}_j\in \Gamma_2$, $j\in [3]$ respectively coincide with the $\tilde \zeta$ coordinates of $\kappa_1\in \Gamma_0$, $\alpha^{(1)}_j\in \Gamma_1$:
${\tilde \zeta} (\lambda^{(2)}_1)=0$, ${\tilde \zeta} (\lambda^{(2)}_2)={\tilde \zeta} (\alpha^{(1)}_2)$ and ${\tilde \zeta} (\lambda^{(2)}_3)=\alpha^{(1)}_2$:
\begin{equation}\label{eq:coorG2}
{\tilde \zeta} = \frac{(\xi^2-1){\tilde \zeta} (\alpha^{(1)}_2){\tilde \zeta} (\alpha^{(1)}_3)\zeta}{\xi^2({\tilde \zeta} (\alpha^{(1)}_2)-{\tilde \zeta} (\alpha^{(1)}_3))-\zeta ({\tilde \zeta} (\alpha^{(1)}_3)-\xi^2{\tilde \zeta} (\alpha^{(1)}_2))},
\end{equation}
with ${\tilde \zeta} (\alpha^{(1)}_2),{\tilde \zeta} (\alpha^{(1)}_3)$ as in (\ref{eq:marG1}). Then, it is easy to verify that
\begin{equation}\label{eq:marG2}
{\tilde \zeta} (Q_2) = \frac{(\xi^2-1){\tilde \zeta} (\alpha^{(1)}_3) {\tilde \zeta} (\alpha^{(1)}_2)}{\xi^2 {\tilde \zeta} (\alpha^{(1)}_2)-{\tilde \zeta} (\alpha^{(1)}_3)}={\tilde \zeta} (\alpha^{(1)}_3) +\frac{{\tilde \zeta} (\alpha^{(1)}_3)({\tilde \zeta} (\alpha^{(1)}_3)-{\tilde \zeta} (\alpha^{(1)}_2))}{\xi^2{\tilde \zeta} (\alpha^{(1)}_2)}+ O\left(\xi^{-4}\right) 
\end{equation}
and 
\[
{\tilde \zeta}(Q_2)-{\tilde \zeta}(\alpha^{(1)}_{3})=\frac{(\kappa_4-\kappa_2)\kappa_4}{\xi^2\kappa_2} + O(\xi^{-3}).
\]
Then, in the local coordinates $(\tilde \mu,\tilde \zeta)$, the reducible real algebraic curve $\Gamma(\xi)$ corresponding to the gluing rules of Definition \ref{def:gamma_xi} is the following plane nodal curve
\begin{equation}\label{eq:curveGr24}
\quad\quad \tilde\mu\cdot (\tilde\mu-p_0(\tilde\zeta))\cdot(\tilde\mu+p_2(\tilde\zeta))=0,
\end{equation}
where $p_0(\tilde\zeta) =\prod_{j=1}^3(\tilde \zeta -\kappa_j)$, $p_2(\tilde \zeta) =\frac{\kappa_2\kappa_3\kappa_4}{\tilde \zeta(\alpha^{(1)}_2)\tilde \zeta(\alpha^{(1)}_3)}(\tilde \zeta-\tilde \zeta(\alpha^{(1)}_2))(\tilde \zeta -\tilde \zeta(\alpha^{(1)}_3))$. (\ref{eq:curveGr24}) is the rational degeneration of the $\mathtt M$--curve, which is the normalization 
of the following nodal curve:
\begin{equation}\label{eq:curveGr24_pert}
\Gamma(\xi)_{\delta} \; : \quad\quad \tilde\mu\cdot (\tilde\mu-p_0(\tilde\zeta))\cdot(\tilde\mu+p_2(\tilde\zeta)) -\delta^2
\left( \frac{p_0(\tilde \zeta)+p_2(\tilde \zeta)}{\tilde \zeta} \right)^2 (\kappa_4 +\frac{1}{2\xi}-\tilde \zeta)=0, \ \ 
|\delta|\ll 1.
\end{equation}
It is easy to verify that the curve (\ref{eq:curveGr24_pert}) has genus 4 for generic values of $\kappa_j$s and $\xi\gg1$. By construction it also possesses the maximum number of ovals.

\subsection{The leading order coefficients and vectors of the vacuum wavefunction}
For completeness we compute the coefficients and the vectors of Theorem \ref{lemma:vectors} for points in $Gr^{\mbox{\tiny TP}} (2,4)$. The totally positive matrix in banded form defined in (\ref{eq:our_form1}) expressed in function of the FZ-basis $x_{l,s}$ is
\[
A = \left( \begin{array}{cccc}
1 & \displaystyle \frac{x_{2,2} + x_{1,2}x_{2,1}}{x_{1,1}x_{1,2}}& \displaystyle \frac{x_{2,2}}{x_{1,2}}& 0 \\
0 &1 & x_{1,1} & x_{1,2} \\
\end{array}
\right),
\]
and $\hat A$ is the corresponding normalized totally positive matrix, obtained dividing each $A^i_j$ by the sum of the elements on the $i$-th row (see Remark \ref{rem:norm}).
Then the bases $E^{(0)}$, $E^{(1)}$ and the coefficients  $B^{(1)}_j$, $B^{(2)}_j$, $j\in[3]$ from the 
Principal Algebraic Lemma and Theorem~\ref{lemma:vectors} can be easily calculated:
\[
\hat E^{(0)} = \left( \begin{array}{cccc}
0& 1&0&0\\
0&0&1&0\\
0&0&0&1
\end{array}
\right),
\quad
\hat E^{(1)} = \left( \begin{array}{cccc}
1&0&0 &0\\
0&1&0&0\\
0&\frac{ 1}{1+x_{1,1}} &\frac{x_{1,1}}{1+x_{1,1}}&0
\end{array}
\right),
\]
\[
\hat B^{(1)}_1= \frac{1}{1+x_{1,1}+x_{1,2}}, \quad \hat B^{(1)}_2= \frac{x_{1,1}}{1+x_{1,1}+x_{1,2}}, 
\quad \hat B^{(1)}_3= \frac{x_{1,2}}{1+x_{1,1}+x_{1,2}}
\]
\begin{equation}\label{eq:leading}
\hat B^{(2)}_1 = \hat A^1_1 = \frac{x_{1, 2}x_{1, 1}}{x^{(2)}_D},
\quad
\hat B^{(2)}_2 = \frac{\Delta_{[2,3]}}{\hat A^2_3}= \frac{x_{2, 1}x_{1, 2}}{x^{(2)}_D},\quad
\hat B^{(2)}_3 = \frac{\Delta_{[3,4]} (\hat A^2_2+ \hat A^2_3)}{\hat A^2_3\hat A^2_4} =
\frac{x_{1, 1} x_{2, 2}+x_{2, 2}}{x^{(2)}_D}
,
\end{equation}
where $\Delta_{[2,3]} = \hat A^1_2 \hat A^2_3-\hat A^1_3 \hat A^2_2$, $\Delta_{[3,4]} = \hat A^1_3 \hat A^2_4$ and $x^{(2)}_D = x_{1, 1}(x_{1, 2}+ x_{2, 2})+x_{1, 2} x_{2, 1}+x_{2, 2}$.

\subsection{The vacuum divisor for soliton data in $Gr^{\mbox{\tiny TP}} (2,4)$}
Let us calculate now the vacuum wave function from Theorem~\ref{theo:main0} for this example. In the local coordinate $\zeta$, the vacuum wavefunction takes the form
\[
\begin{array}{ll}
\Psi^{(0)} (\zeta, \vec t ) = \exp (\theta(\zeta, \vec t)), &\zeta \in \Gamma_0,\\
\Psi^{(1)} (\zeta, \vec t) = \frac{(\xi^2+\zeta)(\zeta+1)e^{\theta_2(\vec t)}+(\xi^2+\zeta)\zeta x_{1,1}e^{\theta_3(\vec t)}+\zeta(\zeta+1)x_{1,2}e^{\theta_4(\vec t)}}{ \zeta^2 (1+x_{1,1}+x_{1,2}) + \zeta (\xi^2 [1+x_{1,1}]+x_{1,2}+1) + \xi^2}, &\zeta\in \Gamma_1,\\
\Psi^{(2)} (\zeta, \vec t) = \frac{(\zeta+1)(\zeta+\xi^2)e^{\theta_1(\vec t)} + B^{(2)}_2 \zeta(\zeta+\xi^2) V^{(2)}_2 (\xi,\vec t) +B^{(2)}_3 \zeta(\zeta+1) V^{(2)}_3 (\xi,\vec t) }{C^{(2)}(\zeta^2 +\nu^{(2)}_1\zeta +\nu^{(2)}_2)},&\zeta\in \Gamma_2
\end{array}
\]
with
\[
\begin{array}{l}
V^{(2)}_2 (\xi,\vec t) = \Psi^{(1)} (\xi^{-1}, \vec t)=\frac{(1+\xi^3) e^{\theta_2(\vec t)}+x_{1,1}(\xi^2-\xi +1) e^{\theta_3(\vec t)}+x_{1,2} e^{\theta_4(\vec t)}}{\xi^3+\xi^2 x_{1,1}-\xi x_{1,1}+x_{1,1}+x_{1,2}+1} \\
\quad\quad\quad\quad = \left(e^{\theta_2(\vec t)} + \frac{x_{1,1}}{\xi}e^{\theta_3(\vec t)} + \frac{x_{1,2}}{\xi^3}e^{\theta_4(\vec t)}\right)(1+O (\xi^{-1})),
\\
V^{(2)}_3 (\xi,\vec t) = \Psi^{(1)} (\xi, \vec t)=\frac{ (\xi+1)e^{\theta_2(\vec t)}+\xi x_{1,1} e^{\theta_3(\vec t)}+x_{1,2} e^{\theta_4(\vec t)}}{\xi (x_{1,1}+1) +x_{1,2}+1)} = \left(\frac{e^{\theta_2(\vec t)}+x_{1,1} e^{\theta_3(\vec t)}}{x_{1,1}+1}+ \frac{x_{12}e^{\theta_4(\vec t)}}{(1+x_{11})\xi}\right)(1+O (\xi^{-1})),
\end{array}
\]
\[
\begin{array}{ll}
B^{(2)}_2 &=\frac{(\xi x_{1,2}x_{2,1}-x_{1,2}x_{2,1}-x_{2,2})(\xi^3+\xi^2 x_{1,1}-\xi x_{1,1}+x_{1,1}+x_{1,2}+1)}{(\xi-1)^2\xi^2 x_{1,1}x_{1,2}}= \frac{x_{2,1}}{x_{1,1}} + O(\xi^{-1}),
\\
B^{(2)}_3 &=\frac{(\xi^2 x_{2,2}-\xi (x_{1,2}x_{2,1}+ x_{2,2})+x_{1,2} x_{2,1}+x_{2,2})(\xi (x_{1,1}+1)+ x_{1,2}+1)}{\xi x_{1,1}x_{1,2}(\xi -1)^2}
= \frac{x_{2,2}(1+x_{1,1})}{x_{1,1}x_{1,2}} + O(\xi^{-1}),
\\
\epsilon^{(2)}_1 &= \frac{\xi^2x_{2,2}-\xi x_{1,2}x_{2,1}+x_{1,2}x_{2,1}+x_{2,2}}{\xi^2x_{1,1}x_{1,2}(\xi-1)} = \frac{x_{2,2}}{x_{1,1}x_{1,2} \ \xi} + O(\xi^{-2}),\\
C^{(2)} &= 1+B^{(2)}_2+B^{(2)}_3,\quad \quad
\nu^{(2)}_1 =\frac{\xi^2 }{C^{(2)}},\quad\quad
\nu^{(2)}_2 =  \frac{\xi^2 (1+B^{(2)}_2) + 1+ B^{(2)}_3}{C^{(2)}},
\end{array}
\]
so that, in agreement with the estimates in Theorem \ref{theo:main0}, we have
\[
\frac{1}{C^{(2)}} = \hat B^{(2)}_1 + O(\xi^{-1}),\frac{B^{(2)}_2}{C^{(2)}} = \hat B^{(2)}_2 + O(\xi^{-1}), \quad\quad \frac{B^{(2)}_3}{C^{(2)}} = \hat B^{(2)}_3 + O(\xi^{-1}),
\]
with ${\hat B}^{(2)}_j$, $j\in [3]$, as in (\ref{eq:leading}).

The vacuum poles in $\Gamma_1$ in the local coordinate $\zeta$ take the form
\[
\begin{array}{l}
\zeta(b^{(1)}_1) = - \frac{1}{1+x_{1,1}} + \frac{x_{1,1}x_{1,2}}{\xi^2(1+x_{1,1})^3} +O(\xi^{-4}),\in ]-1,0[,\\
\zeta(b^{(1)}_2) = - \frac{1+x_{1,1}}{1+x_{1,1}+x_{1,2}} \xi^2 
 -\frac{x_{1,1}x_{1,2}}{(1+x_{1,1})(1+x_{1,1}+x_{1,2})} -\frac{x_{1,1}x_{1,2}}{\xi^2(1+x_{1,1})^3} +O(\xi^{-4})\in ]-\xi^2,-1[,
\end{array}
\]
and, using (\ref{eq:coorG1}), in the local coordinates $\tilde \zeta$ we have
\[
\begin{array}{l}
{\tilde \zeta}(b^{(1)}_1) = - \frac{\kappa_2x_{1,1} (\kappa_4-\kappa_3)+\kappa_3(\kappa_4-\kappa_2)}{x_{1,1}(\kappa_4-\kappa_3)+\kappa_4-\kappa_2} + \frac{x_{1,1}(1+x_{1,1}+x_{1,2})(\kappa_4-\kappa_3)(\kappa_4-\kappa_2)(\kappa_3-\kappa_2)}{\xi^2(1+x_{1,1})(x_{1,1}(\kappa_4-\kappa_3)+\kappa_4-\kappa_2)^2} +O(\xi^{-4}), \in ]\kappa_2,\kappa_3[\\
{\tilde \zeta}(b^{(1)}_2) = \kappa_4 -\frac{(\kappa_4-\kappa_2)(\kappa_4-\kappa_3)x_{1,2}}{\xi^2(1+x_{1,1})(\kappa_3-\kappa_2)} +O(\xi^{-4})\in ]\kappa_3,\kappa_4[.
\end{array}
\]
Similarly, the vacuum poles in $\Gamma_2$ in the local coordinate $\zeta$ take the form
\[ 
\begin{array}{ll}
\zeta(b^{(2)}_1) &= - \frac{x_{1,1}}{x_{1,1}+x_{2,1}}+\frac{x_{1,1} (x_{1,1}x_{1,2}x_{2,1}+x_{1,2}x_{2,1}-x_{2,2}}{\xi x_{1,2} (x_{1,1}+x_{2,1})^2}+ O(\xi^{-2})\in ]-1,0[,\\
\zeta(b^{(2)}_2) &= - \frac{x_{1,2} (x_{1,1}+x_{2,1})}{  x_{1,2} (x_{1,1}+x_{2,1})+x_{2,2} (1+x_{1,1})} \xi^2  +c^{(2)}_1 \xi + O(1)\in ]-\xi^2,-1[,
\end{array}
\]
with 
\[
\begin{array}{l}c^{(2)}_1 =  \frac{x_{1,1}^2(x_{1,2}^2x_{2,1}+x_{1,2}x_{2,1}x_{2,2}-x_{1,2}x_{2,2})+x_{1,2}^2(x_{1,1}x_{2,1}^2+x_{1,1}x_{2,1}-x_{1,1}x_{2,2}+x_{2,1}^2-x_{2,1}x_{2,2})}{(x_{1,1}(x_{1,2}+x_{2,2})+x_{1,2}x_{2,1}+x_{2,2})^2}+\\
\quad\quad+\frac{x_{1,1}x_{1,2}x_{2,1}x_{2,2}-2x_{1,1}x_{1,2}x_{2,2}-x_{1,1}x_{2,2}^2-x_{1,2}x_{2,1}x_{2,2}-x_{2,2}^2}{(x_{1,1}(x_{1,2}+x_{2,2})+x_{1,2}x_{2,1}+x_{2,2})^2},
\end{array}
\]
so that, using (\ref{eq:coorG2}), in the local coordinates $\tilde \zeta$ we have
\[
\begin{array}{l}
{\tilde \zeta}(b^{(2)}_1) =\frac{\kappa_2\kappa_4x_{1,1}}{\kappa_4 x_{11}+x_{21}(\kappa_4-\kappa_2)}- \frac{(\kappa_4-\kappa_3)(\kappa_4-\kappa_2)}{(\kappa_3-\kappa_2)\xi}, \in ]0,{\tilde \zeta}(\alpha^{(1)}_2)[,\\
{\tilde \zeta}(b^{(2)}_2 )= {\tilde \zeta}(\alpha^{(1)}_3) -\frac{x_{2,2}(1+x_{11})\kappa_4(\kappa_4-\kappa_2)}{\xi^2x_{1,2}\kappa_2(x_{1,1}+x_{2,1})} + O(\xi^{-3})\in ]{\tilde \zeta}(\alpha^{(1)}_2),{\tilde \zeta}(\alpha^{(1)}_3)[.
\end{array}
\]
The effective divisor may be easily computed applying the Darboux transformation generated by $f^{(1)} (\vec t) = e^{\theta_2}+ x_{1,1} e^{\theta_3} + x_{1,2}e^{\theta_4}$ and $f^{(2)} (\vec t)= e^{\theta_1} + \frac{x_{2,2} + x_{1,2}x_{2,1}}{x_{1,1}x_{1,2}}e^{\theta_2}+\frac{x_{2,2}}{x_{1,2}}e^{\theta_3}$.

\appendix

\section{Points in $Gr^{\mbox{\tiny TP}}(N,M)$ and totally positive matrices in classical sense}
\label{sec:totpos}

In this Section we provide the characterization of the positivity properties of the banded matrices introduced in 
Definition~\ref{def:bandmat}. We show that such matrices are totally positive in classical sense (see Proposition \ref{prop:pos}) and we parametrize $Gr^{\mbox{\tiny TP}} (N,M)$ using the local coordinates $x_{r,s}$, $r\in [N]$, $s\in [M-N]$, which are naturally associated to our construction (see Proposition \ref{prop:xcoor}). We believe that all results presented in this Appendix are known 
to experts, but we failed to find an appropriate reference for some of them.

Points in the totally positive Grassmannian $Gr^{\mbox{\tiny TP}}(N,M)$ may be represented by real $N	\times M$ matrices with all maximal ($N\times N$) minors strictly positive. 
$Gr^{\mbox{\tiny TP}}(N,M)$ is the top cell in the sense of Postnikov's decomposition \cite{Pos} of the totally non-negative Grassmannian $Gr^{\mbox{\tiny TP}}(N,M)$.

Let us discuss the relations between total positivity of matrices in the classical sense and the property of total positivity in 
the Grassmannian.  

\begin{definition}
We recall that a matrix $B$ is called totally positive (respectively strictly totally positive) if all minors of all orders of $B$ are non-negative (respectively positive) \cite{Pinkus}.
\end{definition}

It is easy to establish the following natural connection between points of  
$Gr^{\mbox{\tiny TP}}(N,M)$ and  $N\times (M-N)$ strictly  totally positive matrices (see \cite{Pos}): let 
the $N\times M$ matrix $A$, represent a point in $Gr^{\mbox{\tiny TP}}(N,M)$. Then, using the standard elementary operations on rows it can be uniquely transformed to reduced row echelon form:
\begin{equation}\label{A_RRE}
A^{\mbox{\tiny RRE}}=\left[\begin{array}{cccccccccc}
1             & \cdots  & 0             & 0            & 0 & | & \pm b_{N \, 1} & \pm b_{N \, 2} & \cdots & \pm b_{N \, M-N} \\
\hphantom{0}  & \ddots  & \hphantom{0}  & \hphantom{0}  & \hphantom{0} & |              &                & \cdots &                \\
0 & \cdots  & 1 & 0 & 0 & | & b_{3\, 1} &  b_{3 \, 2} & \cdots & b_{3 \, M-N} \\
0 & \cdots  & 0 & 1 & 0 & | & -b_{2 \, 1} &  -b_{2 \, 2} & \cdots & -b_{2 \, M-N} \\
0 & \cdots  & 0 & 0 & 1 & | & b_{1 \, 1} &  b_{1 \, 2} & \cdots & b_{1 \, M-N} 
\end{array}\right]
\end{equation}
where the matrix $B$
\begin{equation}\label{eq:BTP}
{B}=\left[\begin{array}{cccc}
b_{1 \, 1} &  b_{1 \, 2} & \cdots & b_{1 \, M-N} \\
b_{2 \, 1} &  b_{2 \, 2} & \cdots & b_{2 \, M-N} \\
b_{3\, 1} &  b_{3 \, 2} & \cdots & b_{3 \, M-N} \\
                & \cdots &     &           \\
b_{N \, 1} & b_{N \, 2} & \cdots & b_{N \, M-N} \\
\end{array}\right]
\end{equation}
is strictly totally positive.

A convenient characterization of strictly totally positive matrices is the following.

\begin{theorem}(Theorem 2.3 page 39, \cite{Pinkus})\label{theo:STP39}
$B$ is strictly totally positive if and only if all $k$-th order minors of $B$ composed by the first $k$ rows and $k$ consecutive columns, and also all $k$-th order minors of $B$ composed by the first $k$ columns and $k$ consecutive rows are strictly positive for $k=1,\dots, \min\{N,M{-N}\}$. 
\end{theorem}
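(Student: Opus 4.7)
The ``only if'' direction is immediate: strict total positivity includes in particular all minors listed in the hypothesis. For the converse, my plan is to perform two successive reductions, each proved by induction using Sylvester's (Desnanot--Jacobi) determinantal identity, which expresses a bigger minor as a combination of smaller ones.

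\textbf{Step 1 (Fekete's lemma).} I first show that it suffices to prove positivity of every minor of $B$ whose row and column index sets are \emph{both} sets of consecutive integers; denote such a minor by $M(i,j,k):=\det B^{[i,\ldots,i+k-1]}_{[j,\ldots,j+k-1]}$. This reduction is Fekete's classical lemma: given a minor with a gap in its row or column index set, one borders it using an intermediate index and applies Sylvester's identity to express it as a rational combination of minors with strictly smaller total index-set gap; an induction on gap size then reaches the consecutive case, and because every denominator appearing along the way is itself a consecutive minor, all quantities remain positive at each step.

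\textbf{Step 2 (Anchored corners imply all consecutive minors).} I then prove $M(i,j,k)>0$ for all admissible $(i,j,k)$ by induction on the parameter $s:=i+j+k$. The base case $s=3$ gives $M(1,1,1)>0$, directly from the hypothesis. In the inductive step, if $i=1$ or $j=1$ the conclusion is already the hypothesis, so assume $i,j\ge 2$. Applying the Desnanot--Jacobi identity to the $(k{+}1)\times(k{+}1)$ submatrix $B^{[i-1,i,\ldots,i+k-1]}_{[j-1,j,\ldots,j+k-1]}$, with the convention $M(\cdot,\cdot,0):=1$, one obtains
\begin{equation*}
M(i{-}1,j{-}1,k)\,M(i,j,k) \;=\; M(i{-}1,j{-}1,k{+}1)\,M(i,j,k{-}1) + M(i{-}1,j,k)\,M(i,j{-}1,k).
\end{equation*}
Each minor in this identity other than $M(i,j,k)$ has strictly smaller value of $i+j+k$, hence is strictly positive by the inductive hypothesis; dividing through by $M(i-1,j-1,k)>0$ yields $M(i,j,k)>0$.

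The main subtlety is choosing the right inductive ordering so that Sylvester's identity genuinely advances the argument: the correct move is to border the target on the \emph{top-left} (pulling in $M(i-1,j-1,\cdot)$, which has smaller $i+j+k$) rather than on the bottom-right, and the composite parameter $i+j+k$ is the right inductive quantity---inducting separately on $i+j$ or on $k$ runs into circular dependencies since the identity simultaneously involves both smaller and larger values of $k$. Combining the two reductions yields the strict total positivity of $B$, completing the proof.
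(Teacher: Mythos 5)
The paper does not prove this statement at all --- it is quoted verbatim from Pinkus's book (Theorem~2.3, p.~39) --- so there is no internal proof to compare against; your argument has to stand on its own, and it essentially does, reproducing the standard textbook route to this ``initial minors'' criterion. Your Step~2 is complete and correct: the Desnanot--Jacobi identity applied to $B^{[i-1,\dots,i+k-1]}_{[j-1,\dots,j+k-1]}$ gives exactly the relation you display (with the signs as written), every other minor occurring in it is admissible (since $i-1,j-1\ge 1$ and the bottom-right corner of the bordered block is unchanged) and has strictly smaller $i+j+k$, the cases $i=1$ or $j=1$ are precisely the hypothesis, and the convention $M(\cdot,\cdot,0)=1$ handles $k=1$; so the strong induction on $i+j+k$ is sound and yields positivity of all solid (consecutive-row, consecutive-column) minors. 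Step~1 is correct as a statement --- Fekete's criterion (all solid minors positive implies strict total positivity) is itself available in the very reference the paper cites --- but your one-sentence sketch of its proof is the soft spot: the usual argument runs in two stages, first removing gaps in the column set while keeping the rows consecutive, then removing gaps in the row set, and in the second stage the auxiliary minors whose positivity is needed are consecutive-row but non-consecutive-column minors, i.e.\ they are positive by virtue of the first stage, not ``because every denominator appearing along the way is itself a consecutive minor'' as you assert; the naive three-term manipulation also needs care with signs. Since you invoke Fekete's lemma by name as a known result this imprecision does not invalidate the proof, but if Step~1 is meant to be self-contained, the two-stage induction should be spelled out.
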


The number of such minors is $N\times( M-N)$ and they form a basis of coordinates for strictly totally positive $N\times (M-N)$ matrices, since  all of the other minors of $B$ may be expressed in terms of subtraction-free rational functions of such coordinates.  Since any  maximal minor of $A^{\mbox{\tiny RRE}}$ is expressed as a minor of $B$, also all of the maximal minors of $A^{\mbox{\tiny RRE}}$ are expressed as subtraction free rational functions of such coordinates, that is
they form a totally positive basis in Fomin--Zelevinsky sense \cite{FZ1}.

In \cite{Tal}, Talaska studies the problem of reconstructing an element $A \in Gr^{\mbox{\tiny TNN}} (N,M)$ from a subset of its Pl\"ucker coordinates $\Delta_I (A)$. For each cell in the Gelfand--Serganova decomposition of $Gr^{\mbox{\tiny TNN}} (N,M)$ (see for instance \cite{Pos} for necessary definitions), she characterizes a minimal set of Pl\"ucker coordinates $T(L)$ sufficient to reconstruct the corresponding element using Postnikov boundary measurement map and Le--diagrams \cite{Pos}. 
In this way, she constructs a totally positive basis in  Fomin--Zelevinsky sense $T(L)$ associated to the Le--diagram \cite{Pos} of any point in $Gr^{\mbox{\tiny TNN}} (N,M)$. 

It is straightforward to check that for points in $Gr^{\mbox{\tiny TP}} (N,M)$, Talaska's basis of minors concides with the basis in Theorem \ref{theo:STP39}. 

For our purposes it is convenient to transform the matrix $A^{(RRE)}$ to the 
{\bf banded form}:

\begin{equation}
\label{eq:our_form1}
{A}=\left[ \begin{array}{cccccccccccc}  
1 & A_2^1 &  A_3^1 & A_4^1 &\ldots & A_{M-N+1}^1& 0 & 0 & \ldots  & 0 & 0 & 0\\
0 & 1 &  A_3^2 & A_4^2 & \ldots &   A_{M-N+1}^2 & A_{M-N+2}^2  & 0 & \ldots & 0 & 0 & 0 \\
0 & 0 &  1 & A_4^3 &  \ldots  & A_{M-N+1}^3 & A_{M-N+2}^3 &  A_{M-N+3}^3  &  \ldots & 0 & 0 & 0  \\
 & & & & & \cdots & & & & & \\
0 & 0 &  0 & \ldots & 0 &  1 & \ldots & \ldots & \ldots &  A_{M-2}^{N-1} & A_{M-1}^{N-1} & 0  \\
0 & 0 & 0 &  0 & \ldots & 0 &  1 & \ldots & \ldots &  A_{M-2}^{N}  &  A_{M-1}^{N} & A_{M}^{N}  \\
\end{array}\right]
\end{equation}

Here all elements ${A}^i_j$ with $j<i$ or $j>M-N+i$ are 0. 

This transformation can be achieved by applying the Gauss elimination process starting from the last column.\footnote{
We observe that this transformation from the reduced row echelon form to the banded form corresponds to left multiplication by a $N\times N$ upper triangular matrix with unit determinant, therefore it preserves the point of the Grassmannian.}  

In Proposition \ref{prop:pos} we show that $A$ as in (\ref{eq:our_form1}) is a totally positive matrix in classical sense. For the proof we need the following result

\begin{lemma}\label{lem:TP-1}
Let $A$ be a $N\times M$ matrix in the banded form (\ref{eq:our_form1}) representing a point of $Gr^{\mbox{\tiny TP}}(N,M)$.
Consider all $n\times n$ submatrices consisting of consecutive rows and arbitrary columns in increasing order:
$A^{[i,i+1,\ldots,i+n-1]}_{[j_1,j_2,\ldots,j_n]}$ for all $n\in[N]$, $i\in[N-n+1]$. Then their determinants are non-negative:
$$
\Delta^{[i,i+1,\ldots,i+n-1]}_{[j_1,j_2,\ldots,j_n]}\ge 0.
$$
Moreover, this determinant is positive if and only if the corresponding submatrix has no zero columns or rows. 
In particular, all elements $A^i_j$ with $i\le j\le M-N+i$ are positive. 
\end{lemma}
\begin{proof}
If the submatrix has a zero column, then the associated minor is zero. The condition that the matrix 
has no zero columns means exactly that $j_1\ge i$, $j_n \le M-N+i+n-1$. Then
$$
\Delta^{[i,i+1,\ldots,i+n-1]}_{[j_1,j_2,\ldots,j_n]}=
\Delta_{[1,2,\ldots,i-1,j_1,j_2,\ldots,j_n,M-N+i+n,\ldots,M]}>0.
$$
In particular, for  $i\le j\le M-N+i$,
$$
A^i_j=\Delta_{[1,2,\ldots,i-1,j,M-N+i+1,\ldots,M]}>0.
$$
(Condition $i\le j\le M-N+i$ guarantees that this minor has no repeating columns and the columns are 
in increasing order).
\end{proof}

\begin{prop}\label{prop:pos}
Let $A$ be a $N\times M$ matrix in the banded form (\ref{eq:our_form1}) representing a point of $Gr^{\mbox{\tiny TP}}(N,M)$. Then 
the matrix $A$ is totally positive. 
\end{prop}
\begin{proof}
We know already that all maximal ($N\times N$) minors are strictly positive. By Theorem~2.13 in \cite{Pinkus}, page 56,  
Lemma~\ref{lem:TP-1} implies that the matrix $A$ is totally positive.
\end{proof}

The following proposition shows that the representation of a point in $Gr^{\mbox{\tiny TP}}(N,M)$ through a totally positive matrix in banded form as in (\ref{eq:our_form1}) is naturally linked to the strictly totally positive $N\times(M-N)$ matrix $B$ defined in (\ref{eq:BTP}) and gives another criterion to check the total positivity property. 

\begin{prop}\label{prop:xcoor}
Let $A$ be a matrix in banded form with 
$A^i_i =1$, $i\in N$, and $A^i_j =0$ if and only if $j<i$ or $j>M-N+i$, with $i\in [N]$, $j\in [M]$. Let
\begin{equation}\label{simo_x}
x_{r,s} = \Delta^{[N-r+1,\dots,N]}_{[N-r+1+s,\dots,N+s]}(A),\quad\quad r\in [N],\quad s\in [M-N].
\end{equation} 
Then $A$ represents a point of $Gr^{\mbox{\tiny TP}}(N,M)$ if and only if $x_{r,s}>0$, $\forall
r\in [N]$, $\forall s\in [M-N]$.
Moreover in such a case
\begin{equation}\label{eq:ciccc}
x_{r,s} = \left\{ \begin{array}{ll} \Delta^{[1\dots r]}_{[s..\dots s+r-1]} (B), & \quad r\le s\le  M-N-k+1,\\[0.5ex]
\Delta^{[r\dots r+s-1]}_{[1\dots s]} (B), & \quad s< r \le N-s,
\end{array}
\right.
\end{equation}
with $B$ as in (\ref{eq:BTP}). 
\end{prop}

\begin{proof}
The minors $x_{r,s}$ may be transformed to maximal $N\times N$ minors of $A$, so they have to be all positive.
Let now $A$ be in banded form with all $x_{r,s}>0$ as defined in (\ref{simo_x}) and put it in RRE form. By definition it takes the form as in (\ref{A_RRE}) with pivot set $\{1,\dots,N\}$. Let $B$ be the associated matrix as in (\ref{eq:BTP}). Then the minors of $B$ formed by the first $r$ rows and consecutive columns and the minors formed by the first $r$ columns and $r$ consecutive rows, by construction, are just the $x_{r,s}$ minors of {the} matrix $A$ ($r\in [N]$, $s\in [M-N]$) as in (\ref{eq:ciccc}).

Then by Theorem \ref{theo:STP39}, $B$ is strictly totally positive if and only if $x_{r,s}$ are all positive and, in such case, $A$ represents a point $Gr^{\mbox{\tiny TP}}(N,M)$.
\end{proof}

\begin{remark}
The coordinates $x_{r,s}$ coincide with the basis of minors in Lemma \ref{lem:TP-1} and with a totally positive basis in Fomin--Zelevinsky sense, and we shall refer to them simply as the {\bf FZ-basis}. 
\end{remark}

The following Corollary is the key observation which allows to express the 
recursive construction of the $\mathtt M$--curve and of the wavefunction in invariant form.

\begin{corollary}\label{cor:ultime}
Let $A$ be the banded totally positive matrix defined above and representing a given point in $Gr^{\mbox{\tiny TP}} (N,M)$.
Then all of its minors of any order are either zero because they contain a zero row or a zero column, or they are subtraction--free rational espressions in the FZ--basis $x_{r,s}$.
In particular the minors of $A$ formed by the last $r$ rows and $r$ columns are subtraction free rational expressions of the elements $x_{l,s}$, $l\in [r]$, $s\in [M-N]$ of the FZ-basis.
\end{corollary}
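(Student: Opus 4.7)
The plan is to reduce the statement to two ingredients already available in the section: first, every nonzero minor of the banded matrix $A$ coincides with a maximal $N\times N$ minor $\Delta_J(A)$ via the diagonal of $1$'s; and second, by Theorem \ref{theo:xcoor} and Corollary \ref{cor:FZ}, every such maximal minor is a subtraction--free rational expression in the FZ--basis $\{x_{r,s}\}$, since $\Delta_J(A)$ equals (up to sign) a minor of the strictly totally positive matrix $B$ of (\ref{eq:BTP}) and the Fomin--Zelevinsky/Talaska parametrization of totally positive matrices in the FZ--basis is subtraction--free.

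First I would handle a generic submatrix $A^{[i_1,\dots,i_n]}_{[j_1,\dots,j_n]}$. If some row or column of it lies entirely outside the band (equivalently a zero row or zero column of the submatrix), the minor vanishes by direct inspection of (\ref{eq:our_form1}). Otherwise I would reuse the lifting trick already used in the proof of Lemma~\ref{lem:TP-1}: append the missing row indices $\{1,\dots,N\}\setminus\{i_1,\dots,i_n\}$ together with the corresponding diagonal columns (where $A^{i}_i=1$), so that the added block is the identity; a Laplace expansion along the added rows then yields
\[
\Delta^{[i_1,\dots,i_n]}_{[j_1,\dots,j_n]}(A)=\Delta_{J}(A)
\]
for a well--defined $N$--subset $J\subset[M]$. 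Applying the second ingredient to $\Delta_J(A)$ gives the first assertion of the corollary.

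For the refined statement I would observe that the last $r$ rows of $A$ restricted to the columns $[N-r+1,M]$ form a smaller matrix $\tilde A$ which is itself in banded form (\ref{eq:our_form1}) with parameters $(r,M-N+r)$: the diagonal entries are $1$'s, and by Lemma~\ref{lem:TP-1} applied to $A$ all its maximal $r\times r$ minors are strictly positive, so $\tilde A$ represents a point of $Gr^{\mbox{\tiny TP}}(r,M-N+r)$. A direct index translation shows that definition (\ref{simo_x}) applied to $\tilde A$ returns exactly the subset $\{x_{l,s}:l\in[r],\,s\in[M-N]\}$ of the FZ--basis of $A$. Since any $r\times r$ minor of $A$ built from the last $r$ rows is, after discarding the identically zero columns indexed by $[1,N-r]$, a minor of $\tilde A$, the first assertion applied to $\tilde A$ yields the sharpened claim. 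The one delicate point I anticipate is the index bookkeeping which identifies the FZ--basis of $\tilde A$ with the advertised subset; the subtraction--free character of the resulting expression is then the general Fomin--Zelevinsky fact already invoked in the first paragraph.
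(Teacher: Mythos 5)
Your reduction of the general case fails at its first step. You claim that any nonzero minor $\Delta^{[i_1,\dots,i_n]}_{[j_1,\dots,j_n]}(A)$ can be promoted to a maximal minor $\Delta_J(A)$ by appending the missing rows together with their diagonal columns and Laplace-expanding. This is not true: the appended block is an ``identity'' only in the diagonal entries, but the missing rows generally have further nonzero entries in the selected columns $j_1,\dots,j_n$, and the selected rows have nonzero entries in the appended diagonal columns, so there is no block-triangular structure; moreover a needed diagonal column may already occur among the $j_s$. The trick works in Lemma~\ref{lem:TP-1} only because the rows there are \emph{consecutive} and the appended columns ($1,\dots,i-1$ on the left, $M-N+i+n,\dots,M$ on the right) lie outside the band of the selected rows. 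Concretely, take $Gr^{\mbox{\tiny TP}}(2,4)$ with
\[
A=\begin{pmatrix} 1 & A^1_2 & A^1_3 & 0\\ 0 & 1 & A^2_3 & A^2_4\end{pmatrix},
\]
and the $1\times 1$ minor with row $\{1\}$, column $\{3\}$, i.e.\ $A^1_3$. Your recipe appends row $2$ and column $2$ and would give $\Delta_{[2,3]}=A^1_2A^2_3-A^1_3$, which does not equal $A^1_3$; in fact $A^1_3=x_{2,2}/x_{1,2}$ is a genuine \emph{ratio} of FZ-coordinates and generically coincides with no Pl\"ucker coordinate at all (and for the column $\{2\}$ the recipe is not even defined, since the diagonal column collides). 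So the statement that every nonzero minor ``coincides with a maximal $N\times N$ minor'' is false, and with it the proof of the main assertion collapses: one really must pass through the strictly totally positive matrix $B$ of (\ref{eq:BTP}) (equivalently the Fomin--Zelevinsky/Talaska subtraction-free expression of \emph{all} minors of a strictly TP matrix in terms of the initial minors of Theorem~\ref{theo:STP39}, which are the $x_{r,s}$ by Theorem~\ref{theo:xcoor}), rather than identify each minor of $A$ with a Pl\"ucker coordinate.

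Your treatment of the ``in particular'' part is essentially sound and could be salvaged independently of the flawed first step: a nonzero minor on the last $r$ rows must use columns $\ge N-r+1$, hence is a \emph{maximal} minor of the banded $r\times(M-N+r)$ submatrix $\tilde A$, which by Lemma~\ref{lem:TP-1} represents a point of $Gr^{\mbox{\tiny TP}}(r,M-N+r)$ whose FZ-basis is exactly $\{x_{l,s}:l\in[r],\,s\in[M-N]\}$; for maximal minors the subtraction-free property is precisely the FZ-basis statement of Theorem~\ref{theo:xcoor}/Corollary~\ref{cor:FZ}, so you do not need the (unproven) general assertion there. But as written, the general claim about minors of arbitrary row sets remains unproven.
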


We also require the following version of Fekete's Lemma (see \cite{Pinkus}, page 37), adapted to our setting:

\begin{lemma}
\label{lem:l2}
Let ${N}\le M$ and assume $A$ to be a $N\times M$ banded matrix in the form (\ref{eq:our_form1}) with the following properties:
\begin{enumerate}
\item Consider the submatrix $\hat A$ obtained from $A$ by removing the first row and the first column. 
All $N-1$-order minors of $\hat A$ are positive.
\item All $N$-order minors of $A$ composed from consecutive columns are also positive. 
\end{enumerate}
Then all $N$-order minors of $A$ are positive.
\end{lemma}

\section{Lemmas for the proof of Theorem~\ref{theo:main0}}
\label{sec:lemmas}

The proof of Theorem \ref{theo:main0} requires a series of analytic estimates which we provide in this Appendix. We assume here that
$\xi\gg1$.

In the next Lemma for a fixed $r\in[N]$ we assume that the vacuum wave function at infinity is a linear 
combination of its values at the double points $\lambda_j$ with positive coefficients, and we show that there exists an unique 
collection of corresponding divisor points $b_k$ located in proper intervals, we provide the estimates on the positions of the
points $b_k$, and for all marked points $\alpha_n$ we compute  the vacuum wave function at leading order in  $\xi$ for all phases. 
The points $\alpha_n$ are the double points attached to the next component of the spectral curve.

\begin{lemma}\label{lemma:C}
Let $c_n>0$, $n\in [M-N+1]$ and such that $\displaystyle \sum\limits_{n=1}^{M-N+1} c_n=1.$
Let $\lambda_1=0$, $\lambda_k =-\xi^{2(k-2)}$, $k=2,\dots,M-N+1$ and define
\[
C_n (\zeta) = c_n \frac{ \prod_{j\not = n} (\zeta-\lambda_j)}{ \prod_{k=1}^{M-N} (\zeta-b_k)}, \quad\quad n\in[M-N+1].
\]
Then 
\[
C_n (\lambda_j) = \delta^n_j \quad \quad \forall j,n\in [M-N+1],
\]
for uniquely defined poles  $b_k{=b_k(\xi)}\in ] \lambda_{k+1}, \lambda_{k}[$, $k\in [M-N]$, such that {for $\xi \gg1$},
\[
b_k ({\xi})= - \frac{\sum_{j=1}^k c_j }{\sum_{j=1}^{k+1} c_j} \xi^{2(k-1)} (1+O(\xi^{-1)})).
\]
Moreover, in such case $\forall \zeta\in {\mathbb C}$, $\sum_{n=1}^{M-N+1} C_n(\zeta)=1$ and
\begin{equation}\label{eq:Caysm}
C_j (\pm \xi^{2s-5}) = \left\{ \begin{array}{ll} \frac{c_j}{\sum_{l=1}^{s-1} c_l} {\cdot(1+O(\xi^{-1}))}& \quad j\in [{2},s-1],\\[0.5ex]
\pm \frac{c_j }{\left(\sum_{l=1}^{s-1} c_l\right)} {\cdot\frac{(1+O(\xi^{-1}))}{\xi^{2(j-s)+1}}}&\quad j\in [s,M-N+1].
\end{array}\right.
\end{equation}
\end{lemma}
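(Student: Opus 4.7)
The conditions $C_n(\lambda_j) = \delta^n_j$ for $j \neq n$ hold automatically because the numerator of $C_n$ vanishes at each $\lambda_j$ with $j \neq n$, so the only nontrivial requirements are $C_n(\lambda_n) = 1$ for $n \in [M-N+1]$, i.e.\
\[
\prod_{k=1}^{M-N}(\lambda_n - b_k) = c_n \prod_{j \neq n}(\lambda_n - \lambda_j).
\]
The plan is to introduce the auxiliary polynomial
\[
P(\lambda) := \sum_{n=1}^{M-N+1} c_n \prod_{j \neq n}(\lambda - \lambda_j),
\]
which is monic of degree $M-N$ since $\sum_n c_n = 1$; moreover $P(\lambda_n) = c_n \prod_{j \neq n}(\lambda_n - \lambda_j)$, because every other summand contains the factor $(\lambda_n - \lambda_n)=0$. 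Hence the nontrivial conditions say precisely that the monic polynomial $\prod_k(\lambda - b_k)$ and $P(\lambda)$ coincide at the $M-N+1$ distinct points $\lambda_n$; their difference has degree $\le M-N-1$ and vanishes at $M-N+1$ points, so it is identically zero. Thus the $b_k$ are exactly the roots of $P$, uniquely determined as a multiset.

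To locate these roots I would compute $\mathrm{sgn}\, P(\lambda_n) = (-1)^{n-1}$: indeed $c_n > 0$ and among $\{j \neq n\}$ exactly $n-1$ indices satisfy $\lambda_j > \lambda_n$. Hence $P$ changes sign on each of the $M-N$ open intervals $]\lambda_{k+1}, \lambda_k[$, producing at least one real root per interval; since $\deg P = M-N$, these exhaust the roots, all of which are simple. Labelling $b_k$ as the root in $]\lambda_{k+1}, \lambda_k[$, and dividing $\prod_k(\lambda - b_k) \equiv P(\lambda)$ by $\prod_k(\lambda - b_k)$, immediately yields $\sum_n C_n(\lambda) \equiv 1$.

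For the asymptotic of $b_k$, I would substitute $\lambda = -\beta\,\xi^{2(k-1)}$ with $\beta \in (0,1)$ into $P(\lambda)$ and separate factors by scale: for $j \le k$, $\lambda - \lambda_j = \lambda(1 + O(\xi^{-2}))$; for $j = k+1$, $\lambda - \lambda_{k+1} = \xi^{2(k-1)}(1 - \beta)$; and for $j \ge k+2$, $\lambda - \lambda_j = -\lambda_j(1 + O(\xi^{-2}))$. After factoring out the common quantity $\lambda^{k-1}\prod_{j=k+2}^{M-N+1}\xi^{2(j-2)}$, the $n \le k$ summands of $P(\lambda)$ combine to $(1-\beta)\,\xi^{2(k-1)} \sum_{n=1}^k c_n$, the $n = k+1$ summand contributes $-\beta\, c_{k+1}\, \xi^{2(k-1)}$, and the $n \ge k+2$ summands are subleading. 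Solving $(1-\beta)\sum_{n=1}^k c_n = \beta\, c_{k+1}$ gives $\beta = \sum_{n=1}^k c_n / \sum_{n=1}^{k+1} c_n$, with the subleading contributions absorbed into the stated $O(\xi^{-1})$ error.

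For (\ref{eq:Caysm}), I would estimate the factors of $C_j(\alpha)$ with $\alpha = \pm\xi^{2s-5}$ by comparing scales: $\alpha - \lambda_i \sim \alpha$ for $i \le s-1$ and $\alpha - \lambda_i \sim -\lambda_i > 0$ for $i \ge s$; similarly $\alpha - b_k \sim \alpha$ for $k \le s-2$ and $\alpha - b_k \sim -b_k = \beta_k \xi^{2(k-1)} > 0$ for $k \ge s-1$, with $\beta_k = \sum_{n=1}^k c_n / \sum_{n=1}^{k+1} c_n$ from the previous step. The telescoping identity $\prod_{k=s-1}^{M-N} \beta_k = \sum_{n=1}^{s-1} c_n$ collapses the $\xi^{2(k-1)}$ factors between numerator and denominator, yielding $C_j(\alpha) \sim c_j / \sum_{l=1}^{s-1} c_l$ in Case A ($j \le s-1$), while in Case B ($j \ge s$) one factor $\alpha$ survives and is divided by $\xi^{2(j-2)}$, giving the stated $\pm\xi^{-(2(j-s)+1)}$ prefactor; the sign is that of $\alpha$ because Case B contains one more $\alpha$-dominated factor in the numerator than the denominator, while the two counts agree in Case A. The main obstacle, though essentially routine, is bookkeeping the $O(\xi^{-1})$ error bounds uniformly across the various case distinctions, especially at the transitional scales $i \sim s$ or $k \sim s-1$ where two scales in the same factor become comparable.
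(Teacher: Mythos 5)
Your proposal is correct, and its skeleton coincides with the paper's own argument: the paper also works with the monic polynomial $P(\lambda)=\prod_k(\lambda-b_k)$ determined by the interpolation conditions $P(\lambda_j)=c_j\prod_{k\neq j}(\lambda_j-\lambda_k)$, uses the sign alternation coming from $c_j>0$ to place exactly one simple root in each interval $]\lambda_{k+1},\lambda_k[$, and proves $\sum_n C_n\equiv 1$ by the same ``agrees with $1$ at $M-N+1$ points'' reasoning (your version, dividing the explicit Lagrange-type sum $\sum_n c_n\prod_{j\neq n}(\lambda-\lambda_j)$ by $\prod_k(\lambda-b_k)$, is the same identity made explicit). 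The one step where you genuinely diverge is the asymptotics of the $b_k$: the paper equates the elementary symmetric functions of the $b_k$ with the $c_j$-weighted symmetric functions of the $\lambda_j$'s and reads off the leading ratio $\sum_{j\le k}c_j/\sum_{j\le k+1}c_j$ globally, whereas you perform a local dominant-balance analysis at the scale $\lambda=-\beta\xi^{2(k-1)}$ and solve $(1-\beta)\sum_{j\le k}c_j=\beta c_{k+1}$. Both give the stated leading term with relative error $O(\xi^{-1})$; your route is more elementary and self-contained, but it must lean on the previously established fact that the interval contains exactly one root in order to identify the root found by the sign change as $b_k$ (which you do have), while the paper's symmetric-function computation has the side benefit of showing directly that the symmetric functions of the poles are rational in $\xi$ with coefficients in the $c_j$'s, a fact invoked later in Proposition \ref{prop:1} and Theorem \ref{theo:main}. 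Your treatment of (\ref{eq:Caysm}), including the telescoping product $\prod_{k=s-1}^{M-N}\beta_k=\sum_{l=1}^{s-1}c_l$ and the sign bookkeeping, fills in exactly what the paper leaves as ``easily follows''; note only that the relevant adjacent scales ($|\alpha|$ versus $|\lambda_s|$ or $|b_{s-1}|$) differ by a single power of $\xi$ rather than becoming comparable, which is precisely why the final relative error is $O(\xi^{-1})$ and not better.
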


\begin{proof}
Let $P(\zeta) = \prod_{k=1}^{M-N} (\zeta-b_{k})$. Then
$C_j(\lambda_j)=1$ if and only if $P(\lambda_j)= c_j \prod_{k\not = j} (\lambda_j -\lambda_k)$, $j\in [M-N+1]$. Thanks to the positivity of the coefficients $c_j$,  $P(\lambda_j)$ and $P(\lambda_{j+1})$ have opposite signs $j\in[M-N]$ {so that} poles $b_{{k}} \in]\lambda_{{k}+1},\lambda_{{k}}[$, ${k}\in [M-N]$.  

By construction $Q(\zeta) = \sum\limits_{j=1}^{M-N+1} C_j(\zeta)$ is a rational function of degree
 less than or equal to $M-N$ and takes the value $1$ in $M-N+1$ points, from which we conclude that it is constant to 1 everywhere.

The estimate for the leading order expansion of $b_{k}$, $k\in [M-N]$, for $\xi\gg1$, follows from
the fact that, for any $l\in [M-N]$, the $l$-th symmetric product in $b_{k}$s is a linear combination of the $l$-th symmetric products in $\lambda_l$ for $l\not = j$, $j\in [M-N+1]$, that is
\[
\begin{array}{l}
{\hat \pi}_l (b_1,\dots, b_{M-N})  \equiv
\sum_{1\le j_1 <j_2 <\cdots < j_l\le M-N} \left(\prod_{s=1}^l b_{j_s}\right)
=\sum_{j=1}^{M-N+1} c_j {\hat \pi}_l (\lambda_1,\dots,
{\hat \lambda}_j, \dots,\lambda_{M-N+1}) \\[0.5ex]
\sum_{j=1}^{M-N+1} c_j \left( \sum^{\prime}_{1\le j_1 <j_2 <\cdots < j_l\le M-N+1} \left(\prod_{s=1}^l \lambda_{j_s}\right) \right)
 =\left( \sum_{j=1}^{M-N+1-l} c_j \right) \xi^{p(l)} + l.o.t., 
\end{array}
\]
where 
\[
p(l) = 2\sum_{j=M-N-l}^{M-N-1} j = l(2M-2N-1-l),
\]
from which we easily get the assertion on the leading order behavior of the poles.

Finally the estimate on the asymptotic behavior of $C_j (\alpha_s)$ , $s\in [2,M-N]$ easily follows taking into account of the leading orders of $\lambda_j$s  and $b_{k}$s.
\end{proof}

In the next Lemma we use the fact that the vacuum wave function at infinity $\Psi^{(r)}_{\infty} (\vec t)$ is a linear combination of 
its known values at the double points $\alpha_n^{(r-1)}$  with unknown coefficients $B^{(r)}_n$. We impose that  
$\Psi^{(r)}_{\infty} (\vec t)$ is equal to the heat hierarchy solution $f^{r}(\vec t)$ associated to the $N-r+1$-th row of the banded 
matrix $\hat A$ plus small correction, which is a linear combination of heat hierarchy solutions associated to rows below. Then 
the resulting linear system uniquely defines both the coefficients $B^{(r)}_n$ and the small correcting constants $\epsilon^{(r)}_k$.
Moreover, if $\xi$ is sufficiently big, the coefficients $B^{(r)}_n$, $\epsilon^{(r)}_k$ are positive and may be explicitly 
estimated (see Formulas~(\ref{ex:Br})-(\ref{eq:epsilon})). Let us point out that the proof of positivity of $B^{(r)}_n$ 
is essentially based on the Principal Algebraic Lemma.

\begin{lemma}\label{lemma:beps}
Let $r\in [2,N]$ be fixed and $\xi>1$. Let $\alpha_n^{(r-1)}$ ($n\in [2,M-N+1]$) as in (\ref{eq:alphas}), and
\[
\Psi^{(r)}_{\infty} (\vec t) = {\hat A}^{N-r+1}_{N-r+1} e^{\theta_{N-r+1}} + \sum_{n=2}^{M-N+1} B^{(r)}_n 
\Psi^{(r-1)} (\alpha_n^{(r-1)}, \vec t),
\]
for some $B^{(r)}_n\in {\mathbb R}$ and 
\begin{equation}\label{eq:psialpha2}
\begin{array}{r} \Psi^{(r-1)} (\alpha_n^{(r-1)} ,\vec t) = \sum_{j=1}^{M} E^{(r-1)[n]}_{j}(\xi)
e^{\theta_{j}(\vec t)} =\left\{ 
\sum_{j= N-r +2 }^{ {N-r+n}}\sigma^{(r-1)}_{n,j} e^{\theta_{j}}+ \sum_{j=N-r+n+1}^{N+n-2} \frac{\sigma^{(r-1)}_{n,j}e^{\theta_j}}{\xi^{j-N+r-n -1}} 
+ \right. 
\\[0.5ex]
 +\left.\sum_{j=N+n-1}^{M} \frac{\sigma^{(r-1)}_{n,j} e^{\theta_{j}}}{\xi^{r-1+2(j-N-n+1)}} 
  \right\}\times \frac{\left( 1 + O(\xi^{-1})\right)}{\sum_{s= N-r {+2} }^{ {N-r+n}}\sigma^{(r-1)}_{n,s}},
\end{array}
\end{equation}
where for all  $n\in [2,M-N+1]$, $j\in [N-r+2,M]$, $\sigma^{(r-1)}_{n,j}>0$ are constants independent of $\xi$, and, moreover:
\begin{equation}\label{eq:sigmaNM}
\sigma^{(r-1)}_{n,j}=\left\{\begin{array}{ll}
\Delta_{[j;N-r+n {+1},N-r+n {+2},\ldots,N+n-2]} & \mbox{if} \ \ j\in [N-r {+2}, {N-r+n} ] \\[0.5ex]
\Delta_{[N-r+n+1,N-r+n+2,\ldots,N+n-2;j]} & \mbox{if} \ \ j\in[N+n-1,M].
\end{array}\right.
\end{equation}
Then the requirement
\begin{equation}\label{eq:inftycond}
\Psi^{(r)}_{\infty} (\vec t) = \sum_{j=N-r+1}^{M} \left( {\hat A}^{N-r+1}_{j} + \sum_{k=1}^{r-1}
{\hat A}^{N-r+k+1}_{j} \epsilon^{(r)}_k \right) e^{\theta_{j}},
\end{equation}
uniquely defines $B^{(r)}_n=B^{(r)}_n(\xi)$, $n\in [2,M-N+1]$, and $\epsilon^{(r)}_k= \epsilon^{(r)}_k(\xi)$, $k\in [r-1]$, which are rational in $\xi$ and strictly positive for all $\xi\gg1$.
Moreover the following estimates hold true
\begin{equation}\label{ex:Br}
B^{(r)}_n = \frac{\Delta_{[N-r+n,\dots,N+n]} \left( \sum_{s=N-r+2}^{N-r+n} \Delta_{[s; N-r+n+1,\dots,N+n-1]} \right) }{\Delta_{[N+r+n,\dots,N+n-1]} \Delta_{[N-r+n+1,\dots,N+n]} }  (1 + O(\xi^{-1}));
\end{equation}
\begin{equation}\label{eq:epsilon}
\epsilon^{(r)}_k =\frac{\sigma^{(r-1)}_{M-N+1,M-r+k+1} \cdot {\hat A}^{N-r+1}_{M-r+1}}{\sigma^{(r-1)}_{M-N+1,M-r+1} \cdot {\hat A}^{N-r+k+1}_{M-r+k+1}} \cdot \frac{1}{\xi^k} \left(1+O(\xi^{-1}) \right). 
\end{equation}
\end{lemma}

\begin{proof}
The proof is straightforward since the linear system associated to (\ref{eq:inftycond}) in $B^{(r)}_j$, $\epsilon^{(r)}_k$ is clearly compatible for $\xi\gg1$, the coefficients are rational functions in $\xi$.
Let us define
\begin{equation}\label{eq:hatsigma}
\hat \sigma^{(r-1)}_{n,j} = \frac{\sigma^{(r-1)}_{n,j}}{\sum_{s= N-r {+2} }^{ {N-r+n}}\sigma^{(r-1)}_{n,s}}, \quad \forall n\in [2,M-N+1], j\in [N-r+2,M],
\end{equation}
then, for $\xi \gg1$, the linear system  may be expressed as
\[
\displaystyle \sum_{\hat j=\hat n}^{M-N}  \hat \sigma^{(r-1)}_{\hat j+1,N-r+\hat n +1} B^{(r)}_{\hat j+1}-\sum_{k=1}^{r-1}\epsilon^{(r)}_k {\hat A}^{N-r+k+1}_{N-r+\hat n +1} = {\hat A}^{N-r+1}_{N-r+\hat n +1} + O( \xi^{-1}),
\quad \hat n\in [M-N],
\]
\[
\displaystyle \sum_{j=s}^{r-1} \frac{\hat\sigma^{(r-1)}_{M-N+1+s-j,M-N+s}}{\xi^{j}} B^{(r)}_{M-N+1+s-j} (1+O(\xi^{-1})) -\sum_{l=s}^{r-1}\epsilon^{(r)}_l {\hat A}^{N-r+1+l}_{M-r+1+s} = 0,
\quad s\in [r-1].
\]
Using the Principal Algebraic Lemma and Theorem \ref{lemma:vectors}, we easily conclude that, at leading order in $\xi$ the above system is equivalent to the linear system
\[
\hat 	\Omega \hat c = \hat p,
\]
in the unknowns $\hat c = [B^{(r)}_{2},\cdots , B^{(r)}_{M-N+1},\epsilon^{(r)}_1,\dots,\epsilon^{(r)}_{r-1} ]^T$, where
$\hat p = [\hat A^{(N-r+1)}_{N-r+2},\cdots , \hat A^{(N-r+1)}_{M-r+1},0,\dots,0 ]^T$
and $\hat \Omega$ is the $(M-N+r-1)\times (M-N+r-1)$ matrix, such that for $\hat n \in [M-N]$:
\[
\hat \Omega^{\hat n}_{\hat j} = \left\{ \begin{array}{ll} 
 \hat\sigma^{(r-1)}_{\hat j+1,N-r+\hat n +1}, & \quad \hat j \in [\hat{n},M-N]\\[0.5ex]
0 &  \quad \hat j \in [\hat n-1],\\[0.5ex]
{\hat A}^{M+r-1-\hat j}_{N-r+\hat{n}+1}, & \quad \hat j=\in [M-N+1,M-N+r-1],
\end{array}\right.
\]
and for $\hat n \in [M-N+1, M-N+r-1]$
\[
\hat \Omega^{\hat n}_{\hat j} = \left\{ \begin{array}{ll} 0 &  \hat j \in M-N-1]\\[0.5ex]
\frac{\hat \sigma^{(r-1)}_{M-N+1,N-r+\hat n +1}}{\xi^{\hat M-N-n}}, & \quad \hat j =M-N\\[0.5ex]
{\hat A}^{M+r-1-\hat j}_{N-r+\hat{n}+1}, & \quad \hat j=\in [M-N+1,M-N+r-1],
\end{array}\right.
\]
that is
\[
\resizebox{\textwidth}{!}{$
\scriptstyle
{\hat \Omega} = \left[ \scriptstyle\begin{array}{ccccccccc}
 \frac{\sigma^{(r-1)}_{2,N-r+2}}{\sigma^{(r-1)}_{2,N-r+2}} &  \frac{\sigma^{(r-1)}_{3,N-r+2}}{\sigma^{(r-1)}_{3,N-r+2}+\sigma^{(r-1)}_{3,N-r+3}} &\scriptstyle\cdots  &\scriptstyle\cdots  &\frac{\sigma^{(r-1)}_{M-N+1,N-r+2}}{\sum_{j=N-r+2}^{M-r+1} \sigma^{(r-1)}_{M-N+1,j}} & \scriptstyle {\hat A}^{N-r+2}_{N-r+2} &\scriptstyle 0&\scriptstyle\cdots &\scriptstyle 0\\[0.5ex]
\scriptstyle 0 &  \frac{\sigma^{(r-1)}_{3,N-r+3}}{\sigma^{(r-1)}_{3,N-r+2}+\sigma^{(r-1)}_{3,N-r+3}}  &\scriptstyle\cdots&\scriptstyle \cdots & \frac{\sigma^{(r-1)}_{M-N+1,N-r+3}}{\sum_{j=N-r+2}^{M-r+1} \sigma^{(r-1)}_{M-N+1,j}} & \scriptstyle {\hat A}^{N-r+2}_{N-r+3} &\scriptstyle {\hat A}^{N-r+3}_{N-r+3}&\scriptstyle 0\cdots &\scriptstyle 0 \\[0.5ex]
\scriptstyle\vdots &\scriptstyle \ddots &\scriptstyle  \ddots &\scriptstyle \ddots &\scriptstyle \vdots &\scriptstyle \vdots &\scriptstyle \vdots&\scriptstyle\ddots&\scriptstyle\vdots\\[0.5ex]
\scriptstyle 0  &\scriptstyle  \cdots \;\; 0 & \frac{\sigma^{(r-1)}_{s,N-r+s}}{\sum_{j=N-r+2}^{N-r+s} \sigma^{(r-1)}_{s,j}} &\scriptstyle \cdots & \frac{\sigma^{(r-1)}_{M-N+1,N-r+s}}{\sum_{j=N-r+2}^{M-r+1} \sigma^{(r-1)}_{s,j}} &\scriptstyle {\hat A}^{N-r+2}_{N-r+s} &\scriptstyle {\hat A}^{N-r+3}_{N-r+s}&\scriptstyle \cdots &\scriptstyle {\hat A}^{N}_{N-r+s}\\[0.5ex]
\scriptstyle\vdots &\scriptstyle \ddots & \scriptstyle \ddots & \scriptstyle\ddots &\scriptstyle \vdots &\scriptstyle \vdots &\scriptstyle \vdots &\scriptstyle \vdots &\scriptstyle \vdots\\[0.5ex]
\scriptstyle 0  &\scriptstyle \cdots & \scriptstyle \cdots   & \scriptstyle 0 &\frac{\sigma^{(r-1)}_{M-N+1,M-r+1}}{\sum_{j=N-r+2}^{M-r+1} \sigma^{(r-1)}_{M-N+1,j}} &\scriptstyle {\hat A}^{N-r+2}_{M-r+1} &\scriptstyle {\hat A}^{N-r+3}_{M-r+1}&\scriptstyle \cdots &\scriptstyle {\hat A}^{N}_{M-r+1}\\[0.5ex]
\scriptstyle 0 &\scriptstyle 0 &\scriptstyle \cdots &\scriptstyle 0 &   \frac{\sigma^{(r-1)}_{M-N+1,M-r+2}}{\xi \left(\sum_{j=N-r+2}^{M-r+1} \sigma^{(r-1)}_{M-N+1,j}\right)} &\scriptstyle {\hat A}^{N-r+2}_{M-r+2} &\scriptstyle {\hat A}^{N-r+3}_{M-r+2}&\scriptstyle \cdots &\scriptstyle {\hat A}^{N}_{M-r+2}\\[0.5ex]
\scriptstyle 0 & \scriptstyle 0 &\scriptstyle \cdots &\scriptstyle 0 &   \frac{\sigma^{(r-1)}_{M-N+1,M-r+2}}{\xi^2 \left(\sum_{j=N-r+2}^{M-r+1} \sigma^{(r-1)}_{M-N+1,j}\right)} &\scriptstyle 0 &\scriptstyle {\hat A}^{N-r+3}_{M-r+1}&\scriptstyle \cdots &\scriptstyle {\hat A}^{N}_{M-r+1}\\[0.5ex]
\scriptstyle\vdots &\scriptstyle \vdots &\scriptstyle  \vdots &\scriptstyle \vdots &\scriptstyle \vdots &\scriptstyle \vdots  &\scriptstyle \ddots &\scriptstyle \ddots &\scriptstyle \vdots\\[0.5ex]
\scriptstyle 0 &\scriptstyle 0 &\scriptstyle \cdots &\scriptstyle 0 &   \frac{\sigma^{(r-1)}_{M-N+1,M}}{\xi^{r-1} \left(\sum_{j=N-r+2}^{M-r+1} \sigma^{(r-1)}_{M-N+1,j}\right)} & \scriptstyle 0 &\scriptstyle \cdots &\scriptstyle 0 &\scriptstyle {\hat A}^{N}_{M}\\[0.5ex]
\end{array}
\right].
$}
\]
Then the coefficients 
\[
B^{(r)}_n(\xi) = {\hat B}^{(r)}_{n} \left( 1 + O (\xi^{-1}) \right), \quad\quad n\in [2,M-N+1]
\] 
where ${\hat B}^{(r)}_{n}$ are as in Theorem \ref{lemma:vectors}, while
and $\epsilon^{(r)}_k = O(\xi^{-k})$, $r\in[N-i]$
and at leading order are as in (\ref{ex:Br}) and (\ref{eq:epsilon}). In particular,
 if $\sigma^{(r-1)}_{n,j}$ are all positive then also  ${\hat B}^{(r)}_{l}(\xi)>0$, $l\in [2,M-N+1]$ and $\epsilon^{(r)}_k(\xi)>0$, $k\in[r-1]$ for all $\xi\gg1$.
\end{proof}

In the next Lemma we refine the results of Theorem \ref{lemma:vectors}, namely at double points $\alpha_n^{(r)}$ we estimate 
the vacuum wave function at the leading order in $\xi$ for all phases. We recall that in Theorem \ref{lemma:vectors} only the 
coefficients in front of the dominant phases were computed. 

\begin{lemma}\label{lemma:psiasym}
{Let $r\in [2,N]$ be fixed. Let $\lambda_j$ ($j\in [M-N+1]$) as in (\ref{eq:lambdas}), $\alpha_n^{(r-1)}$ ($n\in [2,M-N+1]$) as in (\ref{eq:alphas}),
\[
\Psi^{(r)} (\zeta,\vec t) = C_1 (\zeta) e^{\theta_{N-r+1}} + \sum_{n=2}^{M-N+1} C_n (\zeta)
\Psi^{(r-1)} (\alpha_n^{(r-1)}, \vec t),
\]
with $\Psi^{(r-1)} (\alpha_n^{(r-1)}, \vec t)$ as in (\ref{eq:psialpha2}),
\[
C_n (\zeta) = {\mathring B}^{(r)}_n\frac{\prod_{j\not =n}^{M-N+1} (\zeta-\lambda_j) }{\prod_{k=1}^{M-N} (\zeta -b^{(r)}_k )}, \quad\quad n\in [M-N+1],
\]
with 
\[
\mathring B^{(r)}_n(\xi) =\left\{ \begin{array}{ll}
\hat A^{N-r+1}_{N-r+1} &\quad n=1\\[0.5ex]
\frac{B^{(r)}_n(\xi)}{1+\sum_{k=1}^{r-1} \epsilon^{(r)}_k (\xi)}, &\quad n\in[2,M-N+1]
\end{array}\right.,
\]
with $B^{(r)}_n(\xi)$, $\epsilon^{(r)}_k(\xi)$ as in Lemma \ref{lemma:beps}, and
$b^{(r)}_k(\xi)$, ($k\in [M-N]$) as in Lemma \ref{lemma:C} with $c_n=\mathring B^{(r)}_n(\xi)$.
Let ${\hat \sigma}^{(r-1)}_{n,j}$ as in (\ref{eq:hatsigma}), with $\sigma^{(r-1}_{n,j}>0$ as in Lemma \ref{lemma:beps}.
 
Then, for $\alpha_n^{(r)}$ ($n\in [2, M-N+1]$) as in (\ref{eq:alphas}), we have 
\begin{equation}
\begin{array}{ll} \Psi^{(r)} (\alpha_n^{(r)} , t) = & 
\left( \sum_{j=N-r+1}^{N-r+n-1} {\hat \sigma}^{(r)}_{n,r}
e^{\theta_{j}} + \sum_{j=N-r+n}^{N+n-2} \frac{ {\hat \sigma}^{(r)}_{n, j}}{\xi^{j-N+r-n+1}}
e^{\theta_{j}} \right.
\\[0.5ex]
&\left. + \sum_{j=n+N-2}^{M} \frac{ {\hat \sigma}^{(r)}_{n,j}}{\xi^{2(j-N-n+2)+r}}
e^{\theta_{j}} \right) \left( 1+ O(\xi^{-1}) \right),
\end{array}
\end{equation}
for uniquely defined positive constants $\sigma^{(r)}_{n,s}$ such that, for any $n\in [2,M-N+1]$, 
\begin{equation}\label{eq:sigmarec}
{\hat \sigma}^{(r)}_{n,s} = \left\{ \begin{array}{ll} 
\frac{{\hat B}^{(r)}_1}{\sum_{{\hat j}=1}^{n-1} {\hat B}^{(r)}_{{\hat j}}} , & s=N-r+1,\\[0.5ex]
\frac{\sum_{j=s+r-N}^{n-1} {\hat B}^{(r)}_j\cdot {\hat \sigma}^{(r-1)}_{j,s}}{\sum_{i=1}^{n-1} {\hat B}^{(r)}_{i} }  , &
s\in [N-r+2,N-r+n-1],\\[0.5ex]
\frac{{\hat B}^{(r)}_{n-1}\cdot {\hat \sigma}^{(r-1)}_{n-1,s}+{\hat B}^{(r)}_n\cdot {\hat \sigma}^{(r-1)}_{n,s} }{\sum_{i=1}^{n-1} {\hat B}^{(r)}_{i} }    , &s\in [N-r+n, N+n-2],\\[0.5ex]
\frac{\sum_{j=n}^{s-N+1} {\hat B}^{(r)}_j\cdot {\hat \sigma}^{(r-1)}_{j,s}}{\sum_{i=1}^{n-1} {\hat B}^{(r)}_{i} } , &s\in [N+n-1, M].
\end{array}
\right.
\end{equation}
Finally, by construction,
\[
\sum_{k=N-r+1}^{N-r+n-1} {\hat \sigma}^{(r)}_{n,k} =1,\quad\quad \forall n\in[2,M-N+1].
\]
}
\end{lemma}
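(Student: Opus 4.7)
The plan is to substitute $\lambda = \alpha_n^{(r)} = \xi^{2n-5}$ directly into the definition
\[
\Psi^{(r)}(\lambda,\vec t) = C_1(\lambda)e^{\theta_{N-r+1}} + \sum_{j=2}^{M-N+1} C_j(\lambda)\,\Psi^{(r-1)}(\alpha_j^{(r-1)},\vec t)
\]
and track the asymptotics in $\xi$ term by term, using the results already established. First, apply Lemma \ref{lemma:C} with the coefficients $c_n = \mathring B^{(r)}_n(\xi)$: since $\mathring B^{(r)}_n(\xi) = \hat B^{(r)}_n(1+O(\xi^{-1}))$ and $\sum_n \mathring B^{(r)}_n \equiv 1$, we obtain
\[
C_j(\alpha_n^{(r)}) = \frac{\hat B^{(r)}_j}{\sum_{l=1}^{n-1} \hat B^{(r)}_l}\bigl(1+O(\xi^{-1})\bigr) \quad\text{for } j\le n-1,
\]
with the $C_j(\alpha_n^{(r)})$ for $j\ge n$ decaying like $\xi^{-2(j-n)-1}$. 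Hence the sum splits into a ``dominant'' block ($j\in[1,n-1]$) producing leading contributions and a ``subdominant'' block ($j\ge n$) producing $O(\xi^{-1})$ corrections.

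Second, insert the expansion (\ref{eq:psialpha2}) of $\Psi^{(r-1)}(\alpha_j^{(r-1)},\vec t)$ into the dominant block. This yields a double sum over $(j,s)$ where $j\in[1,n-1]$ and $s$ runs over $[N-r+2,M]$; the coefficient of $e^{\theta_s}$ collects terms of the form $\hat B^{(r)}_j \hat\sigma^{(r-1)}_{j,s}/\sum_i \hat B^{(r)}_i$ weighted by a power of $\xi$ determined by whether $s$ falls in the ``zero-th order'' block $[N-r+2,N-r+j]$ of $\hat\sigma^{(r-1)}_{j,s}$, its ``intermediate'' block $[N-r+j+1,N+j-2]$, or its ``far'' block $[N+j-1,M]$ of (\ref{eq:psialpha2}). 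For each fixed $s$ one identifies exactly which values of $j\in[1,n-1]$ place $s$ into the zero-th order block and keeps only those, obtaining the four cases in (\ref{eq:sigmarec}); any other pair $(j,s)$ contributes a strictly higher negative power of $\xi$ and is absorbed into the $(1+O(\xi^{-1}))$ factor. The boundary case $s\in[N-r+n,N+n-2]$ is where $j=n-1$ contributes via its intermediate block at the same $\xi$-power as the leading contribution of $C_n(\alpha_n^{(r)})\hat\sigma^{(r-1)}_{n,s}$ from the subdominant block, giving the two-term formula in (\ref{eq:sigmarec}). The constant $e^{\theta_{N-r+1}}$ coming from $C_1(\alpha_n^{(r)})$ supplies the $s=N-r+1$ case.

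Third, verify positivity and the normalization identity. Positivity is automatic from the inductive hypothesis (Lemma \ref{lemma:beps} gives $\hat B^{(r)}_j>0$ and $\hat\sigma^{(r-1)}_{j,s}>0$). For the normalization $\sum_{s=N-r+1}^{N-r+n-1}\hat\sigma^{(r)}_{n,s}=1$, I would substitute the explicit formulas of (\ref{eq:sigmarec}) for the first block $s\in[N-r+2,N-r+n-1]$ and the $s=N-r+1$ case, swap the order of summation, and use the inductive normalization $\sum_s \hat\sigma^{(r-1)}_{j,s}=1$ restricted to the appropriate index range. The resulting telescoping is exactly the content of Lemma \ref{lemma:poles} applied to the sequence $\{\hat B^{(r)}_j\}_{j=1}^{n-1}$, combined with the identity (\ref{eq:ident2}) of Lemma \ref{lemma:poles2} to handle the cross terms appearing in the intermediate block.

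The main obstacle is the index bookkeeping in the second step, specifically keeping track of which of the four index ranges of $s$ receives contributions from which values of $j$, and checking that at the boundary $s\in[N-r+n,N+n-2]$ exactly two terms (from $j=n-1$ and $j=n$) contribute at the same leading order while all other pairs are strictly subleading. Once this case analysis is done cleanly, the recursive formula (\ref{eq:sigmarec}) and the normalization fall out directly.
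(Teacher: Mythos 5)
Your overall strategy is the same as the paper's own (very terse) argument: evaluate $\Psi^{(r)}$ at $\alpha_n^{(r)}$, use the expansion (\ref{eq:Caysm}) of Lemma \ref{lemma:C} with $c_j=\mathring B^{(r)}_j=\hat B^{(r)}_j\left(1+O(\xi^{-1})\right)$ for the coefficients $C_j$, insert the inductive expansion (\ref{eq:psialpha2}) for $\Psi^{(r-1)}(\alpha_j^{(r-1)},\vec t)$, and read off the leading power of $\xi$ in front of each phase $e^{\theta_s}$. In that sense you are doing exactly what the paper does, only spelled out.

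There is, however, one concrete slip in your bookkeeping. You state that for each fixed $s$ the leading coefficient is obtained by keeping the dominant indices $j\in[1,n-1]$ for which $s$ falls into the zero-th order block of $\hat\sigma^{(r-1)}_{j,\cdot}$, the subdominant block $j\ge n$ entering only through the boundary case $s\in[N-r+n,N+n-2]$ via $j=n$. For $s\in[N+n-1,M]$ this rule returns nothing: no $j\le n-1$ has $s$ in its zero-th order (or even intermediate) block, and the entire leading coefficient there is built from the subdominant terms $C_j(\alpha_n^{(r)})\sim\xi^{-(2(j-n)+1)}$, $j\in[n,s-N+1]$, paired with the corresponding intermediate or far blocks of (\ref{eq:psialpha2}); this is precisely the fourth case of (\ref{eq:sigmarec}), which your procedure as written would miss. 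So the block $j\ge n$ cannot be treated as a global $O(\xi^{-1})$ correction: it supplies the leading order of every phase with $s\ge N-r+n$. A smaller point: the final normalization does not require Lemma \ref{lemma:poles} or the identity (\ref{eq:ident2}); since the sum runs only over $s\in[N-r+1,N-r+n-1]$, it follows at once by exchanging the order of summation and using the level-$(r-1)$ normalization $\sum_{s}\hat\sigma^{(r-1)}_{j,s}=1$ over each zero-th order block, together with the common denominator $\sum_{i=1}^{n-1}\hat B^{(r)}_i$.
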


The proof of the above Lemma is straightforward and follows by direct inspection of the leading order in $\xi$ for each phase $\theta_s$, $s\ge N-r+1$, using the definition of $\Psi^{(r)} (\zeta, \vec t)$, and the asymptotic expansions of $C^{(r)} (\alpha_s^{(r)})$, as in (\ref{eq:Caysm}), with $c_j=\mathring B^{(r)}_j = \hat B^{(r)}_j \left( 1+ O(\xi^{-1}) \right)$ and of $\Psi^{(r-1)} (\alpha_s^{(r-1)}, \vec t)$ as in (\ref{eq:psialpha2}).

\begin{remark}
Lemmas \ref{lemma:beps} and \ref{lemma:psiasym} allow to compute the coefficients $B^{(r)}_n$, $\epsilon^{(r)}_k$ and $\sigma^{(r)}_{n,s}$ recursively in $r\in [N]$, starting from the case $r=1$ computed directly in Proposition \ref{prop:1}.

The coefficients $\hat B^{(1)}_n$, $\sigma^{(1)}_{n,k}$ are all positive for $n\in [2,M-N+1]$, $s\in [N-r+2,M]$ by the same Proposition \ref{prop:1}. Moreover $\epsilon^{(r)}_k$ and $\sigma^{(r)}_{n,k}$ respectively in (\ref{eq:epsilon}) and in (\ref{eq:sigmarec}) are subtraction free rational expressions in $\hat B^{(r)}_n$, $\sigma^{(r-1)}_{n,k}$ and the matrix entries of $\hat A$. 
The total positivity property of the matrix $\hat A$ ensures that $ \hat B^{(r)}_n>0$, thanks to  Theorem  \ref{lemma:vectors}. As a consequence we get that also all $\epsilon^{(r)}_k>0$ and $\sigma^{(r)}_{n,s}>0$.

In Theorem \ref{lemma:vectors}, we have computed $\sigma^{(r)}_{n,s}$ for $s\in [N-r+1,N-r+n]$, $n\in [2,M-N+1]$ (see (\ref{eq:Ern}).
In the next Lemma we compute explicitly these coefficients also for $s\in [N+n-1,M]$, $n\in [2,M-N+1]$.
\end{remark}

\begin{lemma}\label{lemma:sigma}
Let $r\in [2,N]$ and suppose that $\sigma^{(r-1)}_{n,s}$, $\hat B^{(r)}_n$ are as in (\ref{eq:sigmaNM}) and (\ref{ex:explB}), respectively.
Then, for any $n\in [2,M_N+1]$, we have
\begin{equation}\label{eq:ttt}
\sigma^{(r)}_{n,j}=\left\{\begin{array}{ll}
\Delta_{[j;N-r+n ,N-r+n {+1},\ldots,N+n-2]}, & \mbox{if} \ \ j\in [N-r {+1}, {N-r+n-1} ] \\[0.5ex]
\Delta_{[N-r+n,N-r+n+1,\ldots,N+n-2;j]} , & \mbox{if} \ \ j\in[N+n-1,M].
\end{array}\right.
\end{equation}
\end{lemma}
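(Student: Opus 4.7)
The plan is to verify formula (\ref{eq:ttt}) by checking consistency with the recursion (\ref{eq:sigmarec}) of Lemma \ref{lemma:psiasym}, using the inductive hypothesis (\ref{eq:sigmaNM}) for $\sigma^{(r-1)}_{n',j'}$ together with the explicit formula (\ref{ex:explB}) for $\hat B^{(r)}_\ell$. Since the normalization $\hat\sigma^{(r)}_{n,j} = \sigma^{(r)}_{n,j}/\sum_k \sigma^{(r)}_{n,k}$ only fixes each row of $\sigma^{(r)}$ up to an overall positive factor, it will be enough to show that the candidate formula (\ref{eq:ttt}) reproduces the correct ratios $\hat\sigma^{(r)}_{n,j}$ prescribed by (\ref{eq:sigmarec}).

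A preliminary observation is that the product $\hat B^{(r)}_\ell \cdot \hat\sigma^{(r-1)}_{\ell, j}$ telescopes nicely: by the inductive hypothesis (\ref{eq:sigmaNM}) the normalization denominator of $\hat\sigma^{(r-1)}_{\ell, j}$ equals $\sum_{s=N-r+2}^{N-r+\ell}\Delta_{[s; N-r+\ell+1, \ldots, N+\ell-2]}$, which is exactly the bracketed factor in the numerator of $\hat B^{(r)}_\ell$ in (\ref{ex:explB}). These two factors cancel, leaving each summand of (\ref{eq:sigmarec}) as a clean ratio of four consecutive-column minors of $\hat A$.

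For the first case $j \in [N-r+1, N-r+n-1]$ the resulting sum over $\ell$ has exactly the structure handled by Lemma \ref{lemma:poles2}; applying it and then dividing by $\sum_{i=1}^{n-1}\hat B^{(r)}_i$, which Lemma \ref{lemma:poles} evaluates in closed form, forces $\sigma^{(r)}_{n,j} = \Delta_{[j; N-r+n, \ldots, N+n-2]}$, as required. For the second case $j \in [N+n-1, M]$ the parallel computation reduces the verification to the summation identity
\[
\sum_{\ell=n}^{j-N+1} \frac{\Delta_{[N-r+\ell, \ldots, N+\ell-1]} \cdot \Delta_{[N-r+\ell+1, \ldots, N+\ell-2; j]}}{\Delta_{[N-r+\ell+1, \ldots, N+\ell-1]} \cdot \Delta_{[N-r+\ell, \ldots, N+\ell-2]}} = \frac{\Delta_{[N-r+n, \ldots, N+n-2; j]}}{\Delta_{[N-r+n, \ldots, N+n-2]}},
\]
which is the mirror analog of Lemma \ref{lemma:poles2} with the distinguished column $j$ now lying to the right rather than to the left of the block of consecutive columns. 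I would prove it by reverse induction on $n$: the base $n = j-N+1$ is a one-term tautology, and the inductive step combines the new $\ell = n$ summand with the partial sum through a three-term minor identity on $\hat A$ dual to the one used in Lemma \ref{lemma:poles} (obtained by the same Desnanot--Jacobi / Pl\"ucker mechanism, this time with the ``extra'' column sitting to the right of the common block).

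The hard part is precisely this second case: one must establish the right-sided analog of the minor identity implicit in the proofs of Lemmata \ref{lemma:poles} and \ref{lemma:poles2}. The identity itself is an instance of the standard Desnanot--Jacobi relation applied to an appropriate $r\times r$ submatrix of $\hat A$, but one has to watch for the banded zero pattern, which can make some auxiliary minors vanish and thereby change the apparent number of terms in a generic Pl\"ucker relation. Once the identity is in hand, both cases of (\ref{eq:ttt}) drop out by combining with the invariant normalizations supplied by Lemma \ref{lemma:inv} and with Lemma \ref{lemma:poles}.
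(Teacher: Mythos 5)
Your proposal is correct and follows essentially the same route as the paper: the left-hand case reduces, after the cancellation between the bracketed factor in $\hat B^{(r)}_\ell$ and the normalization of $\hat\sigma^{(r-1)}_{\ell,\cdot}$, to Lemmata \ref{lemma:poles} and \ref{lemma:poles2}, while the right-hand case rests on exactly the mirror summation identity the paper states (its sum to $M-N+1$ coincides with your truncated sum to $j-N+1$, since the extra terms vanish by repeated columns), proven by induction via the right-handed three-term minor relation analogous to the one in Lemma \ref{lemma:poles}. Your extra remarks on the base case, the normalization-by-ratios argument, and the banded zero pattern are consistent with the paper's (terser) proof.
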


\begin{proof}
The case  $j\in [N-r {+1}, {N-r+n-1}]$, $n\in [2,M-N+1]$ is just (\ref{eq:Ern}) which is proven using Lemmas  \ref{lemma:poles} and \ref{lemma:poles2}. The case $j\in [N+n-1,M]$, $n\in [2,M-N+1]$ follows in a similar way using the identity
\begin{equation}
\displaystyle
\sum\limits_{n=k}^{M-N+1} 
\frac{
\Delta_{[N-r+n,\dots,N+n-1]} \cdot \Delta_{[N-r+n+1,\dots,N+n-2;j]} }{\Delta_{[N-r+n+1,\dots,N+n-1]}\cdot\Delta_{[N-r+n,\dots,N+n-2]}}=
 \frac{\Delta_{[N-r+k,\dots,N+k-2;j]} }{\Delta_{[N-r+k,\dots,N+k-2]}},
\end{equation}
for $r\in [N-1]$, $k\in [2,M-N+1]$, $j\in [N+k-1,M]$,
which may be proven recursively along the same lines as for (\ref{eq:ident2}).
\end{proof}


\begin{thebibliography}{50}

\bibitem{A} Abenda S., ``On a family of KP multi--line solitons associated to rational degenerations of real hyperelliptic curves and to the finite non--periodic Toda hierarchy'', J.Geom.Phys. {\bf 119} (2017), 112-138. 

\bibitem{AG} Abenda S.; Grinevich P.G., ``KP theory, plane--bipartite networks in the disk and rational degenerations of $\mathtt M$--curves'',  arXiv:1801.00208.

\bibitem{ACG}  Arbarello E.; Cornalba  M.; Griffiths P.A., ``Geometry of algebraic curves. Volume II. With a contribution by Joseph Daniel Harris'', Grundlehren der Mathematischen Wissenschaften  {\bf 268}, Springer, Heidelberg, (2011), xxx+963 pp.

\bibitem{AGP1} Arkani-Hamed N.; Bourjaily J.L; Cachazo F.; Goncharov A.B.; Postnikov A.; Trnka J., ``Scattering Amplitudes and the Positive Grassmannian'', arXiv:1212.5605.

\bibitem{AGP2} Arkani-Hamed N.; Bourjaily J.L.; Cachazo F.; Goncharov A.B.; Postnikov A.; Trnka J., ``Grassmannian geometry of scattering amplitudes'', Cambridge University Press, Cambridge, (2016), ix+194 pp.

\bibitem{BK} Biondini, G.; Kodama, Y., ``On a family of solutions of the Kadomtsev-Petviashvili equation which also satisfy the Toda lattice hierarchy'', J. Phys. A {\bf 36} (2003), no. 42, 10519-10536.

\bibitem{BPPP} Boiti, M.; Pempinelli, F.; Pogrebkov, A.K.; Prinari, B., ``Towards an inverse scattering theory for non-decaying potentials of the heat equation'', 
Inverse Problems {\bf 17} (2001), 937--957.

\bibitem{BPP} Boiti, M.; Pempinelli, F.; Pogrebkov, A. K., ``Properties of the solitonic potentials of the heat operator'', Theoret. and Math. Phys., {\bf 168}, (2011), no. 1, 865-874.

\bibitem{CDMS} Cavaretta, A. S. Jr.; Dahmen, W. A.; Micchelli, C. A.; Smith, P. W., ``A factorization theorem for banded matrices'', Linear Algebra Appl. {\bf 39} (1981), 229--245.

\bibitem{CK1} Chakravarty, S.; Kodama, Y., ``Classification of the line-soliton solutions of KPII'', J. Phys. A {\bf 41} (2008), no. 27, 275209, 33 pp.

\bibitem{CK2} Chakravarty, S.; Kodama, Y., ``Soliton solutions of the KP equation and application to shallow water waves'', Stud. Appl. Math. {\bf 123} (2009), no. 1, 83-151.

\bibitem{BP} de Boor, C.; Pinkus, A., ``The approximation of a totally positive band matrix by a strictly banded totally positive one'', Linear Algebra Appl. {\bf 42} (1982), 81--98.

\bibitem{D}  Dickey, L. A., ``Soliton equations and Hamiltonian systems'', Second edition. Advanced Series in Mathematical Physics, {\bf 26}, World Scientific Publishing Co., Inc., River Edge, NJ, (2003), xii+408 pp.

\bibitem{DMH} Dimakis, A.; M\"uller-Hoissen, F., ``KP line solitons and Tamari lattices'', J. Phys. A {\bf 44} (2011), no. 2, 025203, 49 pp.

\bibitem{DKN} Dubrovin, B. A.; Krichever, I. M.; Novikov, S. P., ``Integrable systems.  Dynamical systems, IV'',  Encyclopaedia Math. Sci., {\bf 4}, Springer, Berlin, (2001), 177--332.

\bibitem{DN} Dubrovin, B. A.; Natanzon S. M., ``Real theta-function solutions of the Kadomtsev-Petviashvili equation'', Izv. Akad. Nauk SSSR Ser. Mat., {\bf 52} (1988), no. 2, 267-286.

\bibitem{FZ1}  Fomin, S.; Zelevinsky, A., ``Double Bruhat cells and total positivity'', J. Amer. Math. Soc. {\bf 12} (1999), no. 2, 335-380.

\bibitem{FZ2} Fomin, S.; Zelevinsky, A., ``Cluster algebras I: foundations.'', J. Am. Math. Soc.
{\bf 15} (2002), 497-529.
 
\bibitem{GK} Gantmacher F.R.; Krein M.G., ``Sur les matrices oscillatoires'', C.R.Acad.Sci.Paris
{\bf 201} (1935), 577-579.

\bibitem{GSV1} Gekhtman, M.; Shapiro, M.; Vainshtein, A., ``Cluster algebras and Poisson geometry'', Mathematical Surveys and Monographs, {\bf 167}, American Mathematical Society, Providence, RI, (2010), xvi+246 pp.

\bibitem{KG} Goncharov, A.B.; Kenyon, R., ``Dimers and cluster integrable systems'', Ann. Sci. Éc. Norm. Supér. (4) {\bf 46} (2013), no. 5, 747--813.

\bibitem{GrH} Griffiths P.; Harris J., ``Principles of Algebraic Geometry'', John Wiley \& Sons, (1978). 

\bibitem{GH}  Gross, B.H.; Harris, J., ``Real algebraic curves'', Ann. Sci. École Norm. Sup. s\'erie 4, {\bf 14} (1981), no. 2, 157--182.

\bibitem{Har} Harnack, A.,  ``\"Uber die Vieltheiligkeit der ebenen algebraischen Curven'', Math. Ann. {\bf 10} (1876), 189-199.

\bibitem{H}  Hirota, R., ``The direct method in soliton theory'', Cambridge Tracts in Mathematics, {\bf 155}, Cambridge University Press, Cambridge, (2004), xii+200 pp.

\bibitem{KP} Kadomtsev, B.B; Petviashvili, V.I., ``On the stability of solitary waves in weakly dispersive media'', Sov. Phys. Dokl. {\bf 15} (1970), 539-541.

\bibitem{KSO} Kenyon, R.; Okounkov, A.; Sheffield, S., ``Dimers and amoebae'', Ann. of Math. (2) {\bf 163} (2006), no. 3, 1019--1056.

\bibitem{KW1} Kodama, Y.; Williams, L.K., ``KP solitons, total positivity, and cluster algebras'', Proc. Natl. Acad. Sci. USA {\bf 108} (2011), no. 22, 8984-8989.

\bibitem{KW2}  Kodama, Y.; Williams, L., ``The Deodhar decomposition of the Grassmannian and the regularity of KP solitons'', Adv. Math. {\bf 244} (2013), 979-1032.

\bibitem{KW3} Kodama, Y.; Williams, L., ``KP solitons and total positivity for the Grassmannian'', Invent. Math. {\bf 198} (2014), no. 3, 637-699.

\bibitem{Kr1}  Krichever, I. M., ``An algebraic-geometric construction of the Zakharov-Shabat equations and their periodic solutions''. (Russian) Dokl. Akad. Nauk SSSR {\bf 227} (1976), no. 2, 291-294.

\bibitem{Kr2} Krichever, I. M., ``Integration of nonlinear equations by the methods of algebraic geometry'', (Russian) Funkcional. Anal. i Priložen. {\bf 11} (1977), no. 1, 15-31, 96.

\bibitem{Kr3} Krichever, I.M.,  ``Spectral theory of nonstationary Schroedinger operators and non stationary Peierls model.'', Funk.anal. i pril. {\bf 20} (1986), no. 3, 42--54.


\bibitem{KV} Krichever, I.M.; Vaninsky, K.L., ``The periodic and open Toda lattice.''
AMS/IP Stud. Adv. Math., {\bf 33}, Amer. Math. Soc., Providence, RI, (2002), 139--158. 

\bibitem{Lam} Lam, T., ``Dimers, webs, and positroids'', J. Lond. Math. Soc. (2) {\bf 92} (2015), no. 3, 633--656.

\bibitem{Lu1} Lusztig, G., ``Total positivity in reductive groups.'' 
Lie theory and geometry, 531-568, Progr. Math., {\bf 123}, Birkh\"auser Boston, 
Boston, MA, (1994). 

\bibitem{Lu2} Lusztig, G., ``Total positivity in partial flag manifolds'', Represent. Theory {\bf 2} (1998), 70-78.

\bibitem{Lu} Lusztig, G., ``A survey of total positivity'', Milan J. Math. {\bf 76} (2008), 125-134.

\bibitem{Mal} Malanyuk, T.M., ``A class of exact solutions of the Kadomtsev-Petviashvili equation'', Russian Math. Surveys {\bf 46} (1991), no. 3, 225-227.
 
\bibitem{MR} Marsh, R.J.; Rietsch, K., ``Parametrizations of flag varieties'', Represent. Theory {\bf 8} (2004), 212-242.


\bibitem{Mat1} Matveev, V.B., ``Some  comments on the rational solutions of the Zakharov-Schabat equations'',  Letters in Mathematical Physics, {\bf 3} (1979), 
503--512.

\bibitem{Mat} Matveev, V. B.; Salle, M.A., ``Darboux transformations and solitons.'', Springer Series in Nonlinear Dynamics. Springer-Verlag, Berlin, (1991).

\bibitem{MJD} Miwa, T.; Jimbo, M.; Date, E., ``Solitons. Differential equations, symmetries and infinite-dimensional algebras'', Cambridge Tracts in Mathematics, {\bf 135}, Cambridge University Press, Cambridge, (2000), x+108 pp.

\bibitem{Nat} Natanzon S.M., ``Moduli of real algebraic surfaces, and their superanalogues. Differentials, spinors, and Jacobians of real curves'', Russian Mathematical Surveys, {\bf 54} (1999), no 6, 1091-1147.

\bibitem{NMPZ} Novikov, S.P.; Manakov, S. V.; Pitaevskii, L. P.; Zakharov, V. E., ``Theory of solitons. The inverse scattering method'', Contemporary Soviet Mathematics. Consultants Bureau [Plenum], New York, (1984), xi+276 pp. 

\bibitem{Petr} Petrowsky, I., `` On the Topology of Real Plane Algebraic Curves'', Annals of Mathematics Second Series, {\bf 39} (1938),  no 1, 189-209.

\bibitem{Pinkus} Pinkus, A., ``Totally positive matrices'', Cambridge Tracts in Mathematics, {\bf 181}, Cambridge University Press, Cambridge, (2010) xii+182 pp. ISBN: 978-0-521-19408-2. 

\bibitem{Pos} Postnikov A., ``Total positivity, Grassmannians, and networks.'', arXiv:math/0609764 [math.CO].

\bibitem{S}  Sato, M.,  ``Soliton equations as dynamical systems on infinite-dimensional
Grassmann manifold'', in: Nonlinear PDEs in Applied Sciences (US-Japan Seminar,
Tokyo), P. Lax and H. Fujita eds., North-Holland, Amsterdam
, (1982), 259-271. 

\bibitem{Sch} Schoenberg J., ``\"Uber variationsvermindende lineare Transformationen'', Mathematishe 
Zeitschrift {\bf 32}, (1930), 321-328.


\bibitem{Tal}  Talaska, K., ``Combinatorial formulas for Γ-coordinates in a totally nonnegative Grassmannian'', J. Combin. Theory Ser. A {\bf 118} (2011), no. 1, 58-66.

\bibitem{VN1}  Veselov, A.P.; Novikov, S.P., ``Finite-zone, two-dimensional, potential Schr\"odinger operators. Explicit formulas and evolution equations'', 
Soviet Math. Dokl. {\bf 30} (1984), 588-591.

\bibitem{VN2}  Veselov, A.P.; Novikov, S.P., ``Finite-zone, two-dimensional Schr\"odinger operators. Potential operators'', Soviet Math. Dokl. {\bf 30} (1984), 705-708.

\bibitem{Vi} Viro O. Ya., ``Real plane algebraic curves: constructions with controlled topology'', Leningrad Math. J. {\bf 1} (1990), no. 5, 1059-1134.

\bibitem{W} Whitney, A. M., ``A reduction theorem for totally positive matrices'', J. Analyse Math. {\bf 2}, (1952), 88--92.

\bibitem{ZS} Zakharov V.E.; Shabat A.B., ``A scheme for integrating the nonlinear equations of mathematical physics by the method of the inverse scattering problem. I'', Functional Analysis and Its Applications {\bf 8} (1974), no 3, 226-235. 

\end{thebibliography}
\end{document}